\documentclass[12pt]{article} 
\usepackage[sectionbib]{natbib}
\usepackage{array,epsfig,fancyheadings,rotating}
\usepackage{amsmath}
\usepackage{mathtools}
\usepackage{setspace}
\usepackage{amssymb}
\usepackage{amsthm}
\usepackage{multirow}
\usepackage{bm}
\usepackage{natbib}
\usepackage{geometry}
\usepackage{subfig}
\usepackage{soul}
\usepackage{color}

\newcommand{\E}{\mathbb{E}}

\newcommand{\floor}[1]{\lfloor #1 \rfloor}
\newcommand{\1}[1]{\boldsymbol{1}\{#1\}}

\newcommand{\N}{\mathbb{N}}

\newcommand{\weak}{\rightsquigarrow }

\usepackage[]{hyperref}  
\usepackage{sectsty, secdot}
\sectionfont{\fontsize{12}{14pt plus.8pt minus .6pt}\selectfont}
\renewcommand{\theequation}{\thesection\arabic{equation}}
\subsectionfont{\fontsize{12}{14pt plus.8pt minus .6pt}\selectfont}

\textwidth=39pc
\textheight=50pc 
\oddsidemargin=0.25cm
\evensidemargin=0.25cm
\headsep=15pt
\topmargin= 0.25cm
\parindent=1.7pc
\parskip=0pt

\usepackage{amsmath}
\usepackage{amssymb}
\usepackage{amsfonts}
\usepackage{multirow}
\usepackage{amsthm}

\setcounter{page}{1}
\newtheorem{theorem}{Theorem}

\newtheorem{proposition}{Proposition}
\theoremstyle{definition}

\newtheorem{assumption}{Assumption}

\pagestyle{fancy}

\newcommand{\change}[1]{{\color{black}{#1}}}

\pagestyle{fancy}

\lhead[\fancyplain{} \leftmark]{}
\chead[]{}
\rhead[]{\fancyplain{}\rightmark}
\cfoot{}

\begin{document}


\renewcommand{\baselinestretch}{1}

\markright{ \hbox{\footnotesize\rm Statistica Sinica
}\hfill\\[-13pt]
\hbox{\footnotesize\rm
}\hfill }

\markboth{\hfill{\footnotesize\rm FIRSTNAME1 LASTNAME1 AND FIRSTNAME2 LASTNAME2} \hfill}
{\hfill {\footnotesize\rm FILL IN A SHORT RUNNING TITLE} \hfill}

\renewcommand{\thefootnote}{}
$\ $\par


\fontsize{12}{14pt plus.8pt minus .6pt}\selectfont \vspace{0.8pc}
\centerline{\large\bf Adaptive Change Point Monitoring for High-Dimensional Data }
\vspace{2pt} 
\centerline{Teng Wu, Runmin Wang, Hao Yan, Xiaofeng Shao} 
\vspace{.4cm} 
\centerline{\it Department of
		Statistics, University of Illinois at Urbana-Champaign}
\centerline{\it Department of Statistical Science, Southern Methodist University}
\centerline{ \it School of Computing Informatics \& Decision Systems Engineering, Arizona State University}
 \vspace{.55cm} \fontsize{9}{11.5pt plus.8pt minus.6pt}\selectfont


\begin{quotation}
\noindent {\it Abstract:}
In this paper, we propose a class of monitoring statistics for a mean shift in a sequence of high-dimensional observations. Inspired by the recent U-statistic based retrospective tests developed by \cite{wang2019inference} and \cite{zhangwangshao20}, we advance the U-statistic based approach to the sequential monitoring problem by developing a new  adaptive monitoring procedure that can detect both dense and sparse changes in real time. Unlike \cite{wang2019inference} and \cite{zhangwangshao20}, where self-normalization was used in their tests, we instead introduce a class of estimators for $q$-norm of the covariance matrix and prove their ratio consistency. To facilitate fast computation,  we further develop recursive algorithms to improve the computational efficiency of the monitoring procedure. The advantage of the proposed methodology is demonstrated via simulation studies and real data illustrations.

\vspace{9pt}
\noindent {\it Key words and phrases:}
Change point detection,  Sequential monitoring, Sequential testing, U-statistics.
\par
\end{quotation}\par

\def\thefigure{\arabic{figure}}
\def\thetable{\arabic{table}}

\renewcommand{\theequation}{\thesection.\arabic{equation}}

\fontsize{12}{14pt plus.8pt minus .6pt}\selectfont

\section{Introduction}

Change point detection problems have been extensively studied in many areas, such as statistics, econometrics and engineering, and there are wide applications to fields in physical science and engineering. The literature is huge and is still growing rapidly. For the low-dimensional data, it dates back to early work by \cite{page1954continuous,macneill1974tests,brown1975techniques}, among others. More recent work that studied change point problems for low/fixed dimensional multivariate time series data can be found in
\cite{shaozhang10,matteson2014nonparametric, kirch2015detection,bucher2019combining}, among others. We refer to \cite{perron2005dealing}, \cite{aue2013structural} and \cite{aminikhanghahi2017survey} for some excellent reviews on this topic. 

\change{
The literature on change point detection can be roughly divided into two categories:  retrospective testing and estimation of change points based on a complete data sequence offline and online sequential monitoring for change points based on some training data and sequentially arrived data. This paper is concerned with the sequential monitoring problem for temporally independent but cross-sectionally dependent high-dimensional data. There are two major lines of research for sequential change-point detection/monitoring. In one line, a huge body of work follows the paradigm set by pioneers in the field, such as \cite{wald1945}, \cite{page1954continuous} and \cite{lorden1971};  see \cite{lai:95, lai:01} and \change{\cite{polunchenko2012state}} for comprehensive reviews. Most sequential detection methods along this line are optimized to have a minimal detection delay with a control of average run length under the null and also the existing procedures are mostly developed for low-dimensional data. These methods often require both
pre-change distribution and post-change distribution to be specified or some
parametric assumption to be made. In the other line,  \cite{chu:96} assumed that there is a stretch of training data  (without any change points) and sequential monitoring was applied to sequentially arrived testing data. They employ the invariance principle to control the type I error and their framework has been adopted by many other researchers in both parametric and nonparametric contexts. 
See 
\cite{horvath2004monitoring,aue2012sequential,wied2013monitoring,fremdt2015page,dette2019likelihood}. Along this line, it is typical to use size and power
(plus average detection delay) to describe and compare operating characteristics
of competing procedures. Our procedure falls into the second category. 
It seems to us that these two frameworks are in general
difficult to compare, as they differ in terms of the model assumptions and evaluation
criteria etc.}

Nowadays, with the rapid improvement of data acquisition technology,  high-dimensional data streams involving continuous sequential observations appear frequently in modern manufacturing and service industries, and the demand for efficient online monitoring tools for such data has never been higher. 
For example, in \cite{yan2018real}, they proposed a method to monitor a multi-channel tonnage profile used for the forging process, which has thousands of attributes. Furthermore, image-based monitoring [\cite{yan2014image}] has become popular in the literature, which includes thousands of pixels per image. In \cite{levy2009detection}, they considered the problem of monitoring thousands of Internet traffic metrics provided by a French Internet service provider. This kind of high-dimensional data poses significant new challenges to the traditional multivariate statistical process control and monitoring, since when the dimension $p$
is high and is comparable to the sample size $n$, most existing sequential monitoring methods constructed based on the fixed-dimension assumptions become invalid. 

In this article, we propose a new class of sequential monitoring methodology to monitor a change in the mean of independent high-dimensional data based on (sequential) retrospective testing. Our proposal is inspired by recent work on retrospective testing of mean change in high-dimensional data by \cite{wang2019inference} and \cite{zhangwangshao20}. In  \cite{wang2019inference}, the authors proposed a U-statistic based approach to target the $L_2$-norm of the mean difference by extending the U-statistic idea initiated in \cite{chenqin10} from two-sample testing to the change point testing problem. In \cite{zhangwangshao20}, they further extended the test in \cite{wang2019inference} to a $L_q$-norm-based one mimicking \cite{he2018asymptotically}, where $q\in 2\N$, to capture sparse alternative. By combining $L_2$-norm-based test and $L_q$-norm-based one, the adaptive test statistic they proposed is shown to achieve high power for both dense and sparse alternatives. However, one of the limitations of these works is that these methods are designed for off-line analysis, which is not suitable to be applied to real-time online monitoring systems. Building on the works of \cite{wang2019inference} and \cite{zhangwangshao20}, we shall propose a new adaptive sequential monitoring procedure that can capture both sparse and dense alternatives.  Instead of using the self-normalization scheme [\cite{shao10,shaozhang10,shao15}],  as done in \cite{wang2019inference} and \cite{zhangwangshao20}, we opt to use ratio-consistent estimators for $\|\Sigma\|_q^q$ based on the training data, where $\Sigma$ is the common covariance matrix of the sequence of random vectors and provide a rigorous proof. Further, we develop recursive algorithms for fast implementation so that at each time the monitoring statistics can be efficiently computed. The resulting adaptive monitoring procedure via a combination of $L_2$ and $L_q$ (say $q =6$) based sequential tests are shown to be powerful against both dense and sparse alternatives via theory and simulations.

There is a growing literature on high-dimensional change point detection in the retrospective setting; see \cite{horvath2012change,cho2015multiple,jirak2015uniform,yu2017finite,wang2018high,  yu2019robust, wang2019inference, zhangwangshao20, wangshao20}, among others. It is worth noting that \cite{enikeeva2019high} developed a test based on a combination of a linear statistic
and a scan statistic, and their test can be adaptive to both sparse and dense alternatives. However, their  Gaussian and independent components assumptions are also too restrictive. In addition,  the literature for online monitoring of high-dimensional data streams has also been growing steadily in the literature of statistics and quality control in the last decade. In particular, \cite{mei:10} proposed a global monitoring scheme based on the sum of the cumulative sum monitoring statistics from each individual
data stream. His method aims to minimize the delay time and control the global false alarm rate,  which is based on the average run length under the null. This is different from the size and power analysis as done in our work. Note that the assumptions in \cite{mei:10} are quite restrictive in the sense that he assumed all data streams \change{do not have cross-sectional dependence}, and that both the pre-change and post-change distributions are known. See \cite{wang:15},   \cite{zou2015efficient}, \cite{liu:19}, and \cite{li2020} for several variants in the sense that they propose new ways of aggregating the local monitoring statistics. In 
\cite{xie2013sequential}, they proposed a mixture detection procedure based on a likelihood ratio statistic that takes into account the fraction of data streams being affected. They argue that the performance is good when the fraction of affected data streams are known and do not require to completely specify the post-change distribution. However, the mixture global log-likelihood they specified relies on that hypothesized affected fraction $p_0$, and they showed the robustness of different choices of $p_0$ only through numerical studies. The results they derived hold for data generated from a normal distribution or other exponential families of distributions. A common feature of all these works is that they assume the data streams \change{do not have cross-sectional dependence}, which may be violated in practice.  As a matter of fact, our theory for the proposed monitoring statistic demonstrates the impact of the correlation/covariance structure of the multiple data streams, which seems not well appreciated in the above-mentioned literature.

The rest of the paper is structured as follows: In Section 2, we specify the change point monitoring framework we use and propose the monitoring statistic that targets the $L_q$-norm of the mean change. An adaptive monitoring scheme can be derived by combining the test statistic for different $q$'s, $q\in 2\N$. Section 3 provides a ratio-consistent estimator for $\|\Sigma\|_q^q$, which is crucial when constructing the monitoring statistics. Section 4 provides simulation studies to examine the finite sample performance of the adaptive monitoring statistic. In Section 5, we apply the adaptive monitoring scheme to two real datasets. Section 6 concludes the paper. All the technical details can be found in the Appendix.

\section{Monitoring Statistics}
In this section, we specify the general framework we use to perform change point monitoring. We consider a closed-end change point monitoring scenario following \cite{chu:96}. Assume that we observe a sequence of temporally independent high dimensional observations $X_1, \ldots, X_n \in \mathbb R^p$, which are ordered in time and have constant mean $\bm \mu$ and covariance matrix $\Sigma$. We start the monitoring procedure from time $(n+1)$ to detect if the mean vector changes in the future. Throughout the analysis, we assume that all $X_t$'s are independent over time. A decision is to be made at each of the time points, and we will signal an alarm when the monitoring statistic exceeds a certain boundary. The process ends at time $nT$ regardless of whether a change point is detected, where $T$ is a pre-specified number. The Type-I error of the monitoring procedure is controlled at $\alpha$, which means the probability of signaling an alarm when there is no change within the period $[n+1, nT]$ is at most $\alpha$.

Under the null hypothesis, no change occurs within the monitoring period and we have 
\[ E(X_t) = \mu \text{ for $t = 1, \ldots, nT$}.\]
against the alternative 
\[\]
Under the alternatives, the mean function changes at some time $t_0 > n$, and remains at the same level for the following observations. That is
    \[E(X_t) = \begin{cases} \mu &  1< t < t_0\\
    \mu + \Delta & t_0\le t \le nT.\end{cases}\]
    
We propose a family of test statistics $T_{n,q}(k)$, which serves as the monitoring statistic targeting $\|\Delta\|_q$. The case $q = 2$ corresponds to dense alternatives, and larger values of $q$'s correspond to sparser alternatives. We will discuss the formulation of our monitoring statistic for $q = 2$ and then extend to general q's in the following subsections. 
   
\subsection{$L_2$-norm-based monitoring statistics}
In this section, we will first develop the $L_2$-norm-based monitoring statistic, which is especially useful to detect the dense alternative. Furthermore, we will discuss the asymptotic properties of the $L_2$-norm-based statistic. Finally, the recursive computational algorithm  will be developed to allow efficient implementation. 

\subsubsection{Monitoring statistics}
For a given time $k > n$, suppose we know a change point happens at the location $m$, where $n < m <k$.  We can separate the observations into two independent samples: pre-break $X_1,\ldots,X_{m}$ and post-break $X_{m+1},\ldots,X_{k}$. 
Consider using a two-sample U-statistic with kernel 
\[h((X,Y),(X',Y')) = (X- Y)^T(X'-Y')\],
where $(X', Y')$ is an independent copy of $(X, Y)$. Then we have \[E[h((X,Y),(X',Y'))] = \|E(X) - E(Y)\|_2^2, \]
which estimates the squared $L_2$-norm of the mean difference. Indeed \cite{wang2019inference} constructed a $L_2$-norm-based retrospective change point detection statistic by scanning over all possible $m$.  For the online monitoring problem, we shall combine this idea with the approach in \cite{dette2019likelihood} to propose a monitoring statistic. To be more precise, at each time point $k$, we scan through all possible change point locations $m$ ($n< m < k - 2$), and perform a change point testing. We take the maximum of these U-statistics over $m$ as our test statistics at time $k$. Suppose we can get a ratio-consistent estimator of $||\Sigma||_F$ learned from the training sample $\{X_1, \ldots, X_n\}$ denoted by $\widehat{\| {\Sigma}\|_F}$ , our monitoring statistic at time $k = n+3,\ldots, nT$ is
\begin{align*}
T_{n,2}(k) &= \frac{1}{n^3\widehat{\| {\Sigma}\|_F}}\max_{m = n+1\ldots,k-2 }\sum_{l = 1}^p\sum_{1\le i_1,i_2\le m}^*  \sum_{m+1 \le j_1,j_2\le k}^*  (X_{i_1, l} - X_{j_1, l})(X_{i_2, l} - X_{j_2, l})\\
& =  \frac{1}{n^3\widehat{\| {\Sigma}\|_F}}\max_{m = n+1\ldots,k-2 }G_k(m).
\end{align*}

\subsubsection{Asymptotic properties}
To calibrate the size of the testing procedure, we need to obtain the asymptotic distribution of the test statistic under the null. The following conditions are imposed in \cite{wang2019inference} to ensure the process convergence results.
\begin{assumption}
 $tr(\Sigma^4) =o(\|\Sigma\|^4_F)$.
\end{assumption}
\begin{assumption}
Let $Cum(h) = \sum_{l_1,\ldots,l_h=1}^p cum^2(X_{1,l_1},\ldots,X_{1,l_h}) \le C||\Sigma||^h_F$ for $h=2,3,4,5,6$ and some constant $C$.
\change{Here $cum(\cdot)$ is the joint cumulant. In general, for a sequence of random variable $Y_1,\ldots, Y_n$, their joint cumulant is defined as 
  \[cum(Y_1,\ldots,Y_n) = \sum_\pi (|\pi| - 1)!(-1)^{|\pi| - 1} \prod_{B\in \pi}E\left( \prod_{i \in B} Y_i\right),\]
  where $\pi$ runs through the list of all partitions of $\{1,\ldots,n\}$, $B$ runs through the list of all blocks of partition $\pi$ and  and $\pi$ is the number of parts in the partition.}
\end{assumption}
 Assumption 1 was also imposed in \cite{chenqin10}, who pioneered the use of $U$-statistic approach in the two-sample testing problem for high-dimensional data, and it   can be satisfied by a wide range of covariance models. Assumption 2 can be viewed as some restrictions on the dependence structure, which holds under uniform bounds on moments and `short-range’ dependence type conditions on the entries of the vector $(X_{0,1},...,X_{0,p})$. See \cite{wang2019inference} for more discussions about these two assumptions. Finally, under the null hypothesis and these assumptions, we provide the limiting distribution of the proposed monitoring statistic in Theorem \ref{thm: limiting}. 
\begin{theorem}
\label{thm: limiting}
Under Assumptions 1 and 2, we have
\[\max_{k = n+3}^{nT} T_{n,2}(k) \xrightarrow{D}\sup_{t \in [1,T]}\sup_{s\in[1,t]} G(s,t),\]
where
\[G(s,t)  = t(t-s)Q(0,s) + stQ(s,t) - s(t-s)Q(0,t),\]
and $Q$ is a Gaussian process whose covariance structure is the following
\begin{equation*}
    Cov(Q(a_1,b_1),Q(a_2,b_2)) = \begin{cases}
    (\min(b_1,b_2) - \max(a_1,a_2))^2 & if\quad \max(a_1,a_2) \le \min(b_1,b_2)\\
    0 & otherwise
    \end{cases}
\end{equation*} 
\end{theorem}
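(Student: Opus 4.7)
The plan is to express $G_k(m)$ as a fixed linear combination of a single bilinear U-process evaluated at three intervals, establish weak convergence of that single process, and then apply the continuous mapping theorem to recover the sup-of-sup limit.

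\emph{Algebraic reduction.} Under the null, WLOG assume $E[X_t]=0$. Expanding $(X_{i_1,l}-X_{j_1,l})(X_{i_2,l}-X_{j_2,l})$ and summing out each unused index yields
\[G_k(m) = (k-m)(k-m-1)\, D(0,m) + m(m-1)\, D(m,k) - 2(m-1)(k-m-1)\, C(m,k),\]
where $D(a,b) := \sum_{a<i_1\ne i_2\le b}\langle X_{i_1},X_{i_2}\rangle$ and $C(m,k):=\sum_{i\le m}\sum_{m<j\le k}\langle X_i,X_j\rangle$. The partition identity $D(0,k) = D(0,m)+D(m,k)+2C(m,k)$ eliminates $C(m,k)$, and with $m=\lfloor sn\rfloor$, $k=\lfloor tn\rfloor$ the rescaled statistic satisfies
\[\frac{G_k(m)}{n^3\|\Sigma\|_F} = t(t-s)\,Q_n(0,s)+st\,Q_n(s,t)-s(t-s)\,Q_n(0,t)+o_p(1),\]
uniformly in $(s,t)$, where $Q_n(a,b):=D(\lfloor na\rfloor,\lfloor nb\rfloor)/(n\|\Sigma\|_F)$. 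The problem thus reduces to proving weak convergence $Q_n\weak Q$ on $\{(a,b):0\le a\le b\le T\}$.

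\emph{Process convergence of $Q_n$.} Since $D(a,b)$ is a degenerate second-order U-statistic, finite-dimensional convergence of $Q_n$ follows from a martingale CLT applied to its Hoeffding decomposition. The limiting covariance is read off by counting overlapping unordered index pairs and using $E[\langle X_i,X_j\rangle^2]=\|\Sigma\|_F^2$; this produces $(\min(b_1,b_2)-\max(a_1,a_2))^2$ on overlapping intervals and $0$ otherwise, matching the theorem. Assumption 1 ($tr(\Sigma^4)=o(\|\Sigma\|_F^4)$) ensures that no single quadratic-form variance dominates, and Assumption 2 supplies the joint-cumulant control up to order six needed for the Lindeberg condition and for the fourth-moment increment bound $E[(Q_n(a,b)-Q_n(a',b'))^4]\le C(|a-a'|+|b-b'|)^2$ that delivers Kolmogorov-type tightness. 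These functional-CLT ingredients have essentially been developed for the retrospective statistic in \cite{wang2019inference}, and I would import their construction rather than redoing the combinatorics.

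\emph{Assembly.} Combining the expansion with $Q_n\weak Q$ and the continuous mapping theorem gives $G_{\lfloor tn\rfloor}(\lfloor sn\rfloor)/(n^3\|\Sigma\|_F)\weak G(s,t)$ as processes on the triangle $\{1\le s\le t\le T\}$. Replacing $\|\Sigma\|_F$ by the ratio-consistent $\widehat{\|\Sigma\|_F}$ of Section 3 via Slutsky, and applying continuous mapping to the iterated supremum (which is a continuous functional on the uniform-topology space of continuous functions on the triangle), transfers the limit to $\max_{k=n+3}^{\lfloor nT\rfloor}T_{n,2}(k)$. The main obstacle I anticipate is tightness of $Q_n$: the fourth moment of an increment breaks into sixfold sums of products of $\Sigma$-entries whose combinatorial cancellation is only tamed by using Assumptions 1 and 2 \emph{in tandem}; a naive bound loses too much and fails to give the required modulus of continuity. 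Everything else is algebra and standard weak-convergence plumbing once the process convergence of $Q_n$ is in hand.
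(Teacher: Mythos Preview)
Your proposal is correct and follows essentially the same route as the paper: the paper's proof simply invokes the process convergence $\frac{\sqrt{2}}{n\|\Sigma\|_F}S_n(a,b)\rightsquigarrow Q$ in $l^\infty([0,T]^2)$ from \cite{wang2019inference} (which is your $Q_n\weak Q$ up to the constant $\sqrt{2}$, since your $D(a,b)=2S_n(a,b)$), and then says the statistic is a continuous functional of that process. Your write-up is more explicit about the algebraic reduction via the partition identity and about the Slutsky step for $\widehat{\|\Sigma\|_F}$, but the substance is identical; just be careful with the $\sqrt{2}$ normalization when matching your $Q_n$ to the limiting covariance $(\min(b_1,b_2)-\max(a_1,a_2))^2$.
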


In general, we can also consider some non-constant boundary function $w(t)$, that is,
\[\max_{k = n+1}^{nT} \frac{T_{n,2}(k)}{w(k/n -1)} \xrightarrow{D}\sup_{t \in [1,T]}\sup_{s\in[1,t]} \frac{G(s,t)}{w(t-1)}.\]
We take the double supremums here to control the familywise error rate. Therefore, we reject the null hypothesis if $T_{n,2}(k) > c_\alpha w(k/n-1)$ for some $k \in \{n+3, \ldots, nT\}$.  The size can be calibrated by choosing a $c_\alpha$, such that
\[P\left(\sup_{t \in [1,T]}\sup_{s\in[1,t]} \frac{G(s,t)}{ w(t-1)} > c_\alpha\right) = \alpha.\]
Different choices of $w(t)$ have been considered in \cite{dette2019likelihood}.
\begin{itemize}
    \item (T1) $w(t) = 1$,
    \item (T2) $w(t) = (t+1)^2$,
    \item (T3) $w(t) = (t+1)^2\cdot \max\Big\{\Big (\frac{t}{t+1}\Big )^{1/2},10^{-10}\Big\}$.
\end{itemize}
\change{These $w(t)$s are
motivated by the law of iterated logarithm and are used to reduce the stopping
delay under the alternative. Based on our simulation results and real data applications,
the choice of $w(t)$ among the above three candidates  does not seem
to have a big impact on power and detection delay. So in practice, for closed-end
procedure, any choice would work. The detailed comparisons are shown in the simulation studies in Section 4.} 

\change{\textbf{Remark}
The current method can be generalized to the open-end framework. For an open-end monitoring procedure, we are interested in testing 
\[ E(X_t) = \mu \text{ for $t = 1, 2,\ldots$}.\]
against the alternative 
 \[E(X_t) = \begin{cases} \mu &  1< t < t_0\\
    \mu + \Delta & t > t_0.\end{cases}\]
for some $t_0 > n$. Suppose we use the same $L_2$ norm based monitoring statistic at time $k = n+3, \ldots$, i.e.,
\[
T_{n,2}(k) =  \frac{1}{n^3\widehat{\| {\Sigma}\|_F}}\max_{m = n+1\ldots,k-2 }G_k(m).
\]
For a suitably chosen boundary function $w(\cdot)$, we expect that
\[\sup_{k =n+3}^\infty \frac{T_{n,2}(k)}{w(k/n -1)} \xrightarrow{D}\sup_{ t \in [1,\infty)}\sup_{s\in[1,t]} \frac{G(s,t)}{w(t-1)},\]
as $n \to \infty$. The critical value can be determined by 
\[P\left(\sup_{t \in [1,\infty)}\sup_{s\in[1,t]} \frac{G(s,t)}{ w(t-1)} > c_\alpha\right) = \alpha.\]
We reject the null hypothesis if $T_{n,2}(k) > c_\alpha w(k/n-1)$ for some $k \in \{n+3, \ldots\}$. In practice, we can approximate critical values $c_\alpha$ based on the procedure we used for simulating the critical values in the closed-end procedure, by using a large $T$, say $T=200$. Note that the boundary function used for open-end monitoring needs to satisfy certain smoothness and decay rate assumptions and the above three we used for the closed-end procedure
are no longer applicable;  see Assumption 2.4 in \cite{gkd2020JTSA} and related discussions.
}

The following theorem provides theoretical analysis for the power of the $L_2$-norm-based monitoring procedure. 
\begin{theorem}
\label{thm: thm2}
Suppose that Assumptions 1 and 2 hold. Further assume that the change point location is at $\floor{nr}$ for some $r \in (1,T)$, we have 
\begin{enumerate}
    \item When $\frac{n\Delta^T\Delta}{||\Sigma||_F} \to 0$, \[\max_{k = n+3,\ldots nT} T_{n,2}(k) \xrightarrow{D}\sup_{t \in [1,T]}\sup_{s\in[1,t]} G(s,t).\]
    \item\change{ When $\frac{n\Delta^T\Delta}{||\Sigma||_F} \to b \in (0 +\infty)$,
    \[\max_{k = n+3,\ldots nT} T_{n,2}(k) \xrightarrow{D}\Tilde{T}_2 = \sup_{t \in [1,T]}\sup_{s\in[1,t]} \left[G(s,t) + b\Lambda(s,t)\right],\]
    where 
   \[\Lambda(s,t) = \begin{cases} (t-r)^2s^2 & s \le r\\
r^2 (t-s)^2 & s > r\\
0 & otherwise
\end{cases}.\]}
    \item When $\frac{n\Delta^T\Delta}{||\Sigma||_F} \to \infty$, \[\max_{k = n+3,\ldots nT} T_{n,2}(k)\xrightarrow{D} \infty.\]
\end{enumerate}
\end{theorem}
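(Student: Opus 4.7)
Write $X_i = Y_i + \mu + f(i/n)\Delta$ with $f(u)=\1{u\ge r}$, so $Y_i=X_i-\E(X_i)$ has constant mean zero and the same covariance $\Sigma$, hence still satisfies Assumptions~1 and~2. Substituting into the kernel and expanding gives, for each quadruple of indices,
\begin{align*}
(X_{i_1}-X_{j_1})^T(X_{i_2}-X_{j_2}) &= (Y_{i_1}-Y_{j_1})^T(Y_{i_2}-Y_{j_2})\\
&\quad + (f(i_2/n)-f(j_2/n))(Y_{i_1}-Y_{j_1})^T\Delta\\
&\quad + (f(i_1/n)-f(j_1/n))(Y_{i_2}-Y_{j_2})^T\Delta\\
&\quad + (f(i_1/n)-f(j_1/n))(f(i_2/n)-f(j_2/n))\|\Delta\|_2^2.
\end{align*}
Summing over the factorial index set and over coordinates yields a decomposition $G_k(m)=G_k^{(0)}(m)+G_k^{\mathrm{cr}}(m)+G_k^{\mathrm{dr}}(m)$. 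The approach is to invoke Theorem~\ref{thm: limiting} applied to $Y_i$ for the noise piece $G_k^{(0)}$ (whose underlying proof establishes process convergence to the Gaussian field $G$ on $\{(s,t):1\le s\le t\le T\}$), compute the drift $G_k^{\mathrm{dr}}$ exactly, and show $G_k^{\mathrm{cr}}$ is asymptotically negligible uniformly in $(m,k)$.

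For the drift, the factorial sum of $(f(i_1/n)-f(j_1/n))(f(i_2/n)-f(j_2/n))$ factorises, up to diagonal corrections of order $n^3$, into $[\sum_{i\in[1,m]}\sum_{j\in[m+1,k]}(f(i/n)-f(j/n))]^2$. A direct count of indicators yields two cases: when $m\le\floor{nr}$ the inner sum equals $-m(k-\floor{nr}+1)_+\sim -n^2 s(t-r)_+$, and when $m>\floor{nr}$ it reduces to $(k-m)[(m-\floor{nr}+1)-m]\sim -n^2 r(t-s)$; the sum vanishes when $s\le r$ and $t<r$. Squaring, multiplying by $\|\Delta\|_2^2$, and dividing by $n^3\widehat{\|\Sigma\|_F}$ (using ratio consistency of the denominator) gives $\frac{n\|\Delta\|_2^2}{\|\Sigma\|_F}\Lambda(s,t)+o(1)$ uniformly in $s=m/n$, $t=k/n$, which furnishes the drift $b\Lambda(s,t)$ in case~2 and is negligible in case~1.

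For the cross piece, the analogous factorisation produces a leading term proportional to $\bigl[\sum_{i\in[1,m],\,j\in[m+1,k]}(Y_i-Y_j)^T\Delta\bigr]\cdot\bigl[\sum_{i,j}(f(i/n)-f(j/n))\bigr]$. The right bracket is $O(n^2)$; the left bracket is a centered sum with variance at most $Cn^3\,\Delta^T\Sigma\Delta\le Cn^3\|\Sigma\|_{\mathrm{op}}\|\Delta\|_2^2$. Assumption~1 forces $\|\Sigma\|_{\mathrm{op}}^4\le\tr(\Sigma^4)=o(\|\Sigma\|_F^4)$, so $\|\Sigma\|_{\mathrm{op}}=o(\|\Sigma\|_F)$; dividing the cross term by $n^3\|\Sigma\|_F$ then gives a pointwise order $O_p\bigl(\sqrt{(n\|\Delta\|_2^2/\|\Sigma\|_F)(\|\Sigma\|_{\mathrm{op}}/\|\Sigma\|_F)}\bigr)=o_p(1)$ in cases~1 and~2. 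A moment bound along the grid, analogous to the tightness step underlying Theorem~\ref{thm: limiting} and leveraging the cumulant controls of Assumption~2, upgrades this to uniformity in $(m,k)$. Combining the three pieces via Slutsky's theorem and the continuous mapping theorem on the double supremum delivers the limits in cases~1 and~2. For case~3, evaluating $T_{n,2}(k)$ at any fixed $(s,t)$ with $s\le r<t$ lets the drift term alone tend to $+\infty$ while the noise and cross terms remain $O_p(1)$, so the maximum diverges.

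The principal technical obstacle is upgrading the pointwise $o_p(1)$ bound on $G_k^{\mathrm{cr}}/(n^3\|\Sigma\|_F)$ to a uniform one over all candidate change points $m\in[n+1,k-2]$ and monitoring times $k\in[n+3,nT]$: this requires a maximal inequality along the grid, for which the cumulant controls in Assumption~2 are essential, mirroring the tightness argument behind Theorem~\ref{thm: limiting}.
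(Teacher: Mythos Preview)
Your proposal is correct and follows the same overall strategy as the paper: center to $Y_i$, decompose $G_k(m)$ into noise, drift, and cross pieces, invoke Theorem~\ref{thm: limiting} for the noise, compute the drift explicitly to obtain $b\Lambda(s,t)$, and show the cross term is asymptotically negligible.

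The one place where you diverge from the paper is in the handling of uniformity for the cross term, which you flag as ``the principal technical obstacle'' and propose to resolve via a moment/tightness argument leveraging the cumulant bounds of Assumption~2. The paper's route is much lighter: it observes that the cross terms are of the form (polynomial of degree $\le 2$ in $m,k,k^*$) times a partial sum $\sum_{j=a}^{b} Y_j^T\Delta$, and applies Kolmogorov's maximal inequality to the scalar partial-sum process $s_n(k)=\sum_{j=1}^k Y_j^T\Delta$ to get
\[
\sup_{1\le l\le k\le nT}\Bigl|\sum_{j=l}^{k} Y_j^T\Delta\Bigr|\le 2\sup_{1\le k\le nT}|s_n(k)| = O_p\bigl(n^{1/2}(\Delta^T\Sigma\Delta)^{1/2}\bigr).
\]
This yields the uniform bound over all $(m,k)$ in one line using only second moments, with no appeal to higher cumulants or to the tightness machinery behind Theorem~\ref{thm: limiting}. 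Your proposed route would work, but the maximal inequality for scalar partial sums is the cleaner tool here and makes the step essentially trivial rather than the main difficulty.
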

Theorem \ref{thm: thm2} implies that, under the local alternative where $\frac{n\Delta^T\Delta}{||\Sigma||_F} \to 0$, the proposed monitoring procedure has trivial power. For the diverging alternative where $\frac{n\Delta^T\Delta}{||\Sigma||_F} \to +\infty$, the test has power converging to 1. When the strength corresponding to the change falls in between, the test has power between $(\alpha, 1)$. 

\subsubsection{Recursive computation}
One challenge for the proposed monitoring statistic $T_{n,2}(k)$ is that it needs to be recomputed at each given time $k$. The brute force calculation of the test statistics has $O(n^4p)$ time complexity and O(1) space complexity. In this section, we develop a recursive algorithm to efficiently update the monitoring statistic, which greatly improves the computational efficiency for online monitoring. More specifically, we propose a recursive algorithm to update $G_k(m)$, which is a major component to compute the monitoring statistic $T_{n,2}(k)$ as follows:
 
\begin{align*}
    G_k(m) &= 
    (k-m)(k-m-1)\sum_{1\le i< j \le m}X_i^TX_j +m(m-1) \sum_{m+1 \le i< j\le k}X_{i}^TX_{j}\\
    & -(m-1)(k-m-1)\sum_{i=1}^{m}\sum_{j=m+1}^{k} X_i^TX_{j}.
\end{align*}
To compute $G_k(m)$, we need to keep track of two CUSUM processes
\[B_t = \sum_{i = 1}^t X_i \text{ and } C_t = \sum_{i = 1}^t X_i^T X_i,\]
where $B_t$'s are still  $p$-dimensional. The partial sum process $S(a,b) = \sum_{a \le i < j \le b} X_i^T X_j$ appeared in $G_k(m)$ can be expressed in terms of functions of $B_t$ and $C_t$,
\[S(a,b) = \sum_{a \le i < j \le b} X_i^T X_j = \frac{1}{2}[(B_b -  B_{a-1})^T(B_b - B_{a-1}) - (C_b - C_{a-1})].\]

The detailed algorithm is stated as follows, 
\begin{enumerate}
    \item \textbf{Initialization}:
    Start with the first pair $(m,k) = (n+1, n+3)$. Record the following quantities
    \[B_{n+1}, B_{n+2}, B_{n+3}, C_{n+1}, C_{n+2}, C_{n+3}.\]
    The first statistic can be calculated based on  
    \begin{align*}
    G_{n+3}(n+1) &= 
 2\cdot (B_{n+1}^T B_{n+1} - C_{n+1})/2 \\
 &+ (n+1)n[(B_{n+3} -  B_{n+1})^T(B_{n+3} -  B_{n+1}) \\
    & - (C_{n+3} - C_{n+1})]/2 - nB_{n+1}^T(B_{n+3} - B_{n+1}).
\end{align*}
  \item \textbf{Increase index from $k$ to $k+1$}:
    Fix index $m$, compute $B_{k+1}$ and $C_{k+1}$:
    \[B_{k+1} = B_k + X_{k+1}, C_{k+1} = C_k + X_{k+1}^T X_{k+1}.\]
    The statistic for the pair $(m, k + 1)$ is 
    \begin{align*}
    G_{k+1}(m) &= 
   (k-m+1)(k-m)(B_{m}^T B_{m} - C_{m})/2\\
   & + m(m-1)[(B_{k+1} -  B_{m})^T(B_{k+1} -  B_{m}))\\
    &- (C_{k+1} - C_{m})]/2 - (m-1)(k-m)\sum_{i=1}^{m} B_{m}^T(B_{k+1} - B_m).
    \end{align*}
    \item \textbf{Increase index from $m$ to $m+1$}:
    For fixed index $k$, all $B_i$ and $C_i$ for $i = n \ldots, k$ are already recorded. 
    The statistic for pair $(m+1, k)$ is 
    \begin{align*}
    G_k(m + 1) &= 
    (k-m-1)(k-m-2)(B_{m+1}^T B_{m+1} - C_{m+1})/2\\
    & + (m+1)m[(B_{k} -  B_{m+1})^T(B_{k} -  B_{m+1})) - (C_{k} - C_{m+1})]/2\\
    & - (k-m-2)mB_{m+1}^T(B_{k} - B_{m+1}).
    \end{align*}
\end{enumerate}
The algorithm should start with $(m,k) = (n+1, n+3)$, increase the second index $k$ first and then increase along the first index $m$. The recursive formulation reduces the time complexity to $O(n^2p)$ with additional space complexity $O(np)$.


\subsection{$L_q$-norm-based monitoring statistics}

In this section, we  generalize the monitoring statistic from $L_2$-norm to $L_q$-norm. As has been shown in the previous analysis, the power of the $L_2$-norm-based monitoring statistic depends on quantity $\|\Delta\|_2$, which is sensitive to dense alternatives. However, in real applications, we usually do not know a priori if the mean change is dense or not. As an approximation, we can consider a similar test statistic targeting $\|\Delta\|_q$, for $q \in 2\mathbb N$. When $q$ is large, we are essentially testing against sparse alternatives. As a special case, if we let $q \to \infty$, $\lim_{q\to \infty} \|\Delta\|_q = \|\Delta\|_\infty$, we  only target on the largest element (in absolute value) of the $\Delta$. 

\subsubsection{Monitoring statistics}

To define the monitoring statistics,  we adopt the idea used in Zhang et al. (2020) without applying self-normalization. Self-normalization requires more extensive computation and can be avoided by using the Phase I data to obtain a ratio consistent estimator for $\|\Sigma\|_q$. Also, as pointed out by \cite{shao15}, self-normalization can result in a slight loss of power. Essentially, we can construct a $L_q$-norm-based test statistic at time $k = n+q + 1, \ldots, nT$,
\begin{align*}
T_{n,q}(k) &= \frac{1}{\sqrt{n^{3q}\widehat{\| \Sigma\|_q^q}}}\max_{m  = n+1,\ldots, k - q}\sum_{l = 1}^p\sum_{1\le i_1, \ldots,i_q\le m}^*  \sum_{m+1 \le j_1, \ldots,j_q\le k}^*  (X_{i_1, l} - X_{j_1, l}) \cdots (X_{i_q, l} - X_{j_q, l})\\
 & = \frac{1}{\sqrt{n^{3q}\widehat{\| \Sigma\|_q^q}}}\max_{m  = n+1,\ldots, k - q}U_{n,q}(k,m),
\end{align*}
where $\widehat{\| \Sigma\|_q^q}$ is a ratio-consistent estimator of $\|\Sigma\|_q^q$.

\subsubsection{Asymptotic properties}

In this subsection, we  study the asymptotic property of the $L_q$-norm-based test statistics. First, we  impose the following conditions in Zhang et al. (2020) to facilitate the asymptotic analysis. 
\begin{assumption}
Let $X_t = \mu + Z_t$. Suppose $Z_t$ are i.i.d copies of $Z_0$ with mean 0 and covariance  matrix $\Sigma$. There exists $c_0 > 0$ independent of $n$ such that $\inf_{i = 1\ldots ,p} Var(Z_{0,i}) \ge c_0$
\end{assumption}
\begin{assumption}
$Z_0$ has up to 8-th moments, with $\sup_{1\le j\le p} E[Z^8_{0,j} ] \le  C$, and for $h = 2\ldots 8$ there exist constants $C_h$
depending on $h$ only and a constant $r > 2$ such that
\[|cum(Z_{0,l_1} \ldots, Z_{0,l_h})| \le C(1 \lor \max_{1 \le i < j\le h} |l_i - l_j|)^{-r}.\]
\end{assumption}

These assumptions appeared in Zhang et al. (2020), and \cite{wang2019inference} showed that they imply Assumptions 1 and 2 for the case $q=2$. Assumption 4 can be implied by geometric moment contraction [cf. Proposition 2 of \cite{wu2004limit}] or physical dependence measure proposed by \cite{wu2005nonlinear} [cf. Section 4 of \cite{shaowu2007}], or $\alpha$-mixing. It essentially requires weak \change{cross-sectional} dependence among the $p$ components in the data. 

Under the null hypothesis, to obtain the limiting distribution of monitoring statistic ${T}_{n,q}$, we utilize the limiting process in \cite{zhangwangshao20}  and obtained the following theorem.
\begin{theorem}
Under Assumptions 3 and 4, \[\max_{k = n+q+1}^{nT}T_{n,q}(k) \xrightarrow{d} \Tilde{T}_q : =  \sup_{t \in [1, T]}\sup_{s \in [1, t]} G_q(s,t),\]
where \[G_q (s,t)= \sum_{c = 0}^q (-1)^{q-c}\begin{pmatrix} q\\ c \end{pmatrix} s^{q-c}(t-s)^cQ_{q,c}(s;[0,t]),\]
and $Q_{q,c}(r;[a,b])$ is a Gaussian process with covariance structure 
\[cov(Q_{q,c_1}(r_1;[a_1,b_1]), Q_{q,c_2}(r_2;[a_2,b_2])) = \begin{pmatrix} C\\ c \end{pmatrix} c!(q-c)!(r-A)^c(R-r)^{C-c}(b-R)^{q-C}, \]
where $A = \max(a_1, a_2)$, $c = \min(c_1,c_2)$, $C = \max(c_1, c_2)$ and $b = \min(b_1, b_2)$. Two processes $Q_{q_1, c_1}$ and $Q_{q_2,c_2}$ are mutually independent if $q_1 \ne q_2 \in 2\N$.
\end{theorem}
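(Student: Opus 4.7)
My plan is to (i) rewrite $U_{n,q}(k,m)$ as a binomial sum indexed by the number $c$ of pre-break indices retained in each product term, (ii) establish joint weak convergence of the resulting two-parameter empirical processes to $\{Q_{q,c}\}_{c=0}^q$, and (iii) conclude via the continuous mapping theorem together with ratio consistency of $\widehat{\|\Sigma\|_q^q}$ from Section 3.

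For step (i), expanding $\prod_{a=1}^q(X_{i_a,l}-X_{j_a,l})$ and exploiting the symmetry of the star-sums in $(i_1,\ldots,i_q)$ and $(j_1,\ldots,j_q)$ splits $U_{n,q}(k,m)$ into $q+1$ pieces indexed by $c$ (the number of $i$-factors kept in the product). Summing the $q-c$ dummy $i$-indices and the $c$ dummy $j$-indices contributes falling-factorial combinatorial factors behaving to leading order as $m^{q-c}(k-m)^c = n^q s^{q-c}(t-s)^c$, where $s=m/n$ and $t=k/n$. Dividing by $\sqrt{n^{3q}\|\Sigma\|_q^q}$ then yields
\[
\frac{U_{n,q}(k,m)}{\sqrt{n^{3q}\|\Sigma\|_q^q}} = \sum_{c=0}^q (-1)^{q-c}\binom{q}{c}\, s^{q-c}(t-s)^c\,\widetilde Q_{n,q,c}(m;[0,k]) + o_p(1)
\]
uniformly on $\{(s,t): 1\le s\le t\le T\}$, where $\widetilde Q_{n,q,c}$ is the natural $(n^q\|\Sigma\|_q^q)^{-1/2}$-scaled empirical analogue of $Q_{q,c}$.

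For step (ii), I would establish joint functional convergence of $\{\widetilde Q_{n,q,c}\}_{c=0}^q$ on the simplex via a cumulant-based CLT. Second moments reproduce the stated covariance formula because pairing up the $X_{i,l_1}$ and $X_{i,l_2}$ factors across two copies of the statistic produces $\Sigma_{l_1,l_2}$'s that sum to $\|\Sigma\|_q^q$, while the combinatorial constants $\binom{C}{c}c!(q-c)!$ arise from the legal pairings across the $i$- and $j$-blocks. Higher-order cumulants are controlled under Assumption 4, yielding joint asymptotic Gaussianity and the mutual independence of $\widetilde Q_{n,q_1,\cdot}$ and $\widetilde Q_{n,q_2,\cdot}$ for distinct $q_1,q_2\in 2\N$ (cross-cumulants of odd total order vanish by symmetry). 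Tightness on the two-parameter simplex is then obtained via moment-increment bounds of the type
\[
E\bigl[(\widetilde Q_{n,q,c}(m_2;[0,k_2]) - \widetilde Q_{n,q,c}(m_1;[0,k_1]))^4\bigr] \le C\bigl(|m_2-m_1|/n + |k_2-k_1|/n\bigr)^2,
\]
which suffice for a two-parameter Bickel--Wichura criterion and which follow from Assumption 4 via the same cumulant calculations used in Zhang et al.\ (2020).

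Finally, the functional $f\mapsto \sup_{t\in[1,T]}\sup_{s\in[1,t]}f(s,t)$ is continuous on $C(\{(s,t): 1\le s\le t\le T\})$, so the continuous mapping theorem combined with Slutsky's lemma (applied to the ratio-consistent denominator $\widehat{\|\Sigma\|_q^q}$ from Section 3) delivers the claimed weak limit $\widetilde T_q$. The main obstacle is the tightness step: extending the retrospective, one-parameter tightness machinery of Zhang et al.\ (2020) to the two-parameter simplex requires uniform control of index coincidences among the $2q$ summation indices as both $m$ and $k$ vary, and the cumulant decay in Assumption 4 must be applied delicately to suppress all terms in which several of these indices collide.
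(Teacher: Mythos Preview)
Your approach is essentially the same as the paper's: decompose $U_{n,q}(k,m)$ into the pieces $S_{n,q,c}$, invoke a functional CLT for these processes, and finish with the continuous mapping theorem together with ratio consistency of $\widehat{\|\Sigma\|_q^q}$. The paper's proof is much terser than your sketch because it does not redo steps (i)--(ii) at all: it simply cites Theorems~2.1 and~2.2 of Zhang et al.\ (2020), which already give the joint weak convergence
\[
\frac{1}{\sqrt{n^{q}\|\Sigma\|_q^q}}\,S_{n,q,c}(r;[a,b])\ \rightsquigarrow\ Q_{q,c}(r;[a,b])
\]
as a process indexed by $(r,a,b)$, and then notes that $T_{n,q}(k)$ is a continuous functional of these processes. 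In particular, your stated ``main obstacle'' --- extending one-parameter tightness to the two-parameter simplex --- is not actually an obstacle here: the cited result is already multi-parameter (in $(r,a,b)$), so specializing to $a=0$ and varying $(r,b)=(s,t)$ requires no new tightness argument. Your more detailed outline of the cumulant-based CLT and moment-increment tightness bounds is a faithful reconstruction of what that cited result contains, but for the purposes of this paper it is unnecessary.
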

 The limiting null distribution is pivotal and its critical values can be simulated based on the following equation,  \[P\left(\sup_{t \in [1,T]}\sup_{s\in[1,t]} \frac{G_q(s,t)}{w(t-1)}> c_\alpha\right) = \alpha.\]
We reject the $H_0$ when ${T}_{n,q}(k) > c_\alpha w(k/n-1)$ for $k = n+q + 1, \ldots, nT$. 
We tabulate the critical values for $T=2$, $q = 2, 6$ and different boundary functions in Table \ref{tab:cv}. Critical values under other settings are available upon request. 

\begin{table}[]
\caption{Simulated critical values for $L_q$-norm-based test, $T=2$}
\centering
\label{tab:cv}
\begin{tabular}{lrrrrrr}
\hline
Boundary & \multicolumn{2}{c}{T1}                                & \multicolumn{2}{c}{T2}                                & \multicolumn{2}{c}{T3}                                \\ \cline{2-7} 
Quantile & \multicolumn{1}{c}{$L_2$} & \multicolumn{1}{c}{$L_6$} & \multicolumn{1}{c}{$L_2$} & \multicolumn{1}{c}{$L_6$} & \multicolumn{1}{c}{$L_2$} & \multicolumn{1}{c}{$L_6$} \\ \hline
90\%     & 0.756                     & 3.235                     & 0.204                     & 0.867                     & 0.141                     & 0.592                     \\
95\%     & 1.264                     & 3.711                     & 0.331                     & 0.973                     & 0.232                     & 0.676                     \\
99\%     & 2.715                     & 4.635                     & 0.706                     & 1.196                     & 0.485                     & 0.837                     \\ \hline
\end{tabular}
\end{table}

Finally, we  study the power of the $L_q$-norm-based monitoring procedure in Theorem \ref{thm: powerlq}. 
\begin{theorem}
\label{thm: powerlq}
Suppose that Assumptions 3 and 4 hold and the change point location is at $\floor{nr}$ for some $r \in (1,T)$, 
\begin{enumerate}
    \item When $\frac{n^{q/2}\|\Delta\|_q^q}{\|\Sigma\|_q^{q/2}} \to 0$, $\max_{k = n+q+1}^{nT}T_{n,q}(k) \xrightarrow{D} \Tilde{T}_q$;
    \item When $\frac{n^{q/2}\|\Delta\|_q^q}{\|\Sigma\|_q^{q/2}} \to \gamma \in (0 , +\infty)$, 
    \[\max_{k = n+q+1}^{nT}T_{n,q}(k) \xrightarrow{D} \sup_{t \in [1,T]}\sup_{s\in[1,t]} [G_q(s,t) + \gamma J_q(s;[0,t])],\]
where 
 \[ J_q(s;[0,t]) = \begin{cases} 
r^q(t-s)^q & r\le s < t\\
s^q(t-r)^q & s \le r  < t\\
0 & otherwise
\end{cases};
\]
    \item When $\frac{n^{q/2}\|\Delta\|_q^q}{\|\Sigma\|_q^{q/2}} \to \infty$, $\max_{k = n+q+1}^{nT}T_{n,q}(k) \xrightarrow{D} \infty$.
\end{enumerate}
\end{theorem}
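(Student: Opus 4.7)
The plan is to parallel the strategy used for Theorem \ref{thm: thm2} but in the more combinatorially involved $L_q$ setting, leaning on the limiting-process machinery already established under the null (Theorem 3) and on the expansion technique of \cite{zhangwangshao20}. The first step is to decompose the data as $X_t = \mu + \Delta\,\mathbf{1}\{t \ge \floor{nr}\} + Z_t$, so that for any pre-break index $i \le m$ and post-break index $j$ with $m < j \le k$,
\[
X_{i,l} - X_{j,l} = (Z_{i,l} - Z_{j,l}) + \Delta_l\bigl(\mathbf{1}\{i \ge \floor{nr}\} - \mathbf{1}\{j \ge \floor{nr}\}\bigr).
\]
Expanding the product $\prod_{c=1}^{q}(X_{i_c,l} - X_{j_c,l})$ by the binomial theorem yields $2^q$ terms indexed by subsets $S \subseteq \{1,\dots,q\}$ that record which factors contribute the noise piece and which contribute the mean-shift piece. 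Summing over distinct $i_1,\dots,i_q \in \{1,\dots,m\}$ and distinct $j_1,\dots,j_q \in \{m+1,\dots,k\}$ produces the decomposition $U_{n,q}(k,m) = U_{n,q}^{(0)}(k,m) + S_{n,q}(k,m) + R_{n,q}(k,m)$, where $U_{n,q}^{(0)}$ is the ``pure noise'' U-statistic to which Theorem 3 applies, $S_{n,q}$ is the purely deterministic signal term (all factors $\Delta_l$), and $R_{n,q}$ collects the mixed terms.

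The second step is to evaluate $S_{n,q}$ and check that, after normalization, it converges uniformly to $\gamma J_q(s;[0,t])$ with $m = \floor{ns}$, $k = \floor{nt}$. When $s \le r$ all pre-break indices are pre-change, so the $\Delta$ contribution survives only when every $j_c \ge \floor{nr}$, giving a count of $\sim m^q (k - \floor{nr})^q \sim n^{2q} s^q (t-r)^q$ and coefficient $(-\Delta_l)^q = \Delta_l^q$ (since $q$ is even). When $s > r$ all post-break indices are post-change, so the $\Delta$ contribution survives only when every $i_c \le \floor{nr}$, giving $\sim n^{2q} r^q (t-s)^q$ occurrences with the same $\Delta_l^q$ factor. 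Summing $\sum_l \Delta_l^q = \|\Delta\|_q^q$ and dividing by $\sqrt{n^{3q}\widehat{\|\Sigma\|_q^q}}$ produces
\[
\frac{S_{n,q}(\floor{nt},\floor{ns})}{\sqrt{n^{3q}\widehat{\|\Sigma\|_q^q}}} = \frac{n^{q/2}\|\Delta\|_q^q}{\|\Sigma\|_q^{q/2}}\bigl(1 + o(1)\bigr)\,J_q(s;[0,t]),
\]
uniformly on $\{(s,t): 1 \le s \le t \le T\}$, using ratio-consistency of $\widehat{\|\Sigma\|_q^q}$ (Section 3).

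The third step is to show that the mixed remainder $R_{n,q}(k,m)$ is uniformly of smaller order than $\sqrt{n^{3q}\|\Sigma\|_q^q}$ under the scaling of Case 2. A typical mixed term with $|S| = c$ noise factors and $q - c$ signal factors contains at most $n^{2q-c}$ summands of the product of $c$ centered $Z$'s with $q - c$ copies of $\Delta_l$; the dominant summation bound from the cumulant conditions in Assumption 4 (used in the same way as in \cite{zhangwangshao20}) gives an $L_2$ norm of order $n^{(4q - c)/2}\|\Sigma\|_q^{c/2}\|\Delta\|_q^{q-c}$ (modulo multiplicative constants). Dividing by the normalization and invoking $n^{q/2}\|\Delta\|_q^q / \|\Sigma\|_q^{q/2} \to \gamma < \infty$ yields a uniform $O_p(n^{-c/2})$ bound for each $1 \le c \le q-1$; a maximal inequality over $(m,k)$ then upgrades this to uniform negligibility.

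Combining these three steps gives joint process convergence of $U_{n,q}^{(0)}(\floor{nt},\floor{ns})/\sqrt{n^{3q}\widehat{\|\Sigma\|_q^q}} + S_{n,q}/\sqrt{n^{3q}\widehat{\|\Sigma\|_q^q}}$ to $G_q(s,t) + \gamma J_q(s;[0,t])$ uniformly on the relevant parameter set. The continuous mapping theorem applied to the sup functional then yields Case 2. Case 1 follows immediately because the normalized signal vanishes and the noise part alone converges to $\tilde{T}_q$ by Theorem 3. Case 3 follows by choosing $(s,t)$ with $J_q(s;[0,t]) > 0$, for which the signal term alone diverges while the noise part is stochastically bounded. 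The principal obstacle throughout is the uniform control of the mixed remainder $R_{n,q}$; this requires the cumulant bound of Assumption 4 rather than only moment conditions, and is where the bulk of the technical work goes.
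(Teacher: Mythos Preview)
Your proposal is correct in spirit and would lead to the result, but it takes a much more hands-on route than the paper does. The paper's proof is essentially a two-line citation: it invokes Theorem~2.3 of \cite{zhangwangshao20}, which already establishes the process convergence
\[
\frac{1}{\sqrt{n^{3q}\|\Sigma\|_q^q}}\,U_{n,q}(\floor{nt},\floor{ns}) \;\rightsquigarrow\; G_q(s,t) + \gamma\, J_q(s;[0,t])
\]
under the local alternative, and then applies the continuous mapping theorem to the double supremum. Cases 1 and 2 drop out immediately from $\gamma\in[0,\infty)$, and Case 3 is handled by a single lower bound at the true change location. No decomposition, no remainder control --- all of that is outsourced to the cited result.

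What you are doing is effectively re-deriving that external theorem from scratch: the noise/signal/mixed-term expansion, the explicit computation of the deterministic drift $J_q$, and the uniform negligibility of the cross terms. This is the natural self-contained approach and is presumably what the proof of Theorem~2.3 in \cite{zhangwangshao20} carries out internally. The benefit of your route is transparency and independence from the external reference; the benefit of the paper's route is brevity. One place your sketch would need tightening is the order-of-magnitude claim for the mixed remainder: the assertion that a term with $c$ noise factors ``contains at most $n^{2q-c}$ summands'' is not correct as stated --- you still sum over all $2q$ temporal indices, and the smallness comes from the centered structure of the $Z$-products (a variance calculation), not from a reduced index count --- so the $O_p(n^{-c/2})$ conclusion needs a genuine second-moment bound rather than a counting argument. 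But the overall architecture is sound.
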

Analogous to the $q=2$ case, the power of the test depends on $\|\Delta\|_q$. Therefore, for large $q$, the proposed test is sensitive to sparse alternatives. 

\subsubsection{Recursive computation}
Similar to the $L_2$-based-test statistics, we would like to extend the recursive formulation to $L_q$-norm-based test statistic. According to Zhang et al.(2020), under the null, the process $U_{n,q}(k,m)$ can be simplified as 
\[U_{n,q}(k,m) = \sum_{c = 0}^q (-1)^{q-c}\begin{pmatrix} q\\ c \end{pmatrix} P^{m-1-c}_{q-c}P^{k-m-q+c}_{c}S_{n,q,c}(m;1,k), \]
where $P_l^k = k!/(k-l)!$ and
\[S_{n,q,c}(m;s,k) = \sum_{l = 1}^p\sum_{s\le i_1, \ldots,i_c\le m}^*  \sum_{m+1 \le j_1, \ldots,j_{q-c}\le k}^*  \prod_{t=1}^cX_{i_t, l}\prod_{g = 1}^{q-c}X_{j_g, l}.\]

Since $S_{n,q,c}(m;1,k)$'s are the major building blocks of our final test statistic and need to be computed at each time $k$, we need to find efficient ways to calculate them recursively. One key element is the sum of product terms like \[B(c,m,l) : = \sum_{1\le i_1, \ldots,i_c\le m}^* \prod_{t=1}^cX_{i_t, l}, \qquad \text{ and }\]
\[M(c,m,k,l) :=\sum_{m \le j_1, \ldots,j_{c}\le k}^*\prod_{g = 1}^{c}X_{j_g, l}.\]
When we increase from $m$ to $m+1$, 
\[\sum_{1\le i_1, \ldots,i_c\le m+1}^* \prod_{t=1}^cX_{i_t, l} = \sum_{1\le i_1, \ldots,i_c\le m}^* \prod_{t=1}^cX_{i_t, l} + X_{m+1, l} \cdot \sum_{1\le i_1, \ldots,i_{c-1}\le m}^* \prod_{t=1}^{c-1}X_{i_t, l}.\]
We can derive the following recursive relationship for $B(c,k,l)$, i.e. 
\begin{equation}
B(c,m+1,l) = B(c,m,l) + B(c-1,m,l)\cdot X_{k+1, l}.\label{recur}
\end{equation}
There is similar recursive relationship for $M(c,m,k,l)$, 
\begin{equation}
M(c,m+1,k,l) = M(c,m,k,l) + X_{m+1,l}M(c-1, m,k,l).\label{recur2}
\end{equation}

To enable the recursive computation, for each $c = 0, \ldots, q$, we maintain a matrix to store the cumulative sums. 
\begin{enumerate}
  \item \textbf{Initialization}: Starting with $c = 0$ and $c = 1$, for all $l = 1, \ldots, p$,  we initialize $B(0, k+1,l), \ldots, B(0,k+q, l) = 0$ and calculate
\[B(1, k+1,l) = \sum_{i = 1}^{k+1}X_{i,l}, \ldots, B(1,k+q, l)= \sum_{i = 1}^{k+q}X_{i,l}.\]
Then we can recursively calculate $B(c, i,l)$ for all $c = 0, \ldots, q$ and $i \le k+q$ based on Equation \ref{recur}.
  \item \textbf{Update  from $B(c,k,l)$ to $B(c,k+1,l)$}: We let $B(0,k+1,l) = B(0,k,l) + X_{k+1,l}$ and obtain the result for other $B(c,k+1,l)$ $(c \le q)$ using Equation \ref{recur}.
  \item \textbf{Update from $M(c,m,k,l )$ to $M(c,m+1,k,l )$}: Fix index $k$, for any $ n  + 1\le m \le k-q$, $l = 1, \ldots, p$, we let $M(0,m,k,l)= 0$  and calculate 
  \[M(1,m,k,l) = \sum_{i = m}^kX_{i,l}.\]
 All other $M(c,m,k,l )$ where $c \le q$ and $ n  + 1\le m \le k-q$, can be obtained via Equation \ref{recur2}. Construct the test statistic $T_{n,q}(k+1)$ using $B(c,k,l)$'s and  $M(c,m,k,l )$'s and repeat from step 2. 
  \end{enumerate}
  The time complexity of the recursive formulation is $O(n^2pq)$ with space complexity $O(npq)$.

\subsection{Combining multiple $L_q$-norm-based statistics}
In this section, we propose to combine multiple $L_q$ statistics to detect both dense and sparse alternatives. 
More specifically, based on theoretical results in Zhang et al. (2020), the U-process for different $q$'s are asymptotically independent, which implies that $\{T_{n,q}\}_{k = n+q +1}^{nT}$ are asymptotically independent for $q \in 2\N$. Therefore, $\max_{k = n+q + 1}^{nT} T_{n,q}(k)$ are asymptotically independent for $q\in I$, where $I \subset 2\N$, say $I = \{2,6\}$. Thus we can combine the monitoring procedure for different $q$'s and adjust for the asymptotic size. In general, if we want to combine a set of $q \in I$, we can adjust the size of each individual test to be $1- (1-\alpha)^{1/|I|}$ given the asymptotic independence and reject the null if any of the monitoring statistics exceeds its critical value. In Zhang et al.(2020) they provide a discussion on the power analysis for the identity covariance matrix case, and showed that the adaptive test enjoys good overall power. 

In practice, there is this issue of which $q$ to use. Based on the recommendation in \cite{zhangwangshao20}, we set $q=6$. As mentioned in the latter paper, using larger $q$ leads to more trimming and more computational cost. As we demonstrate in the simulations, using $q=6$ and combining with $q=2$ show a very promising performance; see Section 4 for more details.

\section{Ratio-consistent estimator for $\|\Sigma\|_q^q$}
 Notice that the test statistic $T_n(k)$ requires a ratio-consistent estimator for $\|\Sigma\|_q^q$. For example, when $q = 2$, this can be simplified to $\|\Sigma\|_F^2$. The ratio-consistent estimator for $\|\Sigma\|_F^2$ has been proposed in \cite{chenqin10}, but it seems difficult to generalize to $\|\Sigma\|_q^q$. In this section, we introduce a new class of ratio-consistent estimator for $\|\Sigma\|_q^q$ based on U-statistics. We first show the result when $q= 2$ and generalize it to $q\in 2\N$ later. 

	Assume $\{X_t\}_{t = 1}^n \in \mathbb{R}^p$ are i.i.d. random vectors with mean $\bm \mu$ and variance $\Sigma$. Define 
	\begin{equation}
	    \widehat{\|\Sigma\|_F^2} = \frac{1}{4{n \choose 4}}\sum_{1 \leq j_1 < j_2 < j_3 < j_4 \leq n}tr\left((X_{j_1} - X_{j_2})(X_{j_1} - X_{j_2})^T(X_{j_3} - X_{j_4})(X_{j_3} - X_{j_4})^T\right),
	\end{equation}
	as an estimator of $\|\Sigma\|_F^2$.
	\begin{theorem}\label{thm:ratio}
	Under Assumption 1 and Cum(4)$\le C\|\Sigma\|_F^4$ in Assumption 2, $\widehat{\|\Sigma\|_F^2}$ is a ratio-consistent estimator of $\|\Sigma\|_F^2$. i.e. $\widehat{\|\Sigma\|_F^2}/\|\Sigma\|_F^2 \xrightarrow{p} 1$.
	\end{theorem}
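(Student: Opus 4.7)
The plan is to prove unbiasedness and then bound the variance, after which Chebyshev's inequality will deliver ratio consistency. The first useful step is algebraic: writing $u = X_{j_1}-X_{j_2}$ and $v = X_{j_3}-X_{j_4}$, trace cyclicity gives $\mathrm{tr}(uu^\top vv^\top) = (u^\top v)^2$, so the estimator reduces to the $U$-statistic
\[
\widehat{\|\Sigma\|_F^2} = \frac{1}{4\binom{n}{4}}\sum_{j_1<j_2<j_3<j_4}\bigl((X_{j_1}-X_{j_2})^\top(X_{j_3}-X_{j_4})\bigr)^2.
\]
By independence of the four samples and $E[(X_1-X_2)(X_1-X_2)^\top] = 2\Sigma$, a direct calculation gives $E[h] = 4\|\Sigma\|_F^2$ for the kernel $h$, so the $1/(4\binom{n}{4})$ normalization makes the estimator exactly unbiased for $\|\Sigma\|_F^2$.

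For the variance, I would apply the Hoeffding decomposition,
\[
\mathrm{Var}\bigl(\widehat{\|\Sigma\|_F^2}\bigr) = \frac{1}{16}\binom{n}{4}^{-1}\sum_{c=1}^{4}\binom{4}{c}\binom{n-4}{4-c}\,\zeta_c,
\]
where $\zeta_c$ is the variance of the conditional expectation of (a symmetrization of) $h$ given $c$ of its arguments. The combinatorial prefactors are of orders $n^{-1},n^{-2},n^{-3},n^{-4}$, so it suffices to show each $\zeta_c = O(\|\Sigma\|_F^4)$; the leading $\zeta_1/n$ contribution is then $O(\|\Sigma\|_F^4/n) = o(\|\Sigma\|_F^4)$, and ratio consistency follows by Chebyshev.

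The heart of the proof is the bound on $\zeta_1$. Conditioning on $X_1=x$ and using independence,
\[
h_1(x) := E[h(x,X_2,X_3,X_4)] = 2(x-\mu)^\top\Sigma(x-\mu) + 2\|\Sigma\|_F^2,
\]
so $\zeta_1 = 4\,\mathrm{Var}(Y^\top\Sigma Y)$ with $Y = X_1-\mu$. Expanding this quadratic-form variance via the standard identity yields
\[
\mathrm{Var}(Y^\top\Sigma Y) = 2\,\mathrm{tr}(\Sigma^4) + \sum_{i,j,k,l=1}^{p}\Sigma_{ij}\Sigma_{kl}\,cum(Y_i,Y_j,Y_k,Y_l).
\]
Assumption 1 makes the first term $o(\|\Sigma\|_F^4)$, while Cauchy--Schwarz against the fourth-cumulant tensor gives
\[
\Bigl|\sum_{i,j,k,l}\Sigma_{ij}\Sigma_{kl}\,cum(Y_i,Y_j,Y_k,Y_l)\Bigr| \le \|\Sigma\|_F^2\,\bigl(Cum(4)\bigr)^{1/2} \le C^{1/2}\|\Sigma\|_F^4,
\]
using exactly the $h=4$ instance of Assumption 2 appearing in the hypothesis. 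Hence $\zeta_1 = O(\|\Sigma\|_F^4)$, and the higher projections $\zeta_2,\zeta_3,\zeta_4$ are handled by the same kind of computation; for instance, $h_2(x_1,x_2)=2(x_1-x_2)^\top\Sigma(x_1-x_2)$ is again a quadratic form whose variance is controlled by the identical argument, and each $\zeta_c$ $(c\ge 2)$ is further damped by a strictly faster power of $1/n$.

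The main obstacle is the $\zeta_1$ step itself: the variance of $Y^\top\Sigma Y$ cannot generally be driven below $\|\Sigma\|_F^4$, so the argument depends crucially on the separation into a Gaussian-like piece tamed by Assumption 1 and a 4-th cumulant piece tamed by Assumption 2 via the Frobenius/Cauchy--Schwarz pairing above. Everything else---the expectation computation, the combinatorial prefactor asymptotics, and the standard symmetrization used to apply the Hoeffding decomposition to the partially symmetric kernel---is routine bookkeeping and introduces no further hypothesis beyond the stated ones.
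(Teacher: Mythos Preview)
Your argument is correct, but it proceeds along a genuinely different line from the paper. The paper never invokes the Hoeffding decomposition: instead it expands $((X_{j_1}-X_{j_2})^\top(X_{j_3}-X_{j_4}))^2$ into ten cross-product sums $I_{n,1},\dots,I_{n,10}$ and shows, by a direct second-moment computation with cumulant expansions, that $4I_{n,i}/\|\Sigma\|_F^2\to_p 1$ for $i\le 4$ and $I_{n,i}/\|\Sigma\|_F^2\to_p 0$ for $i\ge 5$. Your route is more structured: the quadratic-form identity for $\mathrm{Var}(Y^\top\Sigma Y)$ isolates exactly where Assumption~1 (the $\mathrm{tr}(\Sigma^4)$ term) and the $Cum(4)$ bound enter, and the $1/n$ prefactor from the $c=1$ term immediately gives ratio consistency. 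The trade-off is that the paper's ten-term expansion, while longer, is entirely elementary, whereas your sketch leaves the $\zeta_c$ for $c\ge 2$ as ``routine''; to make that step airtight you could simply invoke the monotonicity $\zeta_c\le \zeta_4$ and bound $\zeta_4\le \E[h^2]=\sum_{i,j,k,l}(\E[U_iU_jU_kU_l])^2$ with $U=Y_1-Y_2$, which is again $O(\|\Sigma\|_F^4)$ by the same pairing-plus-$Cum(4)$ argument. One small caveat: the $h_2$ you wrote down is the conditional expectation for positions $(1,2)$ of the unsymmetrized kernel; the symmetrized second projection also picks up terms like $((x_1-\mu)^\top(x_3-\mu))^2+\|\Sigma\|_F^2$ from the $(1,3)$-type conditioning, though their variances are bounded the same way.
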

	 
	Now we extend this idea to general $q\in 2\N$. We let
	$$\widehat{\|\Sigma\|_q^q} = \frac{1}{2^q{n \choose 2q}}\sum_{l_1,l_2 = 1}^p\sum_{1 \leq i_1 < \cdots < i_q < j_1 < \cdots < j_q \leq n} \prod_{k = 1}^q (X_{i_k,l_1} - X_{j_k,l_1})(X_{i_k,l_2} - X_{j_k,l_2}),$$
	as an estimator for $\|\Sigma\|_q^q$, for any finite positive even number $q$. We can see that the proposed estimator is unbiased through the following proposition.
	
	\begin{proposition} \label{prop:unbias}
	$\widehat{\|\Sigma\|_q^q}$ is an unbiased estimator of $\|\Sigma\|_q^q$.
	\end{proposition}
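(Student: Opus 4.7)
The plan is to compute $\E[\widehat{\|\Sigma\|_q^q}]$ directly by linearity of expectation. The single structural fact that carries the whole argument is that the strict ordering $i_1 < i_2 < \cdots < i_q < j_1 < \cdots < j_q$ makes all $2q$ time indices pairwise distinct, so that $X_{i_1}, X_{j_1}, \ldots, X_{i_q}, X_{j_q}$ are mutually independent (since the $X_t$ are i.i.d.\ across $t$). Within a fixed $k$, however, the same pair $(X_{i_k}, X_{j_k})$ appears twice -- once coupled with coordinate $l_1$ and once with $l_2$ -- so those two factors are \emph{not} independent and their joint expectation must be handled as a single object.

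The key step is to compute, for each fixed $k$, the expectation
\[
\E\left[(X_{i_k,l_1} - X_{j_k,l_1})(X_{i_k,l_2} - X_{j_k,l_2})\right].
\]
Writing $Y_t = X_t - \bm\mu$, expanding the product into its four terms, and using $\E[Y_{t,l_1}Y_{t,l_2}] = \Sigma_{l_1,l_2}$ together with the independence of $X_{i_k}$ and $X_{j_k}$ (which kills the two cross terms with $i_k\neq j_k$), this collapses to $2\Sigma_{l_1,l_2}$. The mean $\bm\mu$ drops out entirely because each factor is a difference of two copies of $X$.

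Next, I would use independence across $k$ to factor the full expectation:
\[
\E\left[\prod_{k=1}^q (X_{i_k,l_1} - X_{j_k,l_1})(X_{i_k,l_2} - X_{j_k,l_2})\right] = (2\Sigma_{l_1,l_2})^q = 2^q\,\Sigma_{l_1,l_2}^q.
\]
The number of admissible tuples $1 \le i_1 < \cdots < i_q < j_1 < \cdots < j_q \le n$ is exactly $\binom{n}{2q}$, since choosing an unordered set of $2q$ distinct time points uniquely determines the tuple (the smallest $q$ are the $i$'s, the largest $q$ are the $j$'s). Summing over $l_1, l_2 \in \{1,\ldots,p\}$ then yields $2^q \sum_{l_1,l_2}\Sigma_{l_1,l_2}^q = 2^q\|\Sigma\|_q^q$, using that $q$ is even so that the entrywise $q$-th power agrees with the $q$-th power of the absolute value. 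Dividing by the normalizing constant $2^q\binom{n}{2q}$ delivers $\|\Sigma\|_q^q$.

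There is no substantive obstacle: the proof is essentially a clean expectation calculation whose structure is dictated by the way the estimator was built. The only places requiring any care are the four-term expansion that collapses to $2\Sigma_{l_1,l_2}$ and the combinatorial count of ordered index tuples, both routine.
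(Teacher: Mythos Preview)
Your proposal is correct and follows essentially the same approach as the paper's own proof: factor the expectation across $k$ using the independence guaranteed by the distinct time indices, compute each factor as $2\Sigma_{l_1,l_2}$, and then count the $\binom{n}{2q}$ index tuples. Your write-up is in fact more detailed than the paper's (e.g., the explicit centering $Y_t = X_t - \bm\mu$ and the remark that $q$ even ensures $\Sigma_{l_1,l_2}^q = |\Sigma_{l_1,l_2}|^q$), but the argument is the same.
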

	
	\begin{proof}[Proof of Proposition \ref{prop:unbias}]
	Since $\{X_t\}_{t = 1}^n$ are i.i.d., 
	\begin{align*}
	    \E[\widehat{\|\Sigma\|_q^q}] &= \frac{1}{2^q{n \choose 2q}}\sum_{l_1,l_2 = 1}^p\sum_{1 \leq i_1 < \cdots < i_q < j_1 < \cdots < j_q \leq n} \prod_{k = 1}^q \E[(X_{i_k,l_1} - X_{j_k,l_1})(X_{i_k,l_2} - X_{j_k,l_2})]\\
	    &= \frac{1}{2^q{n \choose 2q}}\sum_{l_1,l_2 = 1}^p\sum_{1 \leq i_1 < \cdots < i_q < j_1 < \cdots < j_q \leq n} \prod_{k = 1}^q (2\Sigma_{l_1,l_2})\\
	    & = \frac{1}{2^q{n \choose 2q}}\sum_{l_1,l_2 = 1}^p{n \choose 2q}2^q\Sigma_{l_1,l_2}^q = \|\Sigma\|_q^q.
	\end{align*}
	This completes the proof.
	\end{proof}

	The ratio consistency can be shown under the following assumption.
	\begin{assumption} {\label{ass:lq}}
	We assume that 
	\begin{enumerate}
	\item there exists $c > 0$ such that $\inf_{i = 1,...,p}\Sigma_{i,i} > c$;
	    \item there exists $C > 0$ and $r > 2$ such that for $h = 2,3,4$ and $1 \leq l_1 \leq \cdots \leq l_h \leq p$,
	    $$|cum(X_{0,l_1},...,X_{0,l_h})| \leq C(1 \vee (l_h - l_1))^{-r}.$$
	\end{enumerate}
	\end{assumption}
	Notice that Assumption \ref{ass:lq}(2) is required for the ratio consistency, which is weaker than Assumption 4. \change{The  Assumptions 1-5 required for our theory do not state the explicit relationship between $n$ and $p$. For example, when $\Sigma = I_p$, which means there is no cross-sectional dependence, all the assumptions are satisfied and $(n,p)$ can go to infinity freely without any restrictions. When there is cross-sectional dependence, our assumptions may implicitly restrict the relative scale of $n$ and $p$. In general, larger $p$ is a blessing in our setting and it makes the asymptotic approximation more accurate and larger $n$ is always preferred owing to large sample approximation. On the other hand, the computational cost increase when both the dimension and the sample size get large. }
	\begin{theorem} \label{thm:lq}
	Under Assumption~\ref{ass:lq}, $\widehat{\|\Sigma\|_q^q}$ is a ratio-consistent estimator of $\|\Sigma\|_q^{q}$, i.e., $\widehat{\|\Sigma\|_q^q}/\|\Sigma\|_q^q \xrightarrow{p} 1$.
	\end{theorem}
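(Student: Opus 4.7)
The plan is to deduce ratio consistency from Proposition~\ref{prop:unbias} together with a variance bound $\mathrm{Var}(\widehat{\|\Sigma\|_q^q}) = o(\|\Sigma\|_q^{2q})$, followed by Chebyshev's inequality. A useful preliminary reduction uses Assumption~\ref{ass:lq}(1): since $\Sigma_{i,i}\ge c>0$ gives $\|\Sigma\|_q^q\ge c^q p$, the target variance bound will follow from $\mathrm{Var}(\widehat{\|\Sigma\|_q^q})=o(p^2)$.

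Writing $h_{l_1,l_2}(X_I)=\prod_{k=1}^q(X_{i_k,l_1}-X_{j_k,l_1})(X_{i_k,l_2}-X_{j_k,l_2})$ for a $(2q)$-subset $I=\{i_1<\cdots<i_q<j_1<\cdots<j_q\}$, I would expand
\[
\mathrm{Var}(\widehat{\|\Sigma\|_q^q}) = \frac{1}{4^q\binom{n}{2q}^2}\sum_{l_1,l_2,m_1,m_2=1}^p\sum_{I,I'}\mathrm{Cov}\!\bigl(h_{l_1,l_2}(X_I),\,h_{m_1,m_2}(X_{I'})\bigr)
\]
and partition the $(I,I')$ pairs by their overlap $c=|I\cap I'|\in\{0,1,\ldots,2q\}$. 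The $c=0$ contribution vanishes by independence of the $X_t$'s, and for $c\ge 1$ the number of admissible $(I,I')$ pairs is $O(n^{4q-c})$, which after division by $\binom{n}{2q}^2\asymp n^{4q}$ yields a prefactor of order $n^{-c}$ multiplying a uniform bound on the covariance sum $\sum_{l_1,l_2,m_1,m_2}|\mathrm{Cov}(h_{l_1,l_2}(X_I),h_{m_1,m_2}(X_{I'}))|$.

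The core analytic step is to show this covariance sum is $O(p^2)$ under Assumption~\ref{ass:lq}(2). I would expand each kernel into a signed sum of monomials in the entries $X_{t,l_1},X_{t,l_2}$ and invoke the moment-to-cumulant (Leonov-Shiryaev) identity to represent the expectation of each product of centered factors as a sum over set-partitions of the factor labels, with a joint cumulant on each block. Because the $X_t$'s are independent across $t$, any contributing block must share a common time index, and each time index contributes at most two factors from a single kernel; hence every block has size at most four, so only the cumulant bounds in Assumption~\ref{ass:lq}(2) for $h\in\{2,3,4\}$ are needed. The decay rate $r>2$ in that assumption together with the diagonal lower bound in Assumption~\ref{ass:lq}(1) for singleton/variance blocks then yields the desired $O(p^2)$ bound on the covariance sum; combined with the $O(n^{-c})$ overlap prefactor for $c\ge 1$ this gives $\mathrm{Var}(\widehat{\|\Sigma\|_q^q})=O(p^2/n)=o(p^2)$.

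The main obstacle will be the combinatorial bookkeeping in the Leonov-Shiryaev expansion: one must enumerate the matching patterns between $I$ and $I'$ that generate a nontrivial covariance, count the surviving cumulant partitions for each pattern, and verify that each configuration respects the decay condition so that the sums over $l_1,l_2,m_1,m_2$ are controlled (with the crudest $h=4$ contribution worst-case of order $p^{4-r}\le p^2$ that still yields $o(\|\Sigma\|_q^{2q})$). A cleaner alternative would be the Hoeffding decomposition of the symmetrized $U$-statistic with separate control of the projection variances $\sigma_c^2$, in which case the dominant $c=1$ projection would still require the same cumulant-decay estimate to conclude $\sigma_1^2=O(p^2)$.
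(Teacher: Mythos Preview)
Your outline is correct and will work, but it is organized differently from the paper's argument. The paper does not compute $\mathrm{Var}(\widehat{\|\Sigma\|_q^q})$ directly via overlap-counting on $(I,I')$ pairs; instead it first expands each factor $(X_{i_k,l_1}-X_{j_k,l_1})(X_{i_k,l_2}-X_{j_k,l_2})$ into four monomials and separates the resulting $4^q$ terms into ``diagonal'' pieces $I(t_1,\ldots,t_q)$ (all $t_k=s_k$) and ``off-diagonal'' pieces $J(t_1,s_1,\ldots,t_q,s_q)$ (at least one $t_k\neq s_k$), then proves $I\to_p 1$ and $J\to_p 0$ by separate second-moment computations. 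Both routes rest on the same analytical core you identify---the Leonov--Shiryaev expansion, the block-size-$\le 4$ observation, and the cumulant decay in Assumption~\ref{ass:lq}(2) together with $\|\Sigma\|_q^q\asymp p$---so neither is intrinsically shorter. The paper's $I$/$J$ split does make one structural point more transparent: in every $J$-term there is a pair $(t_{k_0},s_{k_0})$ with $t_{k_0}\neq s_{k_0}$, which forces the cumulant block containing $(t_{k_0},l_1)$ to bridge to a spatial index from the second copy (and similarly for $(s_{k_0}',l_4)$), yielding a product bound $(1\vee|l_2-l_1|)^{-r}(1\vee|l_4-l_3|)^{-r}$ that sums to $O(p^2)$ over the four spatial indices. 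In your framework exactly this structure is buried inside ``verify that each configuration respects the decay condition'', and it is the step that will carry essentially all of the work; your Hoeffding-decomposition alternative would not avoid it either.
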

	It is worth noting that implementing the above estimator may be time-consuming for large $q$. In practice, we can always take a random sample of all possible indices and form an incomplete U-statistic to approximate. The consistency of incomplete U-statistic can also be established but not pursued for simplicity. 

\section{Simulation Results}

We compare the monitoring procedures for $q = 2, q = 6$ and $q = (2,6)$ combined. We consider $(n,p) = (100,50)$ with $T = 2$, where the observations $X_i\sim N(\mu_i, \Sigma)$ are generated independently \change{over time}. We consider four possible choices of $\Sigma$, 
\[
    \Sigma_{ij} =  \rho^{|i-j|} \text{ for $\rho =0, 0.2,0.5,0.8$}, 
\]
to evaluate the performance of the monitoring scheme for independent\change{-components} setting or under weak and strong dependence between components. All tests have nominal size $\alpha = 0.1$. 

Under the null $H_0$, there is no change point, $\mu_i = 0$ for all $i$. For the alternative, we consider $\mu_i = \sqrt{\delta/r}(\bm 1_r, \bm 0_{p-r})$ for $i = (\floor{1.25n}+1), \ldots, nT$. Under the dense alternative, we set $(\delta, r) = (1,p), (2,p)$.   Under the sparse alternative, we set $(\delta,r) = (1,3),(1,1)$.

To illustrate the finite sample performance of our monitoring statistics, we compare with \cite{mei:10} (denoted as Mei) and \cite{liu:19} (denoted as LZM), which are similar to the open-end scenario in 
\cite{chu:96}. Both methods do not require Phase I data and are originally designed to minimize the average run length. Therefore, they do not explicitly control the Type-I error. To make a fair comparison with the current methods, which are proposed under the closed-end monitoring framework, we generate $n$ independent Gaussian sample from $N(\bm 0, \bm I_{p\times p})$ and calculate Mei and LZM monitoring statistics. We empirically determine the critical value such that the empirical rejection rate is $10\%$ based on $2500$ simulated datasets. For Mei's methods, we need to specify the distribution after the change point, which we set it to be the distribution under the alternative $(\delta, r) = (1,p)$. For LZM's method, we use the same setting in \cite{liu:19} and set $b = \log(10), \rho = 0.25, t = 4$ and $s = 1$.

Table \ref{tab:size} shows the size of the monitoring procedure for the benchmark methods and the proposed methods for three different boundary functions T1, T2, T3 introduced in Section 2.1  under different correlation coefficients $\rho$. Notice that the size is noticeably worse for $\rho = 0.8$. This is partially due to the poor performance in the ratio-consistent estimator since its variance increases as the \change{cross-sectional} dependence increases. Also, please note that the size seems to go in different directions for $q = 2$ and $q=6$ as the correlation increases. The combined test, on the other hand, balances out such distortions. To make sure this is only a finite sample behavior, we increase $(n,p)$ from $(100,50)$ to $(200,200)$, the size distortion for all tests improved noticeably for almost all settings. The additional results are available in the Supplementary Materials. By contrast, Mei and LZM only achieved correct size for the independent\change{-component} case, since we select the threshold under the exact same setting. However, when there is \change{cross-sectional} dependence between different data streams, the size is no longer controlled and the size distortion is much more severe than the $L_q$ based tests. 

\begin{table}[!h]
\centering
\caption{Size of different monitoring procedures}
\label{tab:size}
\setlength{\tabcolsep}{1.5pt}
\begin{tabular}{lrrrrrrrrrrr}
\hline
 & \multicolumn{1}{c}{}    & \multicolumn{1}{c}{}    & \multicolumn{3}{c}{T1}                                                     & \multicolumn{3}{c}{T2}                                                     & \multicolumn{3}{c}{T3}                                                     \\ \cline{4-12} 
$\alpha = 0.1$     & \multicolumn{1}{c}{Mei} & \multicolumn{1}{c}{LZM} & \multicolumn{1}{c}{$L_2$} & \multicolumn{1}{c}{$L_6$} & \multicolumn{1}{c}{Comb} & \multicolumn{1}{c}{$L_2$} & \multicolumn{1}{c}{$L_6$} & \multicolumn{1}{c}{Comb} & \multicolumn{1}{c}{$L_2$} & \multicolumn{1}{c}{$L_6$} & \multicolumn{1}{c}{Comb} \\ \hline
$\rho=0$           & 0.094                   & 0.105                   & 0.086                  & 0.048                  & 0.067                    & 0.093                  & 0.045                  & 0.071                    & 0.097                  & 0.045                  & 0.070                    \\
$\rho=0.2$         & 0.058                   & 0.125                   & 0.083                  & 0.048                  & 0.057                    & 0.082                  & 0.045                  & 0.055                    & 0.083                  & 0.046                  & 0.051                    \\
$\rho=0.5$         & 0.002                   & 0.176                   & 0.103                  & 0.048                  & 0.084                    & 0.104                  & 0.048                  & 0.082                    & 0.108                  & 0.048                  & 0.080                    \\
$\rho=0.8$         & 0.000                   & 0.409                   & 0.135                  & 0.028                  & 0.085                    & 0.145                  & 0.027                  & 0.093                    & 0.137                  & 0.026                  & 0.086                    \\ \hline
\end{tabular}
\end{table}

Table \ref{tab:power_dense} provides the power result (left column) and ADT (right column) for different tests under dense alternatives. As expected, the $L_2$-based test demonstrates higher power compared to that of the $L_6$-based test. The power of the combined test falls in between and is closer to the power of $L_2$-based test. As the correlation increases, the powers of all  tests decrease due to reduced signal. Among three different boundary functions, T2 seems to be the one with the shortest average delay time (ADT) with a slight sacrifice in power. Mei's method is only better than the $L_6$ based test when there is no strong \change{cross-sectional} dependence, and is generally worse than all other methods and have relatively longer delay even when the distribution under the alternative is correctly specified. Notice that when $\rho = 0.8$, Mei's method lost the power completely. LZM in general has the slightly shorter detection delay but at the cost of a much lower power compared with $L_2$ based test and the combined test. This means the LZM is quicker in signaling an alarm when a change point is detected. Although LZM showed good power for the strong \change{cross-sectional} dependence case compared with the combined test, it comes at the price of much distorted size. This is related to the fact that LZM assumes all data streams are independent.
\begin{table}[]
\centering
\caption{Power under dense alternatives}
\label{tab:power_dense}
\setlength{\tabcolsep}{1.5pt}
\renewcommand{\arraystretch}{0.7}
\begin{tabular}{cc|rr|rr|c|rr|rr|rr}
\hline
Power                         &                          & \multicolumn{2}{c|}{Mei}                             & \multicolumn{2}{c|}{LZM}                             &        & \multicolumn{2}{c|}{$L_2$}                           & \multicolumn{2}{c|}{$L_6$}                           & \multicolumn{2}{c}{Comb}                           \\ \cline{3-13} 
$\alpha = 0.1$                & $(\delta,r)$             & \multicolumn{1}{c}{power} & \multicolumn{1}{c|}{ADT} & \multicolumn{1}{c}{power} & \multicolumn{1}{c|}{ADT} & $w(t)$ & \multicolumn{1}{c}{power} & \multicolumn{1}{c|}{ADT} & \multicolumn{1}{c}{power} & \multicolumn{1}{c|}{ADT} & \multicolumn{1}{c}{power} & \multicolumn{1}{c}{ADT} \\ \hline
\multirow{6}{*}{$\rho = 0.0$} & \multirow{3}{*}{$(1,p)$} & \multirow{3}{*}{0.852}    & \multirow{3}{*}{72.9}    & \multirow{3}{*}{0.628}    & \multirow{3}{*}{38.0}    & T1     & 0.958                     & 51.9                     & 0.295                     & 64.6                     & 0.926                     & 55.0                    \\
                              &                          &                           &                          &                           &                          & T2     & 0.951                     & 44.3                     & 0.284                     & 63.0                     & 0.921                     & 47.7                    \\
                              &                          &                           &                          &                           &                          & T3     & 0.953                     & 46.8                     & 0.286                     & 63.4                     & 0.921                     & 50.2                    \\ \cline{2-13} 
                              & \multirow{3}{*}{$(2,p)$} & \multirow{3}{*}{0.999}    & \multirow{3}{*}{69.3}    & \multirow{3}{*}{1.000}    & \multirow{3}{*}{15.1}    & T1     & 1.000                     & 27.5                     & 0.919                     & 56.2                     & 1.000                     & 29.5                    \\
                              &                          &                           &                          &                           &                          & T2     & 1.000                     & 20.4                     & 0.919                     & 54.3                     & 1.000                     & 21.9                    \\
                              &                          &                           &                          &                           &                          & T3     & 1.000                     & 22.9                     & 0.920                     & 54.9                     & 1.000                     & 24.7                    \\ \hline
\multirow{6}{*}{$\rho = 0.2$} & \multirow{3}{*}{$(1,p)$} & \multirow{3}{*}{0.740}    & \multirow{3}{*}{73.3}    & \multirow{3}{*}{0.675}    & \multirow{3}{*}{38.2}    & T1     & 0.935                     & 51.8                     & 0.302                     & 64.4                     & 0.907                     & 54.9                    \\
                              &                          &                           &                          &                           &                          & T2     & 0.930                     & 44.2                     & 0.291                     & 62.9                     & 0.906                     & 47.7                    \\
                              &                          &                           &                          &                           &                          & T3     & 0.933                     & 46.7                     & 0.294                     & 63.5                     & 0.903                     & 50.3                    \\ \cline{2-13} 
                              & \multirow{3}{*}{$(2,p)$} & \multirow{3}{*}{1.000}    & \multirow{3}{*}{69.9}    & \multirow{3}{*}{1.000}    & \multirow{3}{*}{15.6}    & T1     & 1.000                     & 28.0                     & 0.884                     & 56.6                     & 1.000                     & 30.0                    \\
                              &                          &                           &                          &                           &                          & T2     & 1.000                     & 20.8                     & 0.884                     & 54.8                     & 1.000                     & 22.3                    \\
                              &                          &                           &                          &                           &                          & T3     & 1.000                     & 23.4                     & 0.883                     & 55.3                     & 1.000                     & 25.2                    \\ \hline
\multirow{6}{*}{$\rho = 0.5$} & \multirow{3}{*}{$(1,p)$} & \multirow{3}{*}{0.243}    & \multirow{3}{*}{74.1}    & \multirow{3}{*}{0.715}    & \multirow{3}{*}{34.3}    & T1     & 0.844                     & 52.9                     & 0.274                     & 63.3                     & 0.796                     & 55.8                    \\
                              &                          &                           &                          &                           &                          & T2     & 0.843                     & 45.2                     & 0.267                     & 61.5                     & 0.787                     & 47.9                    \\
                              &                          &                           &                          &                           &                          & T3     & 0.847                     & 47.9                     & 0.267                     & 62.0                     & 0.792                     & 50.7                    \\ \cline{2-13} 
                              & \multirow{3}{*}{$(2,p)$} & \multirow{3}{*}{0.932}    & \multirow{3}{*}{72.2}    & \multirow{3}{*}{1.000}    & \multirow{3}{*}{15.7}    & T1     & 1.000                     & 30.7                     & 0.864                     & 55.9                     & 1.000                     & 33.0                    \\
                              &                          &                           &                          &                           &                          & T2     & 1.000                     & 23.1                     & 0.861                     & 54.2                     & 1.000                     & 24.8                    \\
                              &                          &                           &                          &                           &                          & T3     & 1.000                     & 25.7                     & 0.861                     & 54.8                     & 1.000                     & 27.8                    \\ \hline
\multirow{6}{*}{$\rho = 0.8$} & \multirow{3}{*}{$(1,p)$} & \multirow{3}{*}{0.000}    & \multirow{3}{*}{NA}      & \multirow{3}{*}{0.803}    & \multirow{3}{*}{29.0}    & T1     & 0.632                     & 54.6                     & 0.165                     & 62.5                     & 0.560                     & 56.8                    \\
                              &                          &                           &                          &                           &                          & T2     & 0.637                     & 46.4                     & 0.162                     & 60.9                     & 0.575                     & 48.6                    \\
                              &                          &                           &                          &                           &                          & T3     & 0.642                     & 49.4                     & 0.162                     & 61.4                     & 0.568                     & 51.8                    \\ \cline{2-13} 
                              & \multirow{3}{*}{$(2,p)$} & \multirow{3}{*}{0.001}    & \multirow{3}{*}{74.0}    & \multirow{3}{*}{0.997}    & \multirow{3}{*}{16.1}    & T1     & 0.990                     & 38.3                     & 0.666                     & 56.0                     & 0.984                     & 40.8                    \\
                              &                          &                           &                          &                           &                          & T2     & 0.990                     & 30.1                     & 0.663                     & 54.2                     & 0.983                     & 32.1                    \\
                              &                          &                           &                          &                           &                          & T3     & 0.990                     & 32.7                     & 0.666                     & 54.9                     & 0.983                     & 35.4                    \\ \hline
\end{tabular}
\end{table}

Table \ref{tab:power_sparse} provides power of different tests under sparse alternatives.  The $L_6$-based test and the combined test are very comparable in power and $L_2$-based test exhibits inferior power in most settings as expected.  One interesting observation is that for the case $(\delta, r) = (1,3)$, the $L_2$-based test still shows slightly higher power than the $L_6$-based test when $\rho = 0.2$, which means that for this particular setting, the change is not ``sparse'' enough. As the correlation increases, we observe a noticeable drop in power, which is  similar to the dense alternative setting and is again attributed to the reduced signal size. Among three different boundary functions, T2 still has shortest average delay time with a slight power loss compared to other two boundary functions. Mei's method has worse power since it is designed for the dense signals and the distribution under the alternative is misspecified. By comparison, LZM gives consistently good power and short delay time across all settings. However, the good power under strong  \change{cross-sectional} dependence is still offset by the severe size distortion under the null.

Apart from evaluating the size and power of the monitoring procedure, we also compare the computational cost of the recursive formulation versus the brute force approach. For the case of $(n,p)=(100,50)$, the average run-time of the brute force approach is 12.89 times of the recursive algorithm  under $H_0$,  and  the average run-time of the  brute force approach is 13.07 times of that for the recursive algorithm under the alternative. The codes are implemented in R. This demonstrates the substantial efficiency gain from the recursive  computational algorithm. 
\begin{table}[]
\caption{Power under sparse alternatives}
\label{tab:power_sparse}
\setlength{\tabcolsep}{1.5pt}
\renewcommand{\arraystretch}{0.7}
\begin{tabular}{cc|rr|rr|c|rr|rr|rr}
\hline
Power                         &                          & \multicolumn{2}{c|}{Mei}                             & \multicolumn{2}{c|}{LZM}                             &        & \multicolumn{2}{c|}{$L_2$}                           & \multicolumn{2}{c|}{$L_6$}                           & \multicolumn{2}{c}{Comb}                            \\ \cline{3-13} 
$\alpha = 0.1$                & $(\delta,r)$             & \multicolumn{1}{c}{power} & \multicolumn{1}{c|}{ADT} & \multicolumn{1}{c}{power} & \multicolumn{1}{c|}{ADT} & $w(t)$ & \multicolumn{1}{c}{power} & \multicolumn{1}{c|}{ADT} & \multicolumn{1}{c}{power} & \multicolumn{1}{c|}{ADT} & \multicolumn{1}{c}{power} & \multicolumn{1}{c}{ADT} \\ \hline
\multirow{6}{*}{$\rho = 0.0$} & \multirow{3}{*}{$(1,3)$} & \multirow{3}{*}{0.422}    & \multirow{3}{*}{74.0}    & \multirow{3}{*}{0.990}    & \multirow{3}{*}{27.4}    & T1     & 0.976                     & 51.5                     & 0.999                     & 37.8                     & 0.999                     & 40.5                    \\
                              &                          &                           &                          &                           &                          & T2     & 0.967                     & 43.8                     & 0.999                     & 35.9                     & 0.999                     & 38.0                    \\
                              &                          &                           &                          &                           &                          & T3     & 0.972                     & 46.4                     & 0.999                     & 36.6                     & 0.999                     & 39.0                    \\ \cline{2-13} 
                              & \multirow{3}{*}{$(1,1)$} & \multirow{3}{*}{0.400}    & \multirow{3}{*}{73.9}    & \multirow{3}{*}{1.000}    & \multirow{3}{*}{23.4}    & T1     & 0.961                     & 51.5                     & 0.951                     & 51.0                     & 0.976                     & 52.2                    \\
                              &                          &                           &                          &                           &                          & T2     & 0.958                     & 44.1                     & 0.953                     & 49.5                     & 0.974                     & 46.3                    \\
                              &                          &                           &                          &                           &                          & T3     & 0.959                     & 46.4                     & 0.953                     & 50.0                     & 0.976                     & 48.7                    \\ \hline
\multirow{6}{*}{$\rho = 0.2$} & \multirow{3}{*}{$(1,3)$} & \multirow{3}{*}{0.274}    & \multirow{3}{*}{74.1}    & \multirow{3}{*}{0.990}    & \multirow{3}{*}{29.1}    & T1     & 0.946                     & 52.2                     & 0.937                     & 51.6                     & 0.955                     & 52.6                    \\
                              &                          &                           &                          &                           &                          & T2     & 0.939                     & 44.6                     & 0.935                     & 50.0                     & 0.955                     & 47.1                    \\
                              &                          &                           &                          &                           &                          & T3     & 0.943                     & 47.1                     & 0.936                     & 50.5                     & 0.954                     & 49.2                    \\ \cline{2-13} 
                              & \multirow{3}{*}{$(1,1)$} & \multirow{3}{*}{0.268}    & \multirow{3}{*}{74.1}    & \multirow{3}{*}{1.000}    & \multirow{3}{*}{23.9}    & T1     & 0.961                     & 52.6                     & 0.998                     & 37.3                     & 0.999                     & 40.2                    \\
                              &                          &                           &                          &                           &                          & T2     & 0.951                     & 45.3                     & 0.998                     & 35.4                     & 0.999                     & 37.7                    \\
                              &                          &                           &                          &                           &                          & T3     & 0.957                     & 47.6                     & 0.998                     & 36.0                     & 0.999                     & 38.6                    \\ \hline
\multirow{6}{*}{$\rho = 0.5$} & \multirow{3}{*}{$(1,3)$} & \multirow{3}{*}{0.048}    & \multirow{3}{*}{74.5}    & \multirow{3}{*}{0.972}    & \multirow{3}{*}{28.2}    & T1     & 0.871                     & 54.7                     & 0.881                     & 51.5                     & 0.887                     & 53.4                    \\
                              &                          &                           &                          &                           &                          & T2     & 0.856                     & 47.1                     & 0.878                     & 49.8                     & 0.884                     & 48.7                    \\
                              &                          &                           &                          &                           &                          & T3     & 0.860                     & 49.9                     & 0.880                     & 50.4                     & 0.886                     & 50.6                    \\ \cline{2-13} 
                              & \multirow{3}{*}{$(1,1)$} & \multirow{3}{*}{0.036}    & \multirow{3}{*}{74.3}    & \multirow{3}{*}{1.000}    & \multirow{3}{*}{23.2}    & T1     & 0.880                     & 55.9                     & 0.997                     & 38.0                     & 0.997                     & 40.7                    \\
                              &                          &                           &                          &                           &                          & T2     & 0.871                     & 49.1                     & 0.997                     & 36.1                     & 0.997                     & 38.2                    \\
                              &                          &                           &                          &                           &                          & T3     & 0.879                     & 51.2                     & 0.997                     & 36.8                     & 0.997                     & 39.2                    \\ \hline
\multirow{6}{*}{$\rho = 0.8$} & \multirow{3}{*}{$(1,3)$} & \multirow{3}{*}{0.000}    & \multirow{3}{*}{NA}      & \multirow{3}{*}{0.971}    & \multirow{3}{*}{24.7}    & T1     & 0.621                     & 58.9                     & 0.800                     & 52.9                     & 0.808                     & 55.3                    \\
                              &                          &                           &                          &                           &                          & T2     & 0.610                     & 50.6                     & 0.801                     & 51.3                     & 0.802                     & 51.5                    \\
                              &                          &                           &                          &                           &                          & T3     & 0.614                     & 53.7                     & 0.803                     & 51.9                     & 0.807                     & 53.2                    \\ \cline{2-13} 
                              & \multirow{3}{*}{$(1,1)$} & \multirow{3}{*}{0.000}    & \multirow{3}{*}{NA}      & \multirow{3}{*}{1.000}    & \multirow{3}{*}{21.5}    & T1     & 0.602                     & 61.1                     & 0.998                     & 38.8                     & 0.997                     & 41.7                    \\
                              &                          &                           &                          &                           &                          & T2     & 0.588                     & 53.6                     & 0.998                     & 36.8                     & 0.997                     & 39.3                    \\
                              &                          &                           &                          &                           &                          & T3     & 0.601                     & 56.8                     & 0.998                     & 37.5                     & 0.997                     & 40.2                    \\ \hline
\end{tabular}
\end{table}
\section{Data Illustration}

\subsection{Tonnage dataset}

We first propose to apply the proposed methodology to monitor the multi-channel tonnage profile collected in a forging process in \citep{lei2010automatic}, where four different strain gauge sensors are mounted at each column of the forging machine, measuring the exerted force of the press. The setup of the process is shown in Figure \ref{Fig: rollingsetup}. The four strain gauge sensors represent the signature of the product and are used for process monitoring and change detection in \cite{lei2010automatic}. For example, 10 examples of the signals before the changes and after the changes are shown in Figure \ref{Fig: rollingexp}. As mentioned by \cite{lei2010automatic,yan2018real}, a missing part only affects a small region of the signals, which makes it very hard to detect, as shown in Figure \ref{Fig: rollingexp}. 

\begin{figure}
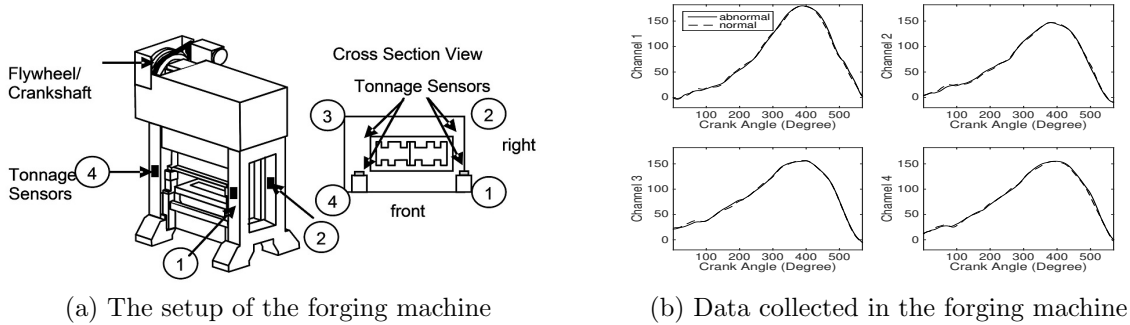

\centering
\subfloat[The setup of the forging machine]{\includegraphics[width=0.46\linewidth,height=4cm]{Stamping.pdf} \label{Fig: rollingsetup}}\quad\subfloat[Data collected in the forging machine]{\includegraphics[width=0.46\linewidth,height=4cm]{tonnageexample.pdf} \label{Fig: rollingexp}}
\caption{Forging machine setup and the collected tonnage dataset}
\end{figure}

We select a subset of the data with $n = 200$, where the first 130 observations are from the normal tonnage sample, and the samples after 130 are abnormal. We project the data to $20$-dimensional space by training an anomaly basis on a holdout sample as has been done in \cite{yan2018real}. The first 100 observations are treated as a Phase I stage without any changes and we learn the 2-norm and $q$-norm of the covariance matrix from them. The monitoring scheme started at observation 107 (trimming due to $q= 6$). The $L_6$-based test stopped at time 137, and estimated the possible change point location at time 128 by performing a retrospective test at time 137. The $L_2$ based test signaled slightly earlier at time 135 and also estimated the change at 128. The combined test signaled an alarm at time 135 with the same estimated location. The trajectory of the $L_2$ and the $L_6$ test statistics  are shown in Figure \ref{fig:tonnageL2} and \ref{fig:tonnageL6}, respectively. Notice that, when we set the size of individual test to be $\alpha^* = (1-0.1)^{1/2} = 5.13\%$, the size of the combined test will be $\alpha = 1-\alpha^{*2} = 0.1$. We signal an alarm when at least one test statistic exceeds the corresponding threshold.

\begin{figure}
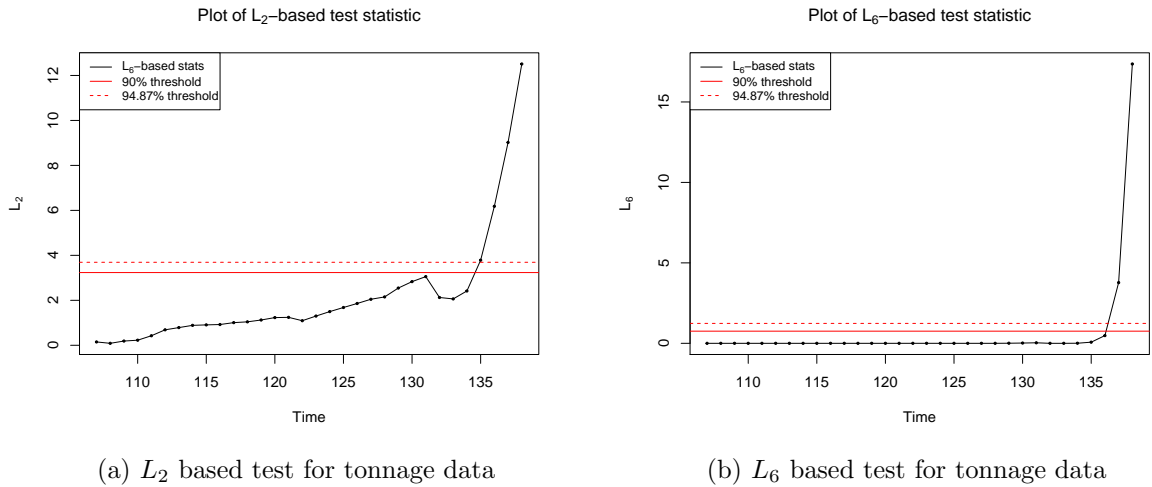

\centering
\subfloat[$L_2$ based test for tonnage data]{\includegraphics[width=0.46\linewidth]{tonnageL2.pdf} \label{fig:tonnageL2}}\quad\subfloat[$L_6$ based test for tonnage data]{\includegraphics[width=0.46\linewidth]{tonnageL6.pdf} \label{fig:tonnageL6}}
\caption{Testing Statistics for tonnage data}
\end{figure}

\subsection{Rolling dataset}

We then consider the process monitoring in a steel rolling manufacturing process. Surface defects such as seam defects can result in stress concentration on the bulk and may cause failures if the steel bar is used in the product. However, the rolling process is a high-speed process with the rolling bar moving around 200 miles per hour, providing real-time online anomaly detection for the high-speed rolling process is very important to prevent further the product damage. 

 The dataset is collected in such high-speed rolling process. Here, we have selected a segment near the end of the rolling bar, which  contains 100 images of the rolling process. To remove the trend, we have applied a smooth background remover and further downsample the image to only $16\times 64$ pixels. An example of the normal image and the abnormal image are shown in Figure \ref{fig:rolling1} and \ref{fig:rolling91}, respectively. 

\begin{figure}
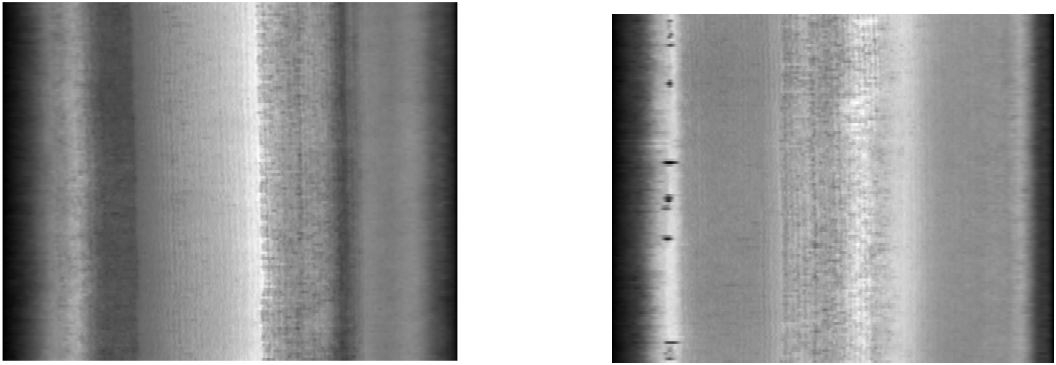

\centering
\subfloat[Normal rolling image]{\includegraphics[width=0.45\linewidth]{Rolling1.png} \label{fig:rolling1}}\quad\subfloat[Abnormal rolling image]{\includegraphics[width=0.46\linewidth]{rolling89.pdf} \label{fig:rolling91}}
\caption{Examples of the rolling images}
\end{figure}
  
We treated the first 50 observations as training set and obtained ratio-consistent estimators  $\widehat{\|\Sigma\|_q^q}$. After performing the change point monitoring procedure, the $L_6$-norm-based test signals an alarm at the time $97$ and estimated that the possible change point location is at time $89$ based on the retrospective test. On the other hand, the $L_2$ based test fails to detect the change within the finite time horizon. The combined test also signals the alarm at time 97. We present the rolling image at time 91 in Figure \ref{fig:rolling91}. This shows that after downsampling, the change is still quite sparse. The adaptive monitoring procedure is still powerful as long as one test has power. We also present the trajectory of the test statistic at each time point in Figure \ref{fig:rollingL2} and \ref{fig:rollingL6}. Notice that there is a downshift in the $L_2$-based monitoring statistic right after the estimated change. This is due to the fact that the signal is very sparse, and the construction of our proposed statistic may  admit negative values for a short period of time. The negative values here should not be a major concern as the test statistic should admit positive values in probability under the alternatives. We confirmed this by adding an artificial dense change in the data. On the other hand, $L_6$-based monitoring statistics detect the change efficiently due to the ability to capture the sparse change in the system.

\begin{figure}
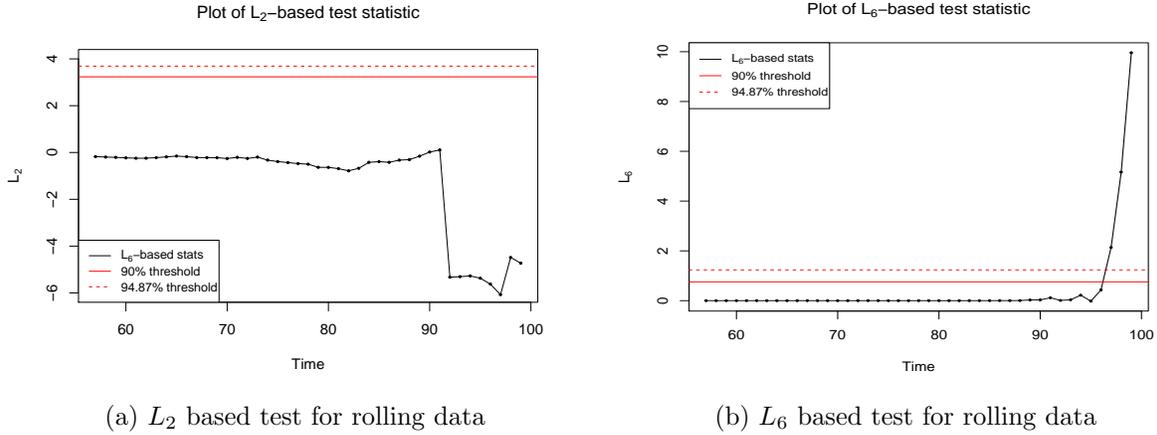

\centering
\subfloat[$L_2$ based test for rolling data]{\includegraphics[width=0.46\linewidth, height=5.5cm]{rollingL2} \label{fig:rollingL2}}\quad\subfloat[$L_6$ based test for rolling data]{\includegraphics[width=0.46\linewidth, height=5.5cm]{rollingL6.pdf} \label{fig:rollingL6}}
\caption{Examples of the rolling images}
\end{figure}

\section{Summary and Conclusion}

In this article, we propose a new methodology to monitor a mean shift in \change{ temporally independent} high-dimensional observations. 
Our change point monitoring method  targets at the $L_q$-norm of the mean change for $q=2,4,6,\cdots$. By combining the monitoring statistics for different values of $q\in 2\N$, the adaptive procedure achieves overall satisfactory power against both sparse and dense changes in the mean. This can be very desirable from a practitioner's viewpoint as often we do not have the knowledge about the type of alternatives. Compared with the recently developed methods for monitoring large-scale data streams [e.g., \cite{mei:10}, \cite{xie2013sequential}, \cite{liu:19}], our method is fully nonparametric and does not require strong distributional assumptions. Furthermore, our method allows for certain  \change{cross-sectional} dependence across data streams, which could naturally arise in many applications. 

To conclude, we mention a few interesting future directions. Firstly, our focus in this paper is on the mean change, and it is natural to ask whether the method can be extended to monitor a change in the covariance matrix. Secondly, many streaming data have weak dependence over time due to its sequential nature, and how to accommodate weak temporal dependence will be of interest. Thirdly, in the current implementation, the ratio-consistent estimators are learned from the training data and do not change as more observations are available. In practice, if the monitoring scheme runs for a long time without signaling an alarm, it might be helpful to periodically update ratio-consistent estimators to gain efficiency, especially when the initial training sample is short. \change{However, it may be impractical to
update this estimator for each $k$ since there seems no easy recursive way to update
this estimator and the associated computational cost is high. The user might need
to determine how often to update it based on the actual computational resources.} Fourthly, even though the proposed algorithm can detect the sparse change, in many applications, it is also an important problem to identify which individual data stream actually experiences a change, which will be left for future research. 


\section*{Supplementary Materials}
The supplementary materials contains technical proofs for the theoretical results as well as additional simulation results.
\begin{proof}[Proof of Theorem 1]
We can directly apply the results shown in \cite{wang2019inference} for the partial sum process 
\[S_n(a,b) = \sum_{i = \floor{na} + 1}^{\floor{nb}-1}\sum_{j = \floor{na} + 1}^{i} X_{i+1}^T X_j.\] The partial sum process 
\[\Big\{ \frac{\sqrt{2}}{n||\Sigma||_F} S_n(a,b)\Big\}_{(a,b) \in [0,T]^2} \rightsquigarrow Q \quad in \quad l^\infty([0,T]^2)\]
where $Q$ is a Gaussian process whose covariance structure is the following
\begin{equation*}
    Cov(Q(a_1,b_1),(a_2,b_2)) = \begin{cases}
    (\min(b_1,b_2) - \max(a_1,a_2))^2 & if\quad \max(a_1,a_2) \le \min(b_1,b_2)\\
    0 & otherwise
    \end{cases}
\end{equation*}
The test statistic is a continuous transformation of the Gaussian process and the results stated follows. 
\end{proof}
\begin{proof}[Proof of Theorem 2]
We now analyze the power of the first proposed test. Suppose the change point is at $k^*$, where $k^*/n \to r$ for some constant $r \in (1, T)$. This assures that the change point does not occur extremely early or late in the monitoring period. Under the alternative hypothesis, define a new sequence of random vectors $Y_i$,
\[Y_ i = \begin{cases} X_i & i = 1,\ldots,k^*\\
X_i - \Delta & i = k^*+1,\ldots, n
\end{cases}. \]
This sequence does not have a change point. Without loss of generosity, assume $Y_i$'s are centered.

Suppose that 
\[\frac{n\Delta^T\Delta}{||\Sigma||_F} \to b \in [0 +\infty).\]
When $m < k < k^*$, $G_k(m)$ statistic will not be affected. It suffices to consider the case $m < k^* <k$ and $k^*< m < k$.
Following the decomposition in \cite{wang2019inference}, 
under the fixed alternative when $k^* > m$,
\begin{align*}
     G_k(m) & = G^Y_k(m) + {(k-k^*)(k-k^*-1)m(m-1)}||\Delta||_2^2 \\
     & \quad -  {2(k-k^*)(k-m-2)(m-1)}\sum_{j = 1}^m Y_j^T \Delta \\
     & \quad - 4(m-1)(m-2)(k-k^*)\sum_{j = m+1}^{k^*}Y_j^T \Delta.
\end{align*}
$G_n^Y(m)$ is the statistic calculated for the $Y_i$ sequence. Let $s_n(k) = \sum_{j = 1}^k Y_j^T \Delta$. Then 
\[\sup_{1 \le l\le k \le  nT}|\sum_{j = l}^k Y_j^T \Delta | \le 2 \sup_{1\le k \le nT}|s_n(k)|  = O_p(n^{1/2} (\Delta^T \Sigma \Delta)^{1/2}). \]
The last part is obtained by Kolmogorov's inequality. This implies that when $k^* > m$,
\begin{align*}
    \frac{1}{n^3\|\Sigma\|_F}G_k(m)
    & =\frac{1}{n^3\|\Sigma\|_F} G_k^Y(m) +  \frac{(k-k^*)(k-k^*-1)m(m-1)}{n^3}\frac{||\Delta||_2^2}{||\Sigma||_F} + O_p(\frac{n^{1/2}(\Delta^T \Sigma \Delta)^{1/2}}{||\Sigma||_F}).
\end{align*}
Similarly, we can show when $k^* >  m$
\[\frac{1}{n^3\|\Sigma\|_F}G_k(m) = \frac{1}{n\|\Sigma\|_F}G_k^Y(m) +  \frac{k^*(k^*-1)(k-m)(k-m-1)}{n^3}\frac{||\Delta||_2^2}{||\Sigma||_F} + O_p(\frac{n^{1/2}(\Delta^T \Sigma \Delta)^{1/2}}{||\Sigma||_F}).\]
The last part is converging to 0 in probability. Therefore, the test statistic $T_n$ can be viewed as an extension to the original process. The second terms are also a process depend on $m$ and $k^*$. Under the fixed alternative, the $G_k(m)$ converge to the process
\[\frac{1}{n^3\|\Sigma\|_F}\{G_{\floor{nt}}(\floor{ns}) \}_{s\in[0,1]} \to G(s,t) + b\Lambda(s,t),\]
where
\[\Lambda(s,t) = \begin{cases} (t-r)^2s^2 & s \le r\\
r^2 (t-s)^2  & s > r\\
0 & otherwise
\end{cases}.\]
This implies that, when $b = 0$, the process is the same with the null process, and the proposed monitoring scheme will have trivial power. When the $b$ is not zero, since the remainder term is positive, we will have non -trivial power.

When
\[\frac{n\Delta^T\Delta}{||\Sigma||_F} \to \infty.\]
Following above decomposition, we have
\[
\max_k T_n(k) \ge T_{n}(k^*) = \frac{1}{n\|\Sigma\|_F}D_{nT}^Y(k^*) + O(\frac{n||\Delta||_2^2}{||\Sigma||_F}) \to \infty
\]
Since the first term is pivotal and is bounded in probability, the test have power converging to 1.
\end{proof}

\begin{proof}[Proof of Theorem 3]
We can directly apply the results in Theorem 2.1 and 2.2 in Zhang et al.(2020), which stated that for
\[S_{n,q,c}(r;[a,b]) = \sum_{l = 1}^p\sum_{\floor{na} +1\le i_1, \ldots,i_c\le \floor{nr}}^*  \sum_{\floor{nr}+1 \le j_1, \ldots,j_{q-c}\le \floor{nb}}^*  \prod_{t=1}^cX_{i_t, l}\prod_{g = 1}^{q-c}X_{j_g, l},\]
we have 
\[\frac{1}{\sqrt{n^q \|\Sigma\|_q^q}}S_{n,q,c}(r;[a,b]) \weak Q_{q,c}(r;[a,b]), \]
where $Q_{q,c}$ is the Gaussian process stated in Theorem 4. The monitoring statistic is a continuous transformation of process $S_{n,q,c}$'s and the asymptotic result follows. 
\end{proof}
\begin{proof}[Proof of Theorem 4]
We first discuss the case when $\frac{n^{q/2}\|\Delta\|_q^q}{\|\Sigma\|_q^{q/2}} \to \gamma \in [0 , +\infty)$ and the true change point is at location  $k^* = \floor{nr}$. Here we adopt the process convergence results in Theorem 2.3 of Zhang et al.(2020), which stated that for $(k,m) = (\floor{ns}, \floor{nt})$,
\begin{align*}
 \frac{1}{\sqrt{n^{3q}\|\Sigma\|_q^q}}D_{n,q}(s;[0,b]) &= \frac{1}{\sqrt{n^{3q}\| \Sigma\|_q^q}}\sum_{l = 1}^p\sum_{0\le i_1, \ldots,i_q\le k}^*  \sum_{k+1 \le j_1, \ldots,j_q\le m}^*  (X_{i_1, l} - X_{j_1, l}) \cdots (X_{i_q, l} - X_{j_q, l}),\\
 & \weak G_q(s,t) + \gamma J_q(s;[0,t])
\end{align*}
where 

 \[ J_q(s;[0,t]) = \begin{cases} 
r^q(t-s)^q & r\le s < t\\
s^q(t-r)^q & s \le r < t\\
0 & otherwise
\end{cases}
\]
Therefore, by continuous mapping theorem, when $\gamma \in [0, + \infty)$, the results in the theorem hold. 

For the case $\frac{n^{q/2}\|\Delta\|_q^q}{\|\Sigma\|_q^{q/2}}  \to +\infty$
\[
\max_k T_{n,q}(k) \ge T_{n,q}(k^*) = \frac{1}{n\|\Sigma\|_F}D_{nT}^Y(k^*) + C \frac{n^{q/2}\|\Delta\|_q^q}{\|\Sigma\|_q^{q/2}} \to \infty
\]
\end{proof}
	\begin{proof}[Proof of Theorem 5]
	By straightforward calculation, we have
	\begin{align*}
	    \widehat{\|\Sigma\|_F^2} &= \frac{1}{4{n \choose 4}}\sum_{1 \leq j_1 < j_2 < j_3 < j_4 \leq n}tr\left((X_{j_1} - X_{j_2})(X_{j_1} - X_{j_2})^T(X_{j_3} - X_{j_4})(X_{j_3} - X_{j_4})^T\right)\\
	    &= \frac{1}{4{n \choose 4}}\sum_{1 \leq j_1 < j_2 < j_3 < j_4 \leq n}[(X_{j_1} - X_{j_2})^T(X_{j_3} - X_{j_4})]^2\\
	    &= \frac{1}{4{n \choose 4}}\sum_{1 \leq j_1 < j_2 < j_3 < j_4 \leq n}[(X_{j_1}^TX_{j_3})^2 + (X_{j_2}^TX_{j_3})^2 + (X_{j_2}^TX_{j_4})^2 + (X_{j_1}^TX_{j_4})^2]\\
	    &- \frac{2}{4{n \choose 4}}\sum_{1 \leq j_1 < j_2 < j_3 < j_4 \leq n}[X_{j_1}^TX_{j_3}X_{j_1}^TX_{j_4} + X_{j_2}^TX_{j_3}X_{j_2}^TX_{j_4} +  X_{j_1}^TX_{j_3}X_{j_2}^TX_{j_3} + X_{j_1}^TX_{j_4}X_{j_2}^TX_{j_4}]\\
	    &+ \frac{2}{4{n \choose 4}}\sum_{1 \leq j_1 < j_2 < j_3 < j_4 \leq n}[X_{j_1}^TX_{j_3}X_{j_2}^TX_{j_4} + X_{j_2}^TX_{j_3}X_{j_1}^TX_{j_4}]\\
	    & = I_{n,1}+I_{n,2}+I_{n,3}+I_{n,4} - (I_{n,5}+I_{n,6}+I_{n,7}+I_{n,8}) + (I_{n,9}+I_{n,10}).
	\end{align*}
	
	For $I_{n,1}$,
	$$\E[I_{n,1}] = \frac{1}{4{n \choose 4}}\sum_{1 \leq j_1 < j_2 < j_3 < j_4 \leq n}\E[(X_{j_1}^TX_{j_3})^2] = \frac{1}{4}tr(\E[X_{j_3}X_{j_3}^TX_{j_1}X_{j_1}^T]) = \|\Sigma\|_F^2/4.$$
	Thus $\E[I_{n,1}/\|\Sigma\|_F^2] = 1/4$. By similar arguments, it is obvious to see that $\E[I_{n,i}/\|\Sigma\|_F^2] = 1/4$ for $i = 1,2,3,4$, and $\E[I_{n,i}/\|\Sigma\|_F^2] = 0$ for $i = 5,...,10$.
	
	The outline of the proof is as following. We will show that $4I_{n,i}/\|\Sigma\|_F^2 \rightarrow_p 1$ for $i = 1,2,3,4$, and $I_{n,i}/\|\Sigma\|_F^2 \rightarrow_p 0$, for $i = 5,...,10$. Since some of the $I_{n,i}$ share very similar structures, we will only present the proof for (1) $4I_{n,1}/\|\Sigma\|_F^2 \rightarrow_p 1$ and (2) $I_{n,5}/\|\Sigma\|_F^2 \rightarrow_p 0$. Other terms can be proved by similar arguments. 
	
	To show (1), it suffices to show that $\E[16I_{n,1}^2/\|\Sigma\|_F^4] \rightarrow 1$. To see this,
	\begin{align*}
	    &\E[16I_{n,1}^2/\|\Sigma\|_F^4] = \frac{1}{{n \choose 4}^2\|\Sigma\|_F^4}\sum_{1 \leq j_1 < j_2 < j_3 < j_4 \leq n}\sum_{1 \leq j_5 < j_6 < j_7 < j_8 \leq n}\E[(X_{j_1}^TX_{j_3})^2(X_{j_5}^TX_{j_7})^2]\\
	    &=\frac{1}{{n \choose 4}^2\|\Sigma\|_F^4}\sum_{1 \leq j_1 < j_2 < j_3 < j_4 \leq n}\sum_{1 \leq j_5 < j_6 < j_7 < j_8 \leq n}\sum_{l_1,l_2,l_3,l_4 = 1}^p\E[X_{j_1,l_1}X_{j_3,l_1}X_{j_1,l_2}X_{j_3,l_2}X_{j_5,l_3}X_{j_7,l_3}X_{j_5,l_4}X_{j_7,l_4}].
	\end{align*}
	As we know that the expectation of a product of random variables can be expressed in terms of joint cumulants, we have
	$$\E[X_{j_1,l_1}X_{j_3,l_1}X_{j_1,l_2}X_{j_3,l_2}X_{j_5,l_3}X_{j_7,l_3}X_{j_5,l_4}X_{j_7,l_4}] = \sum_{\pi}\prod_{B \in \pi}cum(X_{j,l}: (j,l) \in B),$$
	where $\pi$ runs through the list of all partitions of $\{(j_1,l_1), (j_1,l_2),...,(j_7,l_3), (j_7,l_4)\}$ and $B$ runs through the list of all blocks of the partition $\pi$. Since $j_1 < j_3$ and $j_5 < j_7$, it is impossible to have three or more indices in $\{j_1,j_3,j_5,j_7\}$ such that they are identical. Thus for the right hand side of the above expression, we only need to take the sum over all partitions with all block sizes smaller than $5$, because for joint cumulants with order greater than $5$, it must contain at least 3 indices from $j_1,j_3,j_5,j_7$ and at least one is not identical to the other two. And the joint cumulants will equal to zero since it involves two or more independent random variables. 
	
	Also since the mean of all random variables included in the left hand side of the above expression are all zero, we do not need to consider the partition with block size 1. Thus the expression can be simplified as 
	\begin{align*}
	    &\E[X_{j_1,l_1}X_{j_3,l_1}X_{j_1,l_2}X_{j_3,l_2}X_{j_5,l_3}X_{j_7,l_3}X_{j_5,l_4}X_{j_7,l_4}] \\
	   = & C_1^{(j_1,j_3,j_5,j_7)}\E[X_{0,l_1}X_{0,l_2}X_{0,l_3}X_{0,l_4}]^2 +  C_2^{(j_1,j_3,j_5,j_7)}\E[X_{0,l_1}X_{0,l_2}X_{0,l_3}X_{0,l_4}]\Sigma_{l_1,l_2}\Sigma_{l_3,l_4}\\
	   & + \Sigma_{l_1,l_2}^2\Sigma_{l_3,l_4}^2,
	\end{align*}
	where $C_1^{(j_1,j_3,j_5,j_7)}$, $C_2^{(j_1,j_3,j_5,j_7)}$ are finite positive constants purely based on the value of $j_1,j_3,j_5,j_7$. $C_1^{(j_1,j_3,j_5,j_7)}$ can only be nonzero if $j_1 = j_5$ and $j_3 = j_7$, and $C_2^{(j_1,j_3,j_5,j_7)}$ is nonzero if at least two of $(j_1, j_3, j_5, j_7)$ are equal. This implies that $$\sum_{1 \leq j_1 < j_2 < j_3 < j_4 \leq n}\sum_{1 \leq j_5 < j_6 < j_7 < j_8 \leq n}C_1^{(j_1,j_3,j_5,j_7)} = o(n^8),$$
	and
	$$\sum_{1 \leq j_1 < j_2 < j_3 < j_4 \leq n}\sum_{1 \leq j_5 < j_6 < j_7 < j_8 \leq n}C_2^{(j_1,j_3,j_5,j_7)} = o(n^8).$$
	
	Furthermore, according to Assumption 2, $\sum_{l_1,l_2,l_3,l_4 = 1}^pcum(X_{0,l_1},X_{0,l_2}, X_{0,l_3}, X_{0,l_4})^2 \leq C\|\Sigma\|_F^4$. It can be verified that  
	\begin{align*}
	    \sum_{l_1,l_2,l_3,l_4 = 1}^p\E[X_{0,l_1}X_{0,l_2}X_{0,l_3}X_{0,l_4}]^2 &\lesssim \sum_{l_1,l_2,l_3,l_4 = 1}^pcum(X_{0,l_1},X_{0,l_2}, X_{0,l_3}, X_{0,l_4})^2 \\
	    & + \sum_{l_1,l_2,l_3,l_4 = 1}^p\Sigma_{l_1,l_2}^2\Sigma_{l_3,l_4}^2\\
	    &\lesssim \|\Sigma\|_F^4,
	\end{align*}
	and by using the Cauchy-Schwartz inequaility,
	\begin{align}
	    &\sum_{l_1,l_2,l_3,l_4 = 1}^p\E[X_{0,l_1}X_{0,l_2}X_{0,l_3}X_{0,l_4}]\Sigma_{l_1,l_2}\Sigma_{l_3,l_4}\nonumber\\
	    \leq &\sqrt{\sum_{l_1,l_2,l_3,l_4 = 1}^p\E[X_{0,l_1}X_{0,l_2}X_{0,l_3}X_{0,l_4}]^2}\sqrt{\sum_{l_1,l_2,l_3,l_4 = 1}^p\Sigma_{l_1,l_2}^2\Sigma_{l_3,l_4}^2} \leq  \sqrt{C}\|\Sigma\|_F^4.\label{eq:bound}
	\end{align}
	This indicates that 
	\begin{align*}
	    &\E[16I_{n,1}^2/\|\Sigma\|_F^4] \\
	    =& \frac{1}{{n \choose 4}^2\|\Sigma\|_F^4}\sum_{1 \leq j_1 < j_2 < j_3 < j_4 \leq n}\sum_{1 \leq j_5 < j_6 < j_7 < j_8 \leq n}C_1^{(j_1,j_3,j_5,j_7)}\sum_{l_1,l_2,l_3,l_4 = 1}^p\E[X_{0,l_1}X_{0,l_2}X_{0,l_3}X_{0,l_4}]^2\\
	    +& \frac{1}{{n \choose 4}^2\|\Sigma\|_F^4}\sum_{1 \leq j_1 < j_2 < j_3 < j_4 \leq n}\sum_{1 \leq j_5 < j_6 < j_7 < j_8 \leq n}C_2^{(j_1,j_3,j_5,j_7)}\sum_{l_1,l_2,l_3,l_4 = 1}^p\E[X_{0,l_1}X_{0,l_2}X_{0,l_3}X_{0,l_4}]\Sigma_{l_1,l_2}\Sigma_{l_3,l_4}\\
	    +& \frac{1}{{n \choose 4}^2\|\Sigma\|_F^4}\sum_{1 \leq j_1 < j_2 < j_3 < j_4 \leq n}\sum_{1 \leq j_5 < j_6 < j_7 < j_8 \leq n}\sum_{l_1,l_2,l_3,l_4 = 1}^p\Sigma_{l_1,l_2}^2\Sigma_{l_3,l_4}^2  = o(1) + o(1) + 1 \rightarrow 1.
	\end{align*}
	Thus, $4I_{n,1}/\|\Sigma\|_F^2 \rightarrow_p 1$, and (1) is proved. By similar arguments, $4I_{n,i}/\|\Sigma\|_F^2 \rightarrow_p 1$ holds for $i = 2,3,4$.
	
	To show (2), we need to prove $\E[I_{n,5}^2/\|\Sigma\|_F^4] \rightarrow 0$. To see this, 
	\begin{align*}
	    &\E[I_{n,5}^2/\|\Sigma\|_F^4] = \frac{1}{4{n \choose 4}^2\|\Sigma\|_F^4}\sum_{1 \leq j_1 < j_2 < j_3 < j_4 \leq n}\sum_{1 \leq j_5 < j_6 < j_7 < j_8 \leq n}\E[(X_{j_1}^TX_{j_3}X_{j_1}^TX_{j_4})(X_{j_5}^TX_{j_7}X_{j_5}^TX_{j_8})]\\
	    =&\frac{1}{{n \choose 4}^2\|\Sigma\|_F^4}\sum_{1 \leq j_1 < j_2 < j_3 < j_4 \leq n}\sum_{1 \leq j_5 < j_6 < j_7 < j_8 \leq n}\sum_{l_1,l_2,l_3,l_4 = 1}^p\E[X_{j_1,l_1}X_{j_3,l_1}X_{j_1,l_2}X_{j_4,l_2}X_{j_5,l_3}X_{j_7,l_3}X_{j_5,l_4}X_{j_8,l_4}].
	\end{align*}
	
	By similar arguments for the joint cumulants we provided in the the proof for (1), it can be proved that
    \begin{align*}
        &\E[X_{j_1,l_1}X_{j_3,l_1}X_{j_1,l_2}X_{j_4,l_2}X_{j_5,l_3}X_{j_7,l_3}X_{j_5,l_4}X_{j_8,l_4}]\\
        =&C_1^{(j_1,j_3,j_4,j_5,j_7,j_8)}\E[X_{0,l_1}X_{0,l_2}X_{0,l_3}X_{0,l_4}]\Sigma_{l_1,l_3}\Sigma_{l_2,l_4} + C_2^{(j_1,j_3,j_4,j_5,j_7,j_8)}\Sigma_{l_1,l_2}\Sigma_{l_3,l_4}\Sigma_{l_1,l_3}\Sigma_{l_2,l_4}.
    \end{align*}
    
    If $C_1^{(j_1,j_3,j_4,j_5,j_7,j_8)} \neq 0$, then $j_1 = j_5$. And if $C_2^{(j_1,j_3,j_4,j_5,j_7,j_8)} \neq 0$, $j_3 = j_5$ and $j_4 = j_8$. These two properties guarantee that 
    $$\sum_{1 \leq j_1 < j_2 < j_3 < j_4 \leq n}\sum_{1 \leq j_5 < j_6 < j_7 < j_8 \leq n}C_1^{(j_1,j_3,j_4,j_5,j_7,j_8)} = o(n^8),$$
    and
    $$\sum_{1 \leq j_1 < j_2 < j_3 < j_4 \leq n}\sum_{1 \leq j_5 < j_6 < j_7 < j_8 \leq n}C_2^{(j_1,j_3,j_4,j_5,j_7,j_8)} = o(n^8).$$
    Furthermore we have shown the bound for  $\sum_{l_1,l_2,l_3,l_4 = 1}^p\E[X_{0,l_1}X_{0,l_2}X_{0,l_3}X_{0,l_4}]\Sigma_{l_1,l_2}\Sigma_{l_3,l_4}$ in (\ref{eq:bound}). And
    \begin{align*}
        &\sum_{l_1,l_2,l_3,l_4 = 1}^p\Sigma_{l_1,l_2}\Sigma_{l_3,l_4}\Sigma_{l_1,l_3}\Sigma_{l_2,l_4} = \sum_{l_1,l_4 = 1}^p\left(\sum_{l_2 = 1}^p\Sigma_{l_1,l_2}\Sigma_{l_2,l_4}\right)\left(\sum_{l_3=1}^p\Sigma_{l_1,l_3}\Sigma_{l_3,l_4}\right)\\
        =&\sum_{l_1,l_4 = 1}^p[(\Sigma^2)_{l_1,l_4}]^2 = tr(\Sigma^4) = o(\|\Sigma\|_F^4),
    \end{align*}
    by Assumption 1. Thus,
    \begin{align*}
	    &\E[I_{n,5}^2/\|\Sigma\|_F^4] \\
	    =& \frac{1}{4{n \choose 4}^2\|\Sigma\|_F^4}\sum_{1 \leq j_1 < j_2 < j_3 < j_4 \leq n}\sum_{1 \leq j_5 < j_6 < j_7 < j_8 \leq n}C_1^{(j_1,j_3,j_4,j_5,j_7,j_8)}\sum_{l_1,l_2,l_3,l_4 = 1}^p\E[X_{0,l_1}X_{0,l_2}X_{0,l_3}X_{0,l_4}]\Sigma_{l_1,l_3}\Sigma_{l_2,l_4}\\
	    +&\frac{1}{4{n \choose 4}^2\|\Sigma\|_F^4}\sum_{1 \leq j_1 < j_2 < j_3 < j_4 \leq n}\sum_{1 \leq j_5 < j_6 < j_7 < j_8 \leq n} C_2^{(j_1,j_3,j_4,j_5,j_7,j_8)}\sum_{l_1,l_2,l_3,l_4 = 1}^p\Sigma_{l_1,l_2}\Sigma_{l_3,l_4}\Sigma_{l_1,l_3}\Sigma_{l_2,l_4}\\
	    =&o(1) + o(1) \rightarrow 1.
	\end{align*}
	This indicates $I_{n,5}/\|\Sigma\|_F^2 \rightarrow_p 0$. And by similar arguments we can prove that $I_{n,i}/\|\Sigma\|_F^2 \rightarrow_p 0$, for all $i = 6,...,10$. Combine the above results, we have $\widehat{\|\Sigma\|_F^2}/\|\Sigma\|_F^2 \rightarrow_p 1$. This completes the proof.

	\end{proof}

	\begin{proof}[Proof of Theorem 6]
	We can rewrite $\widehat{\|\Sigma\|_q^q}$ as
	\begin{align*}
	    &\widehat{\|\Sigma\|_q^q} = \frac{1}{2^q{n \choose 2q}}\sum_{l_1,l_2 = 1}^p\sum_{1 \leq i_1 < \cdots < i_q < j_1 < \cdots < j_q \leq n} \prod_{k = 1}^q (X_{i_k,l_1}X_{i_k,l_2} + X_{j_k,l_1}X_{j_k,l_2} - X_{i_k,l_1}X_{j_k,l_2} - X_{j_k,l_1}X_{i_k,l_2})\\
	    =&\frac{1}{2^q{n \choose 2q}}\sum_{1 \leq i_1 < \cdots < i_q < j_1 < \cdots < j_q \leq n} \sum_{t_1,s_1 \in \{i_1,j_1\}}\cdots\sum_{t_q,s_q \in \{i_q,j_q\}}\sum_{l_1,l_2 = 1}^p\prod_{k = 1}^q (-1)^{\1{t_k \neq s_k}}X_{t_k,l_1}X_{s_k,l_2}\\
	    =& \frac{1}{2^q{n \choose 2q}}\sum_{1 \leq i_1 < \cdots < i_q < j_1 < \cdots < j_q \leq n}\sum_{t_1 \in \{i_1,j_1\}}\cdots\sum_{t_q \in \{i_q,j_q\}}\sum_{l_1,l_2 = 1}^p \prod_{k = 1}^q X_{t_k,l_1}X_{t_k,l_2} \\
	    &+ \frac{1}{2^q{n \choose 2q}}\sum_{1 \leq i_1 < \cdots < i_q < j_1 < \cdots < j_q \leq n} \sum_{t_1,s_1 \in \{i_1,j_1\}}\cdots\sum_{t_q,s_q \in \{i_q,j_q\}}\\
	    & \qquad 
	    \qquad \sum_{l_1,l_2 = 1}^p\1{\cup_{k = 1}^q \{t_k \neq s_k\}}\prod_{k = 1}^q (-1)^{\1{t_k \neq s_k}}X_{t_k,l_1}X_{s_k,l_2}.
	\end{align*}
	The second equality in the above expression is by calculating the cross products among $q$ brackets, and the third equality is splitting the terms based on different values of $t_k,s_k$ for $k = 1,...,q$. The first term in the third equality contains all products with $t_k = s_k$ for all $k = 1,...,q$, and the second term contains products with at least one $k = 1,...,q$ such that $t_k \neq s_k$.
	
	The outline of the proof is as follows. We want to show:
	\begin{enumerate}
	    \item for every $t_1 \in \{i_1,j_1\},...,t_q \in \{i_q,j_q\}$,
	    $$I({t_1,...,t_q}) = \frac{1}{{n \choose 2q}\|\Sigma\|_q^q}\sum_{1 \leq i_1 < \cdots < i_q < j_1 < \cdots < j_q \leq n}\sum_{l_1,l_2 = 1}^p\prod_{k = 1}^qX_{t_k,l_1}X_{t_k,l_2} \rightarrow_p 1;$$
	    \item for every $t_1,s_1 \in \{i_1,j_1\},...,t_q,s_q \in \{i_q,j_q\}$ and there exists at least one $k = 1,...,q$ such that $t_k \neq s_k$,
	    $$J({t_1,s_1,...,t_q,s_q}) = \frac{1}{{n \choose 2q}\|\Sigma\|_q^q}\sum_{1 \leq i_1 < \cdots < i_q < j_1 < \cdots < j_q \leq n}\sum_{l_1,l_2 = 1}^p\prod_{k = 1}^qX_{t_k,l_1}X_{s_k,l_2}\rightarrow_p 0.$$
	\end{enumerate}
	And it is easy to see that if these two results hold, then $\widehat{\|\Sigma\|_q^q}/\|\Sigma\|_q^q \rightarrow_p 1$. As we observe that most of terms are structurally very similar, we shall only present the proof for $I({i_1,...,i_q}) \rightarrow_p 1$ and a general proof for (2). 
	
	It is trivial to see that $\E[\sum_{l_1,l_2 = 1}^p\prod_{k = 1}^qX_{t_k,l_1}X_{t_k,l_2}/\|\Sigma\|_q^q] = 1$. This indicates that to show (1), it suffices to show that $\E[I(i_1,...,i_q)^2] \rightarrow 1$. To show this,
	
	\begin{align*}
	    \E[I(i_1,...,i_q)^2] =& \frac{1}{{n \choose 2q}^2\|\Sigma\|_q^{2q}}\sum_{1 \leq i_1 < \cdots < i_q < j_1 < \cdots < j_q \leq n}\sum_{1 \leq i_1' < \cdots < i_q' < j_1' < \cdots < j_q' \leq n}\\
	    &\qquad\sum_{l_1,l_2,l_3,l_4 = 1}^p\E\left[\prod_{k = 1}^qX_{i_k,l_1}X_{i_k,l_2}X_{i_k',l_3}X_{i_k',l_4}\right].
	\end{align*}
	
	Due to the special structure of our statistic, 
	$$\E\left[\prod_{k = 1}^qX_{i_k,l_1}X_{i_k,l_2}X_{i_k',l_3}X_{i_k',l_4}\right] = \sum_{m = 0}^{q}C_m E(X_{0,l_1}X_{0,l_2}X_{0,l_3}X_{0,l_4})^m (\Sigma_{l_1,l_2}\Sigma_{l_3,l_4})^{q-m},$$
	where $C_m = C_m(i_1,...,i_q,i_1',...,i_q') \geq 0$ is a function of all indices for all $m = 1,2,...,q$. $C_m = 1$ if there are exact $m$ indices in $\{i_1,...,i_q\}$ which equal to $m$ indices in $\{i_1',...,i_q'\}$, and $C_m = 0$ otherwise. These events are mutually exclusive which indicates that $\sum_{m = 0}^qC_m = 1$. This indicates that for all $m = 1,...,q$, 
	$$\sum_{1 \leq i_1 < \cdots < i_q < j_1 < \cdots < j_q \leq n}\sum_{1 \leq i_1' < \cdots < i_q' < j_1' < \cdots < j_q' \leq n}C_m(i_1,...,i_q,i_1',...,i_q') = o(n^{4q}),$$
	and 
	$$\frac{1}{{n \choose 2q}^2}\sum_{1 \leq i_1 < \cdots < i_q < j_1 < \cdots < j_q \leq n}\sum_{1 \leq i_1' < \cdots < i_q' < j_1' < \cdots < j_q' \leq n}C_0(i_1,...,i_q,i_1',...,i_q') \rightarrow 1.$$
	Furthermore, for any $m = 1,...,q$, by H\"older's inequaility for vector spaces, we have 
	\begin{align*}
	    &\sum_{l_1,l_2,l_3,l_4 = 1}^p|E(X_{0,l_1}X_{0,l_2}X_{0,l_3}X_{0,l_4})|^m |\Sigma_{l_1,l_2}\Sigma_{l_3,l_4}|^{q-m}\\
	    \leq &  \left(\sum_{l_1,l_2,l_3,l_4 = 1}^p|E(X_{0,l_1}X_{0,l_2}X_{0,l_3}X_{0,l_4})^m|^{q/m}\right)^{m/q}\left(\sum_{l_1,l_2,l_3,l_4 = 1}^p(|\Sigma_{l_1,l_2}\Sigma_{l_3,l_4}|^{q-m})^{q/(q-m)}\right)^{(q-m)/q}\\
	    =& \left(\sum_{l_1,l_2,l_3,l_4 = 1}^p|E(X_{0,l_1}X_{0,l_2}X_{0,l_3}X_{0,l_4})|^q\right)^{m/q}\left(\sum_{l_1,l_2,l_3,l_4 = 1}^p|\Sigma_{l_1,l_2}|^q|\Sigma_{l_3,l_4}|^q\right)^{(q-m)/q} \\
	    \leq&  C\|\Sigma\|_q^{2m} \|\Sigma\|_q^{2(q-m)} = C\|\Sigma\|_q^{2q},
	\end{align*}
	where the last inequality is due to Assumption 5, and to see this, 
	\begin{align*}
	    &\sum_{l_1,l_2,l_3,l_4 = 1}^p|E(X_{0,l_1}X_{0,l_2}X_{0,l_3}X_{0,l_4})|^q \\
	    \leq&  C\sum_{l_1,l_2,l_3,l_4 = 1}^p|cum(X_{0,l_1},X_{0,l_2},X_{0,l_3},X_{0,l_4})|^q + |\Sigma_{l_1,l_2}\Sigma_{l_3,l_4}|^q + |\Sigma_{l_1,l_3}\Sigma_{l_2,l_4}|^q +|\Sigma_{l_1,l_4}\Sigma_{l_2,l_3}|^q\\
	    \leq& C\sum_{1 \leq l_1 \leq l_2 \leq l_3 \leq l_4 \leq p}(1 \vee (l_4 - l_1))^{-2rq} + 3C\|\Sigma\|_q^{2q} \leq Cp^2 + 3C\|\Sigma\|_q^{2q} \leq C\|\Sigma\|_q^{2q},
	\end{align*}
	for some generic positive constant $C$, since $\|\Sigma\|_q^{2q} = (\sum_{i,j = 1}^p\Sigma_{i,j}^q)^2 \geq Cp^2$ under Assumption (5.1).
	Therefore, 
	\begin{align*}
	     &\E[I(i_1,...,i_q)^2] \\
	     =& \frac{1}{{n \choose 2q}^2\|\Sigma\|_q^{2q}}\sum_{1 \leq i_1 < \cdots < i_q < j_1 < \cdots < j_q \leq n}\sum_{1 \leq i_1' < \cdots < i_q' < j_1' < \cdots < j_q' \leq n} \sum_{l_1,l_2,l_3,l_4 = 1}^q\E\left[\prod_{k = 1}^pX_{i_k,l_1}X_{i_k,l_2}X_{i_k',l_3}X_{i_k',l_4}\right]\\
	    =& \frac{1}{{n \choose 2q}^2\|\Sigma\|_q^{2q}}\sum_{1 \leq i_1 < \cdots < i_q < j_1 < \cdots < j_q \leq n}\sum_{1 \leq i_1' < \cdots < i_q' < j_1' < \cdots < j_q' \leq n}\\
	    &\qquad\sum_{l_1,l_2,l_3,l_4 = 1}^p\sum_{m = 0}^{q}C_m \E(X_{0,l_1}X_{0,l_2}X_{0,l_3}X_{0,l_4})^m (\Sigma_{l_1,l_2}\Sigma_{l_3,l_4})^{q-m}\\
	    =&\frac{1}{{n \choose 2q}^2\|\Sigma\|_q^{2q}}\sum_{1 \leq i_1 < \cdots < i_q < j_1 < \cdots < j_q \leq n}\sum_{1 \leq i_1' < \cdots < i_q' < j_1' < \cdots < j_q' \leq n}C_0(i_1,...,i_q,i_1',...,i_q')\sum_{l_1,l_2,l_3,l_4 = 1}^p\Sigma_{l_1,l_2}^q\Sigma_{l_3,l_4}^q\\
	    +&\frac{1}{{n \choose 2q}^2\|\Sigma\|_q^{2q}}o(n^{4q})\sum_{l_1,l_2,l_3,l_4 = 1}^p\sum_{m = 1}^{q} \E(X_{0,l_1}X_{0,l_2}X_{0,l_3}X_{0,l_4})^m (\Sigma_{l_1,l_2}\Sigma_{l_3,l_4})^{q-m}\\
	    =&1 + o(1) \rightarrow 1.
	\end{align*}
	This completes the proof for (1). To show (2), it suffices to show that $\E[J(t_1,s_1,...,t_q,s_q)^2] \rightarrow 0$. Specifically,   
	\begin{align*}
	    \E[J(t_1,s_1,...,t_q,s_q)^2]=& \frac{1}{{n \choose 2q}^2\|\Sigma\|_q^{2q}}\sum_{1 \leq i_1 < \cdots < i_q < j_1 < \cdots < j_q \leq n}\sum_{1 \leq i_1' < \cdots < i_q' < j_1' < \cdots < j_q' \leq n}\\
	    &\qquad\sum_{l_1,l_2,l_3,l_4 = 1}^p\E\left[\prod_{k=1}^qX_{t_k,l_1}X_{s_k,l_2}X_{t_k',l_3}X_{s_k',l_4}\right],
	\end{align*}
	for $t_1,s_1 \in \{i_1,j_1\},...,t_q,s_q \in \{i_q,j_q\},t_1',s_1' \in \{i_1',j_1'\},...,t_q',s_q' \in \{i_q',j_q'\}$, and there exists at least one $k = 1,...,q$ such that $t_k \neq s_k$ ($t_k' \neq s_k')$. Since the expectation of a product of random variables can be expressed in terms of joint cumulants, we have
	$$\E\left[\prod_{k = 1}^qX_{t_k,l_1}X_{s_k,l_2}X_{t_k',l_3}X_{s_k',l_4}\right] = \sum_{\pi}\prod_{B \in \pi}cum(X_{i,l}: (i,l) \in B),$$
	where $\pi$ runs through the list of all partitions of $\{(t_1,l_1),(s_1,l_2),...,(t_q',l_3),(s_q', l_4)\}$ and $B$ runs thorough the list of all blocks of the partition $\pi$. Due to the special structure of our statistic, there is a set of partitions $\mathcal{S}$ such that for every $\pi \in \mathcal{S}$, the product of joint cumulants over all $B \in \pi$ is zero. And for each $\pi \in \mathcal{S}^c$ there are nice properties related to the blocks $B \in \pi$. Here we shall illustrate these properties as follows. To be clear, since we are dealing with a double indexed array $X_{i,l}$, we call $``i"$ as the temporal index and $``l"$ as the spatial index. For $\forall \pi \in \mathcal{S}^c$,
	\begin{enumerate}
	    \item {\bf The size of every block $B \in \pi$ cannot exceed 4.} Since $i_1,...,i_q,j_1,...,j_q$ are all distinct, and $i_1',...,i_q',j_1',...,j_q'$ are all distinct, it is impossible to have any three indices in $\{i_1,...,i_q,j_1,...,j_q,i_1',...,i_q',j_1',...,j_q'\}$ that are equal. And any joint cumulants of order greater than or equal to 5 will include at least three indices and they cannot be all equal.
	    
	    \item {\bf There are no blocks with size 1.} This is because the cumulant of a single random variable with mean zero is also zero. 
	    
	    \item {\bf Every $B \in \pi$ must contain only one distinct temporal index.} Otherwise $\prod_{B \in \pi}cum(X_{i,l}: (i,l) \in B) = 0$.
	\end{enumerate}
	The above properties imply that for  $\forall \pi \in \mathcal{S}^c$ and $\forall B \in \pi$, ${cum(X_{i,l}: (i,l) \in B)}$ has to be one of the followings: $cum(X_{0,l_1},X_{0,l_2},X_{0,l_3},X_{0,l_4})$, $cum(X_{0,l_1},X_{0,l_2},X_{0,l_3})$, $cum(X_{0,l_1},X_{0,l_2},X_{0,l_4})$, $cum(X_{0,l_1},X_{0,l_3},X_{0,l_4})$, $cum(X_{0,l_2},X_{0,l_3},X_{0,l_4})$, $\Sigma_{l_1,l_2}$, $\Sigma_{l_1,l_3}$, $\Sigma_{l_1,l_4}$, $\Sigma_{l_2,l_3}$, $\Sigma_{l_2,l_4}$, $\Sigma_{l_3,l_4}$.
	
	If we assume $l_1 \leq l_2 \leq l_3 \leq l_4$, it can be shown that
	\begin{equation}\label{eq:bound}
    \prod_{B \in \pi}cum(X_{i,l}: (i,l) \in B) \leq C(1 \vee (l_2 - l_1))^{-r}(1 \vee (l_4 - l_3))^{-r},
	\end{equation}
	for some generic positive constant $C$ and any partition $\pi$. To see this, we notice that at least one $k = 1,...,q$, say $k_0$, such that $t_{k_0} \neq s_{k_0}$ and $t_{k_0}' \neq s_{k_0}'$. For every $\pi \in \mathcal{S}^c$ there exists $B_1,B_2 \in \pi$ such that $(t_{k_0},l_1) \in B_1$ and $(s_{k_0}',l_4) \in B_2$. Based on the third property above, all other elements in $B_1$ must have the same temporal index as $t_{k_0}$. And because of the first property above, all $i_k$, $j_k$ for $k \neq k_0$ and $s_{k_0}$ are different from $t_{k_0}$. This implies that the spatial indices for all other elements in $B_1$ have to be either $l_3$ or $l_4$, not $l_1$ and $l_2$. For the same reason, the spatial indices for all other elements in $B_2$ can only be either $l_1$ or $l_2$. Therefore,
	$$cum(X_{i,l}: (i,l) \in B_1) \in \{cum(X_{0,l_1},X_{0,l_3},X_{0,l_4}), \Sigma_{l_1,l_3},\Sigma_{l_1,l_4}\},$$ and $$cum(X_{i,l}: (i,l) \in B_2) \in \{cum(X_{0,l_1},X_{0,l_2},X_{0,l_4}), \Sigma_{l_1,l_4},\Sigma_{l_2,l_4}\}.$$ Under Assumption (5.2), $cum(X_{i,l}: (i,l) \in B_1) \leq C(1\vee (l_2-l_1))^{-r}$ and  $cum(X_{i,l}: (i,l) \in B_2) \leq C(1\vee (l_4-l_3))^{-r}$. And the joint cumulants are uniformly bounded above for those $B \in \pi \setminus \{B_1,B_2\}$. Thus Equation \ref{eq:bound} is proved.
	
	Furthermore, define $Ind(t_1,s_1,...,t_q,s_q,t_1',s_1',...,t_q',s_q')$ as the indicator function corresponding to the event that for every $k = 1,2,..,q$ that $t_k \neq s_k$, there exists $k' = 1,...,q$ such that $t_k = t_{k'}'$ or $t_k = s_{k'}'$, then $\E\left[\prod_{k = 1}^pX_{t_k,l_1}X_{s_k,l_2}X_{t_k',l_3}X_{s_k',l_4}\right] \neq 0$ only if 
	$$Ind(t_1,s_1,...,t_q,s_q,t_1',s_1',...,t_q',s_q') = 1.$$ It is also easy to see that $$\sum_{1\leq i_1 < \cdots < i_q < j_1 < \cdots < j_q \leq n}\sum_{1\leq i_1' < \cdots < i_q' < j_1' < \cdots < j_q' \leq n}Ind(t_1,s_1,...,t_q,s_q,t_1',s_1',...,t_q',s_q') = o(n^{4q}).$$
	
	Combining all the results above, we have
	\begin{align*}
	    &\E[J(t_1,s_1,...,t_q,s_q)^2]\\
	    =& \frac{1}{{n \choose 2q}^2\|\Sigma\|_q^{2q}}\sum_{1 \leq i_1 < \cdots < i_q < j_1 < \cdots < j_q \leq n}\sum_{1 \leq i_1' < \cdots < i_q' < j_1' < \cdots < j_q' \leq n}\sum_{l_1,l_2,l_3,l_4 = 1}^p\E\left[\prod_{k=1}^qX_{t_k,l_1}X_{s_k,l_2}X_{t_k',l_3}X_{s_k',l_4}\right],\\
	    \leq& \frac{C}{{n \choose 2q}^2\|\Sigma\|_q^{2q}}\sum_{1 \leq i_1 < \cdots < i_q < j_1 < \cdots < j_q \leq n}\sum_{1 \leq i_1' < \cdots < i_q' < j_1' < \cdots < j_q' \leq n}\\
	    &\qquad\sum_{1 \leq l_1 \leq l_2 \leq l_3 \leq l_4 \leq p}   Ind(t_1,s_1,...,t_q,s_q,t_1',s_1',...,t_q',s_q') (1 \vee (l_2 - l_1))^{-r}(1 \vee (l_4 - l_3))^{-r}\\
	    \leq&  \frac{o(n^{4q})}{{n \choose 2q}^2\|\Sigma\|_q^{2q}}\left(\sum_{1 \leq l_1 \leq l_2 \leq p}(1 \vee (l_2 - l_1))^{-r})\right)^2 \lesssim \frac{p^2}{\|\Sigma\|_q^{2q}}o(1) = o(1) \rightarrow 0,
	\end{align*}
	where the last equality is because $\|\Sigma\|_q^{2q} = (\sum_{i,j = 1}^p\Sigma_{i,j}^q)^2 \gtrsim p^2$.
	
	This completes the proof of (2), as well as the whole proof.

	\end{proof}
Table \ref{tab:size_200} shows additional simulation results for the size of the proposed monitoring statistics for $n=200$. The size distortion problem has improved for almost all settings.
\begin{table}[!h]
\caption{Size of different monitoring procedures}
\label{tab:size_200}
\begin{tabular}{lrrrrrrrrr}
\hline
$(n,p) = (200,200)$            & \multicolumn{3}{c}{T1}                                                     & \multicolumn{3}{c}{T2}                                                     & \multicolumn{3}{c}{T3}                                                     \\ \cline{2-10} 
size $\alpha = 0.1$ & \multicolumn{1}{c}{L2} & \multicolumn{1}{c}{L6} & \multicolumn{1}{c}{Comb} & \multicolumn{1}{c}{L2} & \multicolumn{1}{c}{L6} & \multicolumn{1}{c}{Comb} & \multicolumn{1}{c}{L2} & \multicolumn{1}{c}{L6} & \multicolumn{1}{c}{Comb} \\ \hline
$\rho = 0.2$         & 0.104                  & 0.072                  & 0.074                    & 0.097                  & 0.072                  & 0.073                    & 0.102                  & 0.071                  & 0.073                    \\
$\rho = 0.5$         & 0.105                  & 0.064                  & 0.091                    & 0.107                  & 0.064                  & 0.085                    & 0.104                  & 0.065                  & 0.087                    \\
$\rho = 0.8$         & 0.127                  & 0.037                  & 0.089                    & 0.133                  & 0.038                  & 0.099                    & 0.131                  & 0.039                  & 0.099                    \\ \hline
\end{tabular}
\end{table}
\par
\section*{Acknowledgements}
Xiaofeng Shao acknowleges partial support from  NSF grants DMS-1807032 and DMS-2014018. Hao Yan acknowleges partial support from NSF grants DMS-1830363 and	CMMI-1922739.
\par


\bibhang=1.7pc
\bibsep=2pt
\fontsize{9}{14pt plus.8pt minus .6pt}\selectfont
\renewcommand\bibname{\large \bf References}
\expandafter\ifx\csname
natexlab\endcsname\relax\def\natexlab#1{#1}\fi
\expandafter\ifx\csname url\endcsname\relax
  \def\url#1{\texttt{#1}}\fi
\expandafter\ifx\csname urlprefix\endcsname\relax\def\urlprefix{URL}\fi

  \bibliographystyle{chicago}      
  \bibliography{bibfile}   

\begin{thebibliography}{}

\bibitem[\protect\citeauthoryear{Aminikhanghahi and Cook}{Aminikhanghahi and
  Cook}{2017}]{aminikhanghahi2017survey}
Aminikhanghahi, S. and D.~J. Cook (2017).
\newblock A survey of methods for time series change point detection.
\newblock {\em Knowledge and Information Systems\/}~{\em 51\/}(2), 339--367.

\bibitem[\protect\citeauthoryear{Aue, H{\"o}rmann, Horv{\'a}th,
  Hu{\v{s}}kov{\'a}, and Steinebach}{Aue et~al.}{2012}]{aue2012sequential}
Aue, A., S.~H{\"o}rmann, L.~Horv{\'a}th, M.~Hu{\v{s}}kov{\'a}, and J.~G.
  Steinebach (2012).
\newblock Sequential testing for the stability of high-frequency portfolio
  betas.
\newblock {\em Econometric Theory\/}~{\em 28\/}(4), 804--837.

\bibitem[\protect\citeauthoryear{Aue and Horv{\'a}th}{Aue and
  Horv{\'a}th}{2013}]{aue2013structural}
Aue, A. and L.~Horv{\'a}th (2013).
\newblock Structural breaks in time series.
\newblock {\em Journal of Time Series Analysis\/}~{\em 34\/}(1), 1--16.

\bibitem[\protect\citeauthoryear{Brown, Durbin, and Evans}{Brown
  et~al.}{1975}]{brown1975techniques}
Brown, R.~L., J.~Durbin, and J.~M. Evans (1975).
\newblock Techniques for testing the constancy of regression relationships over
  time.
\newblock {\em Journal of the Royal Statistical Society: Series B (Statistical
  Methodology)\/}~{\em 37\/}(2), 149--163.

\bibitem[\protect\citeauthoryear{B{\"u}cher, Fermanian, and
  Kojadinovic}{B{\"u}cher et~al.}{2019}]{bucher2019combining}
B{\"u}cher, A., J.-D. Fermanian, and I.~Kojadinovic (2019).
\newblock Combining cumulative sum change-point detection tests for assessing
  the stationarity of univariate time series.
\newblock {\em Journal of Time Series Analysis\/}~{\em 40\/}(1), 124--150.

\bibitem[\protect\citeauthoryear{Chen and Qin}{Chen and Qin}{2010}]{chenqin10}
Chen, S. and Y.~Qin (2010).
\newblock A two sample test for high dimensional data with application to
  gene-set testing.
\newblock {\em The Annals of Statistics\/}~{\em 38}, 808--835.

\bibitem[\protect\citeauthoryear{Cho and Fryzlewicz}{Cho and
  Fryzlewicz}{2015}]{cho2015multiple}
Cho, H. and P.~Fryzlewicz (2015).
\newblock Multiple-change-point detection for high dimensional time series via
  sparsified binary segmentation.
\newblock {\em Journal of the Royal Statistical Society: Series B (Statistical
  Methodology)\/}~{\em 77\/}(2), 475--507.

\bibitem[\protect\citeauthoryear{Chu, Stinchcombe, and White}{Chu
  et~al.}{1996}]{chu:96}
Chu, C.-S.~J., M.~Stinchcombe, and H.~White (1996).
\newblock Monitoring structural change.
\newblock {\em Econometrica: Journal of the Econometric Society\/}~{\em 64},
  1045--1065.

\bibitem[\protect\citeauthoryear{Dette and G{\"o}smann}{Dette and
  G{\"o}smann}{2019}]{dette2019likelihood}
Dette, H. and J.~G{\"o}smann (2019).
\newblock A likelihood ratio approach to sequential change point detection for
  a general class of parameters.
\newblock {\em Journal of the American Statistical Association\/}, 1--17.

\bibitem[\protect\citeauthoryear{Enikeeva and Harchaoui}{Enikeeva and
  Harchaoui}{2019}]{enikeeva2019high}
Enikeeva, F. and Z.~Harchaoui (2019).
\newblock High-dimensional change-point detection under sparse alternatives.
\newblock {\em The Annals of Statistics\/}~{\em 47\/}(4), 2051--2079.

\bibitem[\protect\citeauthoryear{Fremdt}{Fremdt}{2015}]{fremdt2015page}
Fremdt, S. (2015).
\newblock Page's sequential procedure for change-point detection in time series
  regression.
\newblock {\em Statistics\/}~{\em 49\/}(1), 128--155.

\bibitem[\protect\citeauthoryear{G\"osmann, Kley, and Dette}{G\"osmann
  et~al.}{2020}]{gkd2020JTSA}
G\"osmann, J., T.~Kley, and H.~Dette (2020).
\newblock A new approach for open-end sequential change point monitoring.
\newblock {\em Journal of Time Series Analysis, to appear\/}.

\bibitem[\protect\citeauthoryear{He, Xu, Wu, and Pan}{He
  et~al.}{2018}]{he2018asymptotically}
He, Y., G.~Xu, C.~Wu, and W.~Pan (2018).
\newblock Asymptotically independent u-statistics in high-dimensional testing.
\newblock {\em arXiv preprint arXiv:1809.00411\/}.

\bibitem[\protect\citeauthoryear{Horv{\'a}th and Hu{\v{s}}kov{\'a}}{Horv{\'a}th
  and Hu{\v{s}}kov{\'a}}{2012}]{horvath2012change}
Horv{\'a}th, L. and M.~Hu{\v{s}}kov{\'a} (2012).
\newblock Change-point detection in panel data.
\newblock {\em Journal of Time Series Analysis\/}~{\em 33\/}(4), 631--648.

\bibitem[\protect\citeauthoryear{Horv{\'a}th, Hu{\v{s}}kov{\'a}, Kokoszka, and
  Steinebach}{Horv{\'a}th et~al.}{2004}]{horvath2004monitoring}
Horv{\'a}th, L., M.~Hu{\v{s}}kov{\'a}, P.~Kokoszka, and J.~Steinebach (2004).
\newblock Monitoring changes in linear models.
\newblock {\em Journal of Statistical Planning and Inference\/}~{\em 126\/}(1),
  225--251.

\bibitem[\protect\citeauthoryear{Jirak}{Jirak}{2015}]{jirak2015uniform}
Jirak, M. (2015).
\newblock Uniform change point tests in high dimension.
\newblock {\em The Annals of Statistics\/}~{\em 43\/}(6), 2451--2483.

\bibitem[\protect\citeauthoryear{Kirch, Muhsal, and Ombao}{Kirch
  et~al.}{2015}]{kirch2015detection}
Kirch, C., B.~Muhsal, and H.~Ombao (2015).
\newblock Detection of changes in multivariate time series with application to
  eeg data.
\newblock {\em Journal of the American Statistical Association\/}~{\em
  110\/}(511), 1197--1216.

\bibitem[\protect\citeauthoryear{Lai}{Lai}{1995}]{lai:95}
Lai, T.~L. (1995).
\newblock Sequential changepoint detection in quality control and dynamical
  systems.
\newblock {\em Journal of Royal Statistical Society, Series B (Statistical
  Methodology)\/}~{\em 57}, 613--658.

\bibitem[\protect\citeauthoryear{Lai}{Lai}{2001}]{lai:01}
Lai, T.~L. (2001).
\newblock Sequential analysis: Some classical problems and new challenges.
\newblock {\em Statistica Sinica\/}~{\em 11}, 303--408.

\bibitem[\protect\citeauthoryear{Lei, Zhang, and Jin}{Lei
  et~al.}{2010}]{lei2010automatic}
Lei, Y., Z.~Zhang, and J.~Jin (2010).
\newblock Automatic tonnage monitoring for missing part detection in
  multi-operation forging processes.
\newblock {\em Journal of Manufacturing Science and Engineering\/}~{\em
  132\/}(5).

\bibitem[\protect\citeauthoryear{L{\'e}vy-Leduc and Roueff}{L{\'e}vy-Leduc and
  Roueff}{2009}]{levy2009detection}
L{\'e}vy-Leduc, C. and F.~Roueff (2009).
\newblock Detection and localization of change-points in high-dimensional
  network traffic data.
\newblock {\em The Annals of Applied Statistics\/}~{\em 3\/}(2), 637--662.

\bibitem[\protect\citeauthoryear{Li}{Li}{2020}]{li2020}
Li, J. (2020).
\newblock Efficient global monitoring statistics for high-dimensional data.
\newblock {\em Quality Reliability Engineering International\/}~{\em 36},
  18--32.

\bibitem[\protect\citeauthoryear{Liu, Zhang, and Mei}{Liu
  et~al.}{2019}]{liu:19}
Liu, K., R.~Zhang, and Y.~Mei (2019).
\newblock Scalable sum-shrinkage schemes for distributed monitoring large-scale
  data streams.
\newblock {\em Statistica Sinica\/}~{\em 29}, 1--22.

\bibitem[\protect\citeauthoryear{Lorden}{Lorden}{1971}]{lorden1971}
Lorden, G. (1971).
\newblock Procedures for reacting to a change in distribution.
\newblock {\em Annals of Mathematical Statistics\/}~{\em 42}, 1897--1908.

\bibitem[\protect\citeauthoryear{MacNeill}{MacNeill}{1974}]{macneill1974tests}
MacNeill, I.~B. (1974).
\newblock Tests for change of parameter at unknown times and distributions of
  some related functionals on brownian motion.
\newblock {\em The Annals of Statistics\/}~{\em 2\/}(5), 950--962.

\bibitem[\protect\citeauthoryear{Matteson and James}{Matteson and
  James}{2014}]{matteson2014nonparametric}
Matteson, D.~S. and N.~A. James (2014).
\newblock A nonparametric approach for multiple change point analysis of
  multivariate data.
\newblock {\em Journal of the American Statistical Association\/}~{\em
  109\/}(505), 334--345.

\bibitem[\protect\citeauthoryear{Mei}{Mei}{2010}]{mei:10}
Mei, Y. (2010).
\newblock Efficient scalable schemes for monitoring a large number of data
  streams.
\newblock {\em Biometrika\/}~{\em 97\/}(2), 419--433.

\bibitem[\protect\citeauthoryear{Page}{Page}{1954}]{page1954continuous}
Page, E.~S. (1954).
\newblock Continuous inspection schemes.
\newblock {\em Biometrika\/}~{\em 41\/}(1/2), 100--115.

\bibitem[\protect\citeauthoryear{Perron}{Perron}{2006}]{perron2005dealing}
Perron, P. (2006).
\newblock Dealing with structural breaks.
\newblock {\em Palgrave Handbook of Econometrics Vol. 1: Econometric Theory, K.
  Patterson and T.C. Mills (eds.), Palgrave Macmillan\/}, 278--352.

\bibitem[\protect\citeauthoryear{Polunchenko and Tartakovsky}{Polunchenko and
  Tartakovsky}{2012}]{polunchenko2012state}
Polunchenko, A.~S. and A.~G. Tartakovsky (2012).
\newblock State-of-the-art in sequential change-point detection.
\newblock {\em Methodology and computing in applied probability\/}~{\em
  14\/}(3), 649--684.

\bibitem[\protect\citeauthoryear{Shao}{Shao}{2010}]{shao10}
Shao, X. (2010).
\newblock A self-normalized approach to confidence interval construction in
  time series.
\newblock {\em Journal of Royal Statistical Society, Series B\/}~{\em 72},
  343--366.

\bibitem[\protect\citeauthoryear{Shao}{Shao}{2015}]{shao15}
Shao, X. (2015).
\newblock Self-normalization for time series: a review of recent developments.
\newblock {\em Journal of the American Statistical Association\/}~{\em 110},
  1797--1817.

\bibitem[\protect\citeauthoryear{Shao and Wu}{Shao and Wu}{2007}]{shaowu2007}
Shao, X. and W.~B. Wu (2007).
\newblock Asymptotic spectral theory for nonlinear time series.
\newblock {\em The Annals of Statistics\/}~{\em 35\/}(4), 1773--1801.

\bibitem[\protect\citeauthoryear{Shao and Zhang}{Shao and
  Zhang}{2010}]{shaozhang10}
Shao, X. and X.~Zhang (2010).
\newblock Testing for change points in time series.
\newblock {\em Journal of the American Statistical Association\/}~{\em 105},
  1228--1240.

\bibitem[\protect\citeauthoryear{Wald}{Wald}{1945}]{wald1945}
Wald, A. (1945).
\newblock Sequential tests of statistical hypotheses.
\newblock {\em Annals of Mathematical Statistics\/}~{\em 16}, 117--186.

\bibitem[\protect\citeauthoryear{Wang and Shao}{Wang and
  Shao}{2020}]{wangshao20}
Wang, R. and X.~Shao (2020).
\newblock Dating the break in high-dimensional data.
\newblock {\em Available at https://arxiv.org/pdf/2002.04115.pdf\/}.

\bibitem[\protect\citeauthoryear{Wang, Volgushev, and Shao}{Wang
  et~al.}{2019}]{wang2019inference}
Wang, R., S.~Volgushev, and X.~Shao (2019).
\newblock Inference for change points in high dimensional data.
\newblock {\em arXiv preprint arXiv:1905.08446\/}.

\bibitem[\protect\citeauthoryear{Wang and Samworth}{Wang and
  Samworth}{2018}]{wang2018high}
Wang, T. and R.~J. Samworth (2018).
\newblock High dimensional change point estimation via sparse projection.
\newblock {\em Journal of the Royal Statistical Society: Series B (Statistical
  Methodology)\/}~{\em 80\/}(1), 57--83.

\bibitem[\protect\citeauthoryear{Wang and Mei}{Wang and Mei}{2015}]{wang:15}
Wang, Y. and Y.~Mei (2015).
\newblock Large-scale multi-stream quickest change detection via shrinkage
  post-change estimation.
\newblock {\em IEEE Transactions on Information Theory\/}~{\em 61\/}(12),
  6926--6938.

\bibitem[\protect\citeauthoryear{Wied and Galeano}{Wied and
  Galeano}{2013}]{wied2013monitoring}
Wied, D. and P.~Galeano (2013).
\newblock Monitoring correlation change in a sequence of random variables.
\newblock {\em Journal of Statistical Planning and Inference\/}~{\em 143\/}(1),
  186--196.

\bibitem[\protect\citeauthoryear{Wu}{Wu}{2005}]{wu2005nonlinear}
Wu, W.~B. (2005).
\newblock Nonlinear system theory: Another look at dependence.
\newblock {\em Proceedings of the National Academy of Sciences\/}~{\em
  102\/}(40), 14150--14154.

\bibitem[\protect\citeauthoryear{Wu and Shao}{Wu and Shao}{2004}]{wu2004limit}
Wu, W.~B. and X.~Shao (2004).
\newblock Limit theorems for iterated random functions.
\newblock {\em Journal of Applied Probability\/}~{\em 41\/}(2), 425--436.

\bibitem[\protect\citeauthoryear{Xie and Siegmund}{Xie and
  Siegmund}{2013}]{xie2013sequential}
Xie, Y. and D.~Siegmund (2013).
\newblock Sequential multi-sensor change-point detection.
\newblock {\em The Annals of Statistics\/}~{\em 41\/}(2), 670--692.

\bibitem[\protect\citeauthoryear{Yan, Paynabar, and Shi}{Yan
  et~al.}{2014}]{yan2014image}
Yan, H., K.~Paynabar, and J.~Shi (2014).
\newblock Image-based process monitoring using low-rank tensor decomposition.
\newblock {\em IEEE Transactions on Automation Science and Engineering\/}~{\em
  12\/}(1), 216--227.

\bibitem[\protect\citeauthoryear{Yan, Paynabar, and Shi}{Yan
  et~al.}{2018}]{yan2018real}
Yan, H., K.~Paynabar, and J.~Shi (2018).
\newblock Real-time monitoring of high-dimensional functional data streams via
  spatio-temporal smooth sparse decomposition.
\newblock {\em Technometrics\/}~{\em 60\/}(2), 181--197.

\bibitem[\protect\citeauthoryear{Yu and Chen}{Yu and Chen}{2017}]{yu2017finite}
Yu, M. and X.~Chen (2017).
\newblock Finite sample change point inference and identification for
  high-dimensional mean vectors.
\newblock {\em arXiv preprint arXiv:1711.08747\/}.

\bibitem[\protect\citeauthoryear{Yu and Chen}{Yu and Chen}{2019}]{yu2019robust}
Yu, M. and X.~Chen (2019).
\newblock A robust bootstrap change point test for high-dimensional location
  parameter.
\newblock {\em arXiv preprint arXiv:1904.03372\/}.

\bibitem[\protect\citeauthoryear{Zhang, Wang, and Shao}{Zhang
  et~al.}{2020}]{zhangwangshao20}
Zhang, Y., R.~Wang, and X.~Shao (2020).
\newblock Adaptive change point inference for high-dimensional data.
\newblock {\em Preprint\/}.

\bibitem[\protect\citeauthoryear{Zou, Wang, Zi, and Jiang}{Zou
  et~al.}{2015}]{zou2015efficient}
Zou, C., Z.~Wang, X.~Zi, and W.~Jiang (2015).
\newblock An efficient online monitoring method for high-dimensional data
  streams.
\newblock {\em Technometrics\/}~{\em 57\/}(3), 374--387.

\end{thebibliography}

\vskip .65cm
\noindent
Teng Wu, Department of Statistics, University of Illinois at Urbana-Champaign
\vskip 2pt
\noindent
E-mail: tengwu2@illinois.edu
\vskip 2pt

\noindent
Runmin Wang, Department of Statistical Science, Southern Methodist University
\vskip 2pt
\noindent
E-mail: runminw@smu.edu
\vskip 2pt
\noindent
Hao Yan, School of Computing Informatics \& Decision Systems Engineering, Arizona State University
\vskip 2pt
\noindent
E-mail:  haoyan@asu.edu
\vskip 2pt
\noindent
Xiaofeng Shao, Department of Statistics, University of Illinois at Urbana-Champaign
\vskip 2pt
\noindent
E-mail: xshao@illinois.edu
\end{document}



\renewcommand{\baselinestretch}{1}

\markright{ \hbox{\footnotesize\rm Statistica Sinica: Supplement
}\hfill\\[-13pt]
\hbox{\footnotesize\rm
}\hfill }

\markboth{\hfill{\footnotesize\rm Teng Wu, Runmin Wang, Hao Yan, Xiaofeng Shao} \hfill}
{\hfill {\footnotesize\rm Supplementary} \hfill}

\renewcommand{\thefootnote}{}
$\ $\par \fontsize{12}{14pt plus.8pt minus .6pt}\selectfont


 \centerline{\large\bf Supplementary Materials for ``Adaptive Change Point Monitoring for High-Dimensional Data}
\vspace{2pt}
 \centerline{Teng Wu, Runmin Wang, Hao Yan, Xiaofeng Shao}
\vspace{.4cm}
 \centerline{\it Department of
		Statistics, University of Illinois at Urbana-Champaign}
\centerline{\it Department of Statistical Science, Southern Methodist University}
\centerline{ \it School of Computing Informatics \& Decision Systems Engineering, Arizona State University}
\vspace{.55cm}
 \centerline{\bf Supplementary Material}
\vspace{.55cm}
\fontsize{9}{11.5pt plus.8pt minus .6pt}\selectfont
\noindent
\par

\setcounter{section}{0}
\setcounter{equation}{0}
\def\theequation{S\arabic{section}.\arabic{equation}}
\def\thesection{S\arabic{section}}

\fontsize{12}{14pt plus.8pt minus .6pt}\selectfont

\section{Technical results}

\begin{proof}[Proof of Theorem 1]
We can directly apply the results shown in \cite{wang2019inference} for the partial sum process 
\[S_n(a,b) = \sum_{i = \floor{na} + 1}^{\floor{nb}-1}\sum_{j = \floor{na} + 1}^{i} X_{i+1}^T X_j.\] The partial sum process 
\[\Big\{ \frac{\sqrt{2}}{n||\Sigma||_F} S_n(a,b)\Big\}_{(a,b) \in [0,T]^2} \rightsquigarrow Q \quad in \quad l^\infty([0,T]^2)\]
where $Q$ is a Gaussian process whose covariance structure is the following
\begin{equation*}
    Cov(Q(a_1,b_1),(a_2,b_2)) = \begin{cases}
    (\min(b_1,b_2) - \max(a_1,a_2))^2 & if\quad \max(a_1,a_2) \le \min(b_1,b_2)\\
    0 & otherwise
    \end{cases}
\end{equation*}
The test statistic is a continuous transformation of the Gaussian process and the results stated follows. 
\end{proof}
\begin{proof}[Proof of Theorem 2]
We now analyze the power of the first proposed test. Suppose the change point is at $k^*$, where $k^*/n \to r$ for some constant $r \in (1, T)$. This assures that the change point does not occur extremely early or late in the monitoring period. Under the alternative hypothesis, define a new sequence of random vectors $Y_i$,
\[Y_ i = \begin{cases} X_i & i = 1,\ldots,k^*\\
X_i - \Delta & i = k^*+1,\ldots, n
\end{cases}. \]
This sequence does not have a change point. Without loss of generosity, assume $Y_i$'s are centered.

Suppose that 
\[\frac{n\Delta^T\Delta}{||\Sigma||_F} \to b \in [0 +\infty).\]
When $m < k < k^*$, $G_k(m)$ statistic will not be affected. It suffices to consider the case $m < k^* <k$ and $k^*< m < k$.
Following the decomposition in \cite{wang2019inference}, 
under the fixed alternative when $k^* > m$,
\begin{align*}
     G_k(m) & = G^Y_k(m) + {(k-k^*)(k-k^*-1)m(m-1)}||\Delta||_2^2 \\
     & \quad -  {2(k-k^*)(k-m-2)(m-1)}\sum_{j = 1}^m Y_j^T \Delta \\
     & \quad - 4(m-1)(m-2)(k-k^*)\sum_{j = m+1}^{k^*}Y_j^T \Delta.
\end{align*}
$G_n^Y(m)$ is the statistic calculated for the $Y_i$ sequence. Let $s_n(k) = \sum_{j = 1}^k Y_j^T \Delta$. Then 
\[\sup_{1 \le l\le k \le  nT}|\sum_{j = l}^k Y_j^T \Delta | \le 2 \sup_{1\le k \le nT}|s_n(k)|  = O_p(n^{1/2} (\Delta^T \Sigma \Delta)^{1/2}). \]
The last part is obtained by Kolmogorov's inequality. This implies that when $k^* > m$,
\begin{align*}
    \frac{1}{n^3\|\Sigma\|_F}G_k(m)
    & =\frac{1}{n^3\|\Sigma\|_F} G_k^Y(m) +  \frac{(k-k^*)(k-k^*-1)m(m-1)}{n^3}\frac{||\Delta||_2^2}{||\Sigma||_F} + O_p(\frac{n^{1/2}(\Delta^T \Sigma \Delta)^{1/2}}{||\Sigma||_F}).
\end{align*}
Similarly, we can show when $k^* >  m$
\[\frac{1}{n^3\|\Sigma\|_F}G_k(m) = \frac{1}{n\|\Sigma\|_F}G_k^Y(m) +  \frac{k^*(k^*-1)(k-m)(k-m-1)}{n^3}\frac{||\Delta||_2^2}{||\Sigma||_F} + O_p(\frac{n^{1/2}(\Delta^T \Sigma \Delta)^{1/2}}{||\Sigma||_F}).\]
The last part is converging to 0 in probability. Therefore, the test statistic $T_n$ can be viewed as an extension to the original process. The second terms are also a process depend on $m$ and $k^*$. Under the fixed alternative, the $G_k(m)$ converge to the process
\[\frac{1}{n^3\|\Sigma\|_F}\{G_{\floor{nt}}(\floor{ns}) \}_{s\in[0,1]} \to G(s,t) + b\Lambda(s,t),\]
where
\[\Lambda(s,t) = \begin{cases} (t-r)^2s^2 & s \le r\\
r^2 (t-s)^2  & s > r\\
0 & otherwise
\end{cases}.\]
This implies that, when $b = 0$, the process is the same with the null process, and the proposed monitoring scheme will have trivial power. When the $b$ is not zero, since the remainder term is positive, we will have non -trivial power.

When
\[\frac{n\Delta^T\Delta}{||\Sigma||_F} \to \infty.\]
Following above decomposition, we have
\[
\max_k T_n(k) \ge T_{n}(k^*) = \frac{1}{n\|\Sigma\|_F}D_{nT}^Y(k^*) + O(\frac{n||\Delta||_2^2}{||\Sigma||_F}) \to \infty
\]
Since the first term is pivotal and is bounded in probability, the test have power converging to 1.
\end{proof}

\begin{proof}[Proof of Theorem 3]
We can directly apply the results in Theorem 2.1 and 2.2 in Zhang et al.(2020), which stated that for
\[S_{n,q,c}(r;[a,b]) = \sum_{l = 1}^p\sum_{\floor{na} +1\le i_1, \ldots,i_c\le \floor{nr}}^*  \sum_{\floor{nr}+1 \le j_1, \ldots,j_{q-c}\le \floor{nb}}^*  \prod_{t=1}^cX_{i_t, l}\prod_{g = 1}^{q-c}X_{j_g, l},\]
we have 
\[\frac{1}{\sqrt{n^q \|\Sigma\|_q^q}}S_{n,q,c}(r;[a,b]) \weak Q_{q,c}(r;[a,b]), \]
where $Q_{q,c}$ is the Gaussian process stated in Theorem 4. The monitoring statistic is a continuous transformation of process $S_{n,q,c}$'s and the asymptotic result follows. 
\end{proof}
\begin{proof}[Proof of Theorem 4]
We first discuss the case when $\frac{n^{q/2}\|\Delta\|_q^q}{\|\Sigma\|_q^{q/2}} \to \gamma \in [0 , +\infty)$ and the true change point is at location  $k^* = \floor{nr}$. Here we adopt the process convergence results in Theorem 2.3 of Zhang et al.(2020), which stated that for $(k,m) = (\floor{ns}, \floor{nt})$,
\begin{align*}
 \frac{1}{\sqrt{n^{3q}\|\Sigma\|_q^q}}D_{n,q}(s;[0,b]) &= \frac{1}{\sqrt{n^{3q}\| \Sigma\|_q^q}}\sum_{l = 1}^p\sum_{0\le i_1, \ldots,i_q\le k}^*  \sum_{k+1 \le j_1, \ldots,j_q\le m}^*  (X_{i_1, l} - X_{j_1, l}) \cdots (X_{i_q, l} - X_{j_q, l}),\\
 & \weak G_q(s,t) + \gamma J_q(s;[0,t])
\end{align*}
where 

 \[ J_q(s;[0,t]) = \begin{cases} 
r^q(t-s)^q & r\le s < t\\
s^q(t-r)^q & s \le r < t\\
0 & otherwise
\end{cases}
\]
Therefore, by continuous mapping theorem, when $\gamma \in [0, + \infty)$, the results in the theorem hold. 

For the case $\frac{n^{q/2}\|\Delta\|_q^q}{\|\Sigma\|_q^{q/2}}  \to +\infty$
\[
\max_k T_{n,q}(k) \ge T_{n,q}(k^*) = \frac{1}{n\|\Sigma\|_F}D_{nT}^Y(k^*) + C \frac{n^{q/2}\|\Delta\|_q^q}{\|\Sigma\|_q^{q/2}} \to \infty
\]
\end{proof}
	\begin{proof}[Proof of Theorem 5]
	By straightforward calculation, we have
	\begin{align*}
	    \widehat{\|\Sigma\|_F^2} &= \frac{1}{4{n \choose 4}}\sum_{1 \leq j_1 < j_2 < j_3 < j_4 \leq n}tr\left((X_{j_1} - X_{j_2})(X_{j_1} - X_{j_2})^T(X_{j_3} - X_{j_4})(X_{j_3} - X_{j_4})^T\right)\\
	    &= \frac{1}{4{n \choose 4}}\sum_{1 \leq j_1 < j_2 < j_3 < j_4 \leq n}[(X_{j_1} - X_{j_2})^T(X_{j_3} - X_{j_4})]^2\\
	    &= \frac{1}{4{n \choose 4}}\sum_{1 \leq j_1 < j_2 < j_3 < j_4 \leq n}[(X_{j_1}^TX_{j_3})^2 + (X_{j_2}^TX_{j_3})^2 + (X_{j_2}^TX_{j_4})^2 + (X_{j_1}^TX_{j_4})^2]\\
	    &- \frac{2}{4{n \choose 4}}\sum_{1 \leq j_1 < j_2 < j_3 < j_4 \leq n}[X_{j_1}^TX_{j_3}X_{j_1}^TX_{j_4} + X_{j_2}^TX_{j_3}X_{j_2}^TX_{j_4} +  X_{j_1}^TX_{j_3}X_{j_2}^TX_{j_3} + X_{j_1}^TX_{j_4}X_{j_2}^TX_{j_4}]\\
	    &+ \frac{2}{4{n \choose 4}}\sum_{1 \leq j_1 < j_2 < j_3 < j_4 \leq n}[X_{j_1}^TX_{j_3}X_{j_2}^TX_{j_4} + X_{j_2}^TX_{j_3}X_{j_1}^TX_{j_4}]\\
	    & = I_{n,1}+I_{n,2}+I_{n,3}+I_{n,4} - (I_{n,5}+I_{n,6}+I_{n,7}+I_{n,8}) + (I_{n,9}+I_{n,10}).
	\end{align*}
	
	For $I_{n,1}$,
	$$\E[I_{n,1}] = \frac{1}{4{n \choose 4}}\sum_{1 \leq j_1 < j_2 < j_3 < j_4 \leq n}\E[(X_{j_1}^TX_{j_3})^2] = \frac{1}{4}tr(\E[X_{j_3}X_{j_3}^TX_{j_1}X_{j_1}^T]) = \|\Sigma\|_F^2/4.$$
	Thus $\E[I_{n,1}/\|\Sigma\|_F^2] = 1/4$. By similar arguments, it is obvious to see that $\E[I_{n,i}/\|\Sigma\|_F^2] = 1/4$ for $i = 1,2,3,4$, and $\E[I_{n,i}/\|\Sigma\|_F^2] = 0$ for $i = 5,...,10$.
	
	The outline of the proof is as following. We will show that $4I_{n,i}/\|\Sigma\|_F^2 \rightarrow_p 1$ for $i = 1,2,3,4$, and $I_{n,i}/\|\Sigma\|_F^2 \rightarrow_p 0$, for $i = 5,...,10$. Since some of the $I_{n,i}$ share very similar structures, we will only present the proof for (1) $4I_{n,1}/\|\Sigma\|_F^2 \rightarrow_p 1$ and (2) $I_{n,5}/\|\Sigma\|_F^2 \rightarrow_p 0$. Other terms can be proved by similar arguments. 
	
	To show (1), it suffices to show that $\E[16I_{n,1}^2/\|\Sigma\|_F^4] \rightarrow 1$. To see this,
	\begin{align*}
	    &\E[16I_{n,1}^2/\|\Sigma\|_F^4] = \frac{1}{{n \choose 4}^2\|\Sigma\|_F^4}\sum_{1 \leq j_1 < j_2 < j_3 < j_4 \leq n}\sum_{1 \leq j_5 < j_6 < j_7 < j_8 \leq n}\E[(X_{j_1}^TX_{j_3})^2(X_{j_5}^TX_{j_7})^2]\\
	    &=\frac{1}{{n \choose 4}^2\|\Sigma\|_F^4}\sum_{1 \leq j_1 < j_2 < j_3 < j_4 \leq n}\sum_{1 \leq j_5 < j_6 < j_7 < j_8 \leq n}\sum_{l_1,l_2,l_3,l_4 = 1}^p\E[X_{j_1,l_1}X_{j_3,l_1}X_{j_1,l_2}X_{j_3,l_2}X_{j_5,l_3}X_{j_7,l_3}X_{j_5,l_4}X_{j_7,l_4}].
	\end{align*}
	As we know that the expectation of a product of random variables can be expressed in terms of joint cumulants, we have
	$$\E[X_{j_1,l_1}X_{j_3,l_1}X_{j_1,l_2}X_{j_3,l_2}X_{j_5,l_3}X_{j_7,l_3}X_{j_5,l_4}X_{j_7,l_4}] = \sum_{\pi}\prod_{B \in \pi}cum(X_{j,l}: (j,l) \in B),$$
	where $\pi$ runs through the list of all partitions of $\{(j_1,l_1), (j_1,l_2),...,(j_7,l_3), (j_7,l_4)\}$ and $B$ runs through the list of all blocks of the partition $\pi$. Since $j_1 < j_3$ and $j_5 < j_7$, it is impossible to have three or more indices in $\{j_1,j_3,j_5,j_7\}$ such that they are identical. Thus for the right hand side of the above expression, we only need to take the sum over all partitions with all block sizes smaller than $5$, because for joint cumulants with order greater than $5$, it must contain at least 3 indices from $j_1,j_3,j_5,j_7$ and at least one is not identical to the other two. And the joint cumulants will equal to zero since it involves two or more independent random variables. 
	
	Also since the mean of all random variables included in the left hand side of the above expression are all zero, we do not need to consider the partition with block size 1. Thus the expression can be simplified as 
	\begin{align*}
	    &\E[X_{j_1,l_1}X_{j_3,l_1}X_{j_1,l_2}X_{j_3,l_2}X_{j_5,l_3}X_{j_7,l_3}X_{j_5,l_4}X_{j_7,l_4}] \\
	   = & C_1^{(j_1,j_3,j_5,j_7)}\E[X_{0,l_1}X_{0,l_2}X_{0,l_3}X_{0,l_4}]^2 +  C_2^{(j_1,j_3,j_5,j_7)}\E[X_{0,l_1}X_{0,l_2}X_{0,l_3}X_{0,l_4}]\Sigma_{l_1,l_2}\Sigma_{l_3,l_4}\\
	   & + \Sigma_{l_1,l_2}^2\Sigma_{l_3,l_4}^2,
	\end{align*}
	where $C_1^{(j_1,j_3,j_5,j_7)}$, $C_2^{(j_1,j_3,j_5,j_7)}$ are finite positive constants purely based on the value of $j_1,j_3,j_5,j_7$. $C_1^{(j_1,j_3,j_5,j_7)}$ can only be nonzero if $j_1 = j_5$ and $j_3 = j_7$, and $C_2^{(j_1,j_3,j_5,j_7)}$ is nonzero if at least two of $(j_1, j_3, j_5, j_7)$ are equal. This implies that $$\sum_{1 \leq j_1 < j_2 < j_3 < j_4 \leq n}\sum_{1 \leq j_5 < j_6 < j_7 < j_8 \leq n}C_1^{(j_1,j_3,j_5,j_7)} = o(n^8),$$
	and
	$$\sum_{1 \leq j_1 < j_2 < j_3 < j_4 \leq n}\sum_{1 \leq j_5 < j_6 < j_7 < j_8 \leq n}C_2^{(j_1,j_3,j_5,j_7)} = o(n^8).$$
	
	Furthermore, according to Assumption 2, $\sum_{l_1,l_2,l_3,l_4 = 1}^pcum(X_{0,l_1},X_{0,l_2}, X_{0,l_3}, X_{0,l_4})^2 \leq C\|\Sigma\|_F^4$. It can be verified that  
	\begin{align*}
	    \sum_{l_1,l_2,l_3,l_4 = 1}^p\E[X_{0,l_1}X_{0,l_2}X_{0,l_3}X_{0,l_4}]^2 &\lesssim \sum_{l_1,l_2,l_3,l_4 = 1}^pcum(X_{0,l_1},X_{0,l_2}, X_{0,l_3}, X_{0,l_4})^2 \\
	    & + \sum_{l_1,l_2,l_3,l_4 = 1}^p\Sigma_{l_1,l_2}^2\Sigma_{l_3,l_4}^2\\
	    &\lesssim \|\Sigma\|_F^4,
	\end{align*}
	and by using the Cauchy-Schwartz inequaility,
	\begin{align}
	    &\sum_{l_1,l_2,l_3,l_4 = 1}^p\E[X_{0,l_1}X_{0,l_2}X_{0,l_3}X_{0,l_4}]\Sigma_{l_1,l_2}\Sigma_{l_3,l_4}\nonumber\\
	    \leq &\sqrt{\sum_{l_1,l_2,l_3,l_4 = 1}^p\E[X_{0,l_1}X_{0,l_2}X_{0,l_3}X_{0,l_4}]^2}\sqrt{\sum_{l_1,l_2,l_3,l_4 = 1}^p\Sigma_{l_1,l_2}^2\Sigma_{l_3,l_4}^2} \leq  \sqrt{C}\|\Sigma\|_F^4.\label{eq:bound}
	\end{align}
	This indicates that 
	\begin{align*}
	    &\E[16I_{n,1}^2/\|\Sigma\|_F^4] \\
	    =& \frac{1}{{n \choose 4}^2\|\Sigma\|_F^4}\sum_{1 \leq j_1 < j_2 < j_3 < j_4 \leq n}\sum_{1 \leq j_5 < j_6 < j_7 < j_8 \leq n}C_1^{(j_1,j_3,j_5,j_7)}\sum_{l_1,l_2,l_3,l_4 = 1}^p\E[X_{0,l_1}X_{0,l_2}X_{0,l_3}X_{0,l_4}]^2\\
	    +& \frac{1}{{n \choose 4}^2\|\Sigma\|_F^4}\sum_{1 \leq j_1 < j_2 < j_3 < j_4 \leq n}\sum_{1 \leq j_5 < j_6 < j_7 < j_8 \leq n}C_2^{(j_1,j_3,j_5,j_7)}\sum_{l_1,l_2,l_3,l_4 = 1}^p\E[X_{0,l_1}X_{0,l_2}X_{0,l_3}X_{0,l_4}]\Sigma_{l_1,l_2}\Sigma_{l_3,l_4}\\
	    +& \frac{1}{{n \choose 4}^2\|\Sigma\|_F^4}\sum_{1 \leq j_1 < j_2 < j_3 < j_4 \leq n}\sum_{1 \leq j_5 < j_6 < j_7 < j_8 \leq n}\sum_{l_1,l_2,l_3,l_4 = 1}^p\Sigma_{l_1,l_2}^2\Sigma_{l_3,l_4}^2  = o(1) + o(1) + 1 \rightarrow 1.
	\end{align*}
	Thus, $4I_{n,1}/\|\Sigma\|_F^2 \rightarrow_p 1$, and (1) is proved. By similar arguments, $4I_{n,i}/\|\Sigma\|_F^2 \rightarrow_p 1$ holds for $i = 2,3,4$.
	
	To show (2), we need to prove $\E[I_{n,5}^2/\|\Sigma\|_F^4] \rightarrow 0$. To see this, 
	\begin{align*}
	    &\E[I_{n,5}^2/\|\Sigma\|_F^4] = \frac{1}{4{n \choose 4}^2\|\Sigma\|_F^4}\sum_{1 \leq j_1 < j_2 < j_3 < j_4 \leq n}\sum_{1 \leq j_5 < j_6 < j_7 < j_8 \leq n}\E[(X_{j_1}^TX_{j_3}X_{j_1}^TX_{j_4})(X_{j_5}^TX_{j_7}X_{j_5}^TX_{j_8})]\\
	    =&\frac{1}{{n \choose 4}^2\|\Sigma\|_F^4}\sum_{1 \leq j_1 < j_2 < j_3 < j_4 \leq n}\sum_{1 \leq j_5 < j_6 < j_7 < j_8 \leq n}\sum_{l_1,l_2,l_3,l_4 = 1}^p\E[X_{j_1,l_1}X_{j_3,l_1}X_{j_1,l_2}X_{j_4,l_2}X_{j_5,l_3}X_{j_7,l_3}X_{j_5,l_4}X_{j_8,l_4}].
	\end{align*}
	
	By similar arguments for the joint cumulants we provided in the the proof for (1), it can be proved that
    \begin{align*}
        &\E[X_{j_1,l_1}X_{j_3,l_1}X_{j_1,l_2}X_{j_4,l_2}X_{j_5,l_3}X_{j_7,l_3}X_{j_5,l_4}X_{j_8,l_4}]\\
        =&C_1^{(j_1,j_3,j_4,j_5,j_7,j_8)}\E[X_{0,l_1}X_{0,l_2}X_{0,l_3}X_{0,l_4}]\Sigma_{l_1,l_3}\Sigma_{l_2,l_4} + C_2^{(j_1,j_3,j_4,j_5,j_7,j_8)}\Sigma_{l_1,l_2}\Sigma_{l_3,l_4}\Sigma_{l_1,l_3}\Sigma_{l_2,l_4}.
    \end{align*}
    
    If $C_1^{(j_1,j_3,j_4,j_5,j_7,j_8)} \neq 0$, then $j_1 = j_5$. And if $C_2^{(j_1,j_3,j_4,j_5,j_7,j_8)} \neq 0$, $j_3 = j_5$ and $j_4 = j_8$. These two properties guarantee that 
    $$\sum_{1 \leq j_1 < j_2 < j_3 < j_4 \leq n}\sum_{1 \leq j_5 < j_6 < j_7 < j_8 \leq n}C_1^{(j_1,j_3,j_4,j_5,j_7,j_8)} = o(n^8),$$
    and
    $$\sum_{1 \leq j_1 < j_2 < j_3 < j_4 \leq n}\sum_{1 \leq j_5 < j_6 < j_7 < j_8 \leq n}C_2^{(j_1,j_3,j_4,j_5,j_7,j_8)} = o(n^8).$$
    Furthermore we have shown the bound for  $\sum_{l_1,l_2,l_3,l_4 = 1}^p\E[X_{0,l_1}X_{0,l_2}X_{0,l_3}X_{0,l_4}]\Sigma_{l_1,l_2}\Sigma_{l_3,l_4}$ in (\ref{eq:bound}). And
    \begin{align*}
        &\sum_{l_1,l_2,l_3,l_4 = 1}^p\Sigma_{l_1,l_2}\Sigma_{l_3,l_4}\Sigma_{l_1,l_3}\Sigma_{l_2,l_4} = \sum_{l_1,l_4 = 1}^p\left(\sum_{l_2 = 1}^p\Sigma_{l_1,l_2}\Sigma_{l_2,l_4}\right)\left(\sum_{l_3=1}^p\Sigma_{l_1,l_3}\Sigma_{l_3,l_4}\right)\\
        =&\sum_{l_1,l_4 = 1}^p[(\Sigma^2)_{l_1,l_4}]^2 = tr(\Sigma^4) = o(\|\Sigma\|_F^4),
    \end{align*}
    by Assumption 1. Thus,
    \begin{align*}
	    &\E[I_{n,5}^2/\|\Sigma\|_F^4] \\
	    =& \frac{1}{4{n \choose 4}^2\|\Sigma\|_F^4}\sum_{1 \leq j_1 < j_2 < j_3 < j_4 \leq n}\sum_{1 \leq j_5 < j_6 < j_7 < j_8 \leq n}C_1^{(j_1,j_3,j_4,j_5,j_7,j_8)}\sum_{l_1,l_2,l_3,l_4 = 1}^p\E[X_{0,l_1}X_{0,l_2}X_{0,l_3}X_{0,l_4}]\Sigma_{l_1,l_3}\Sigma_{l_2,l_4}\\
	    +&\frac{1}{4{n \choose 4}^2\|\Sigma\|_F^4}\sum_{1 \leq j_1 < j_2 < j_3 < j_4 \leq n}\sum_{1 \leq j_5 < j_6 < j_7 < j_8 \leq n} C_2^{(j_1,j_3,j_4,j_5,j_7,j_8)}\sum_{l_1,l_2,l_3,l_4 = 1}^p\Sigma_{l_1,l_2}\Sigma_{l_3,l_4}\Sigma_{l_1,l_3}\Sigma_{l_2,l_4}\\
	    =&o(1) + o(1) \rightarrow 1.
	\end{align*}
	This indicates $I_{n,5}/\|\Sigma\|_F^2 \rightarrow_p 0$. And by similar arguments we can prove that $I_{n,i}/\|\Sigma\|_F^2 \rightarrow_p 0$, for all $i = 6,...,10$. Combine the above results, we have $\widehat{\|\Sigma\|_F^2}/\|\Sigma\|_F^2 \rightarrow_p 1$. This completes the proof.

	\end{proof}

	\begin{proof}[Proof of Theorem 6]
	We can rewrite $\widehat{\|\Sigma\|_q^q}$ as
	\begin{align*}
	    &\widehat{\|\Sigma\|_q^q} = \frac{1}{2^q{n \choose 2q}}\sum_{l_1,l_2 = 1}^p\sum_{1 \leq i_1 < \cdots < i_q < j_1 < \cdots < j_q \leq n} \prod_{k = 1}^q (X_{i_k,l_1}X_{i_k,l_2} + X_{j_k,l_1}X_{j_k,l_2} - X_{i_k,l_1}X_{j_k,l_2} - X_{j_k,l_1}X_{i_k,l_2})\\
	    =&\frac{1}{2^q{n \choose 2q}}\sum_{1 \leq i_1 < \cdots < i_q < j_1 < \cdots < j_q \leq n} \sum_{t_1,s_1 \in \{i_1,j_1\}}\cdots\sum_{t_q,s_q \in \{i_q,j_q\}}\sum_{l_1,l_2 = 1}^p\prod_{k = 1}^q (-1)^{\1{t_k \neq s_k}}X_{t_k,l_1}X_{s_k,l_2}\\
	    =& \frac{1}{2^q{n \choose 2q}}\sum_{1 \leq i_1 < \cdots < i_q < j_1 < \cdots < j_q \leq n}\sum_{t_1 \in \{i_1,j_1\}}\cdots\sum_{t_q \in \{i_q,j_q\}}\sum_{l_1,l_2 = 1}^p \prod_{k = 1}^q X_{t_k,l_1}X_{t_k,l_2} \\
	    &+ \frac{1}{2^q{n \choose 2q}}\sum_{1 \leq i_1 < \cdots < i_q < j_1 < \cdots < j_q \leq n} \sum_{t_1,s_1 \in \{i_1,j_1\}}\cdots\sum_{t_q,s_q \in \{i_q,j_q\}}\\
	    & \qquad 
	    \qquad \sum_{l_1,l_2 = 1}^p\1{\cup_{k = 1}^q \{t_k \neq s_k\}}\prod_{k = 1}^q (-1)^{\1{t_k \neq s_k}}X_{t_k,l_1}X_{s_k,l_2}.
	\end{align*}
	The second equality in the above expression is by calculating the cross products among $q$ brackets, and the third equality is splitting the terms based on different values of $t_k,s_k$ for $k = 1,...,q$. The first term in the third equality contains all products with $t_k = s_k$ for all $k = 1,...,q$, and the second term contains products with at least one $k = 1,...,q$ such that $t_k \neq s_k$.
	
	The outline of the proof is as follows. We want to show:
	\begin{enumerate}
	    \item for every $t_1 \in \{i_1,j_1\},...,t_q \in \{i_q,j_q\}$,
	    $$I({t_1,...,t_q}) = \frac{1}{{n \choose 2q}\|\Sigma\|_q^q}\sum_{1 \leq i_1 < \cdots < i_q < j_1 < \cdots < j_q \leq n}\sum_{l_1,l_2 = 1}^p\prod_{k = 1}^qX_{t_k,l_1}X_{t_k,l_2} \rightarrow_p 1;$$
	    \item for every $t_1,s_1 \in \{i_1,j_1\},...,t_q,s_q \in \{i_q,j_q\}$ and there exists at least one $k = 1,...,q$ such that $t_k \neq s_k$,
	    $$J({t_1,s_1,...,t_q,s_q}) = \frac{1}{{n \choose 2q}\|\Sigma\|_q^q}\sum_{1 \leq i_1 < \cdots < i_q < j_1 < \cdots < j_q \leq n}\sum_{l_1,l_2 = 1}^p\prod_{k = 1}^qX_{t_k,l_1}X_{s_k,l_2}\rightarrow_p 0.$$
	\end{enumerate}
	And it is easy to see that if these two results hold, then $\widehat{\|\Sigma\|_q^q}/\|\Sigma\|_q^q \rightarrow_p 1$. As we observe that most of terms are structurally very similar, we shall only present the proof for $I({i_1,...,i_q}) \rightarrow_p 1$ and a general proof for (2). 
	
	It is trivial to see that $\E[\sum_{l_1,l_2 = 1}^p\prod_{k = 1}^qX_{t_k,l_1}X_{t_k,l_2}/\|\Sigma\|_q^q] = 1$. This indicates that to show (1), it suffices to show that $\E[I(i_1,...,i_q)^2] \rightarrow 1$. To show this,
	
	\begin{align*}
	    \E[I(i_1,...,i_q)^2] =& \frac{1}{{n \choose 2q}^2\|\Sigma\|_q^{2q}}\sum_{1 \leq i_1 < \cdots < i_q < j_1 < \cdots < j_q \leq n}\sum_{1 \leq i_1' < \cdots < i_q' < j_1' < \cdots < j_q' \leq n}\\
	    &\qquad\sum_{l_1,l_2,l_3,l_4 = 1}^p\E\left[\prod_{k = 1}^qX_{i_k,l_1}X_{i_k,l_2}X_{i_k',l_3}X_{i_k',l_4}\right].
	\end{align*}
	
	Due to the special structure of our statistic, 
	$$\E\left[\prod_{k = 1}^qX_{i_k,l_1}X_{i_k,l_2}X_{i_k',l_3}X_{i_k',l_4}\right] = \sum_{m = 0}^{q}C_m E(X_{0,l_1}X_{0,l_2}X_{0,l_3}X_{0,l_4})^m (\Sigma_{l_1,l_2}\Sigma_{l_3,l_4})^{q-m},$$
	where $C_m = C_m(i_1,...,i_q,i_1',...,i_q') \geq 0$ is a function of all indices for all $m = 1,2,...,q$. $C_m = 1$ if there are exact $m$ indices in $\{i_1,...,i_q\}$ which equal to $m$ indices in $\{i_1',...,i_q'\}$, and $C_m = 0$ otherwise. These events are mutually exclusive which indicates that $\sum_{m = 0}^qC_m = 1$. This indicates that for all $m = 1,...,q$, 
	$$\sum_{1 \leq i_1 < \cdots < i_q < j_1 < \cdots < j_q \leq n}\sum_{1 \leq i_1' < \cdots < i_q' < j_1' < \cdots < j_q' \leq n}C_m(i_1,...,i_q,i_1',...,i_q') = o(n^{4q}),$$
	and 
	$$\frac{1}{{n \choose 2q}^2}\sum_{1 \leq i_1 < \cdots < i_q < j_1 < \cdots < j_q \leq n}\sum_{1 \leq i_1' < \cdots < i_q' < j_1' < \cdots < j_q' \leq n}C_0(i_1,...,i_q,i_1',...,i_q') \rightarrow 1.$$
	Furthermore, for any $m = 1,...,q$, by H\"older's inequaility for vector spaces, we have 
	\begin{align*}
	    &\sum_{l_1,l_2,l_3,l_4 = 1}^p|E(X_{0,l_1}X_{0,l_2}X_{0,l_3}X_{0,l_4})|^m |\Sigma_{l_1,l_2}\Sigma_{l_3,l_4}|^{q-m}\\
	    \leq &  \left(\sum_{l_1,l_2,l_3,l_4 = 1}^p|E(X_{0,l_1}X_{0,l_2}X_{0,l_3}X_{0,l_4})^m|^{q/m}\right)^{m/q}\left(\sum_{l_1,l_2,l_3,l_4 = 1}^p(|\Sigma_{l_1,l_2}\Sigma_{l_3,l_4}|^{q-m})^{q/(q-m)}\right)^{(q-m)/q}\\
	    =& \left(\sum_{l_1,l_2,l_3,l_4 = 1}^p|E(X_{0,l_1}X_{0,l_2}X_{0,l_3}X_{0,l_4})|^q\right)^{m/q}\left(\sum_{l_1,l_2,l_3,l_4 = 1}^p|\Sigma_{l_1,l_2}|^q|\Sigma_{l_3,l_4}|^q\right)^{(q-m)/q} \\
	    \leq&  C\|\Sigma\|_q^{2m} \|\Sigma\|_q^{2(q-m)} = C\|\Sigma\|_q^{2q},
	\end{align*}
	where the last inequality is due to Assumption 5, and to see this, 
	\begin{align*}
	    &\sum_{l_1,l_2,l_3,l_4 = 1}^p|E(X_{0,l_1}X_{0,l_2}X_{0,l_3}X_{0,l_4})|^q \\
	    \leq&  C\sum_{l_1,l_2,l_3,l_4 = 1}^p|cum(X_{0,l_1},X_{0,l_2},X_{0,l_3},X_{0,l_4})|^q + |\Sigma_{l_1,l_2}\Sigma_{l_3,l_4}|^q + |\Sigma_{l_1,l_3}\Sigma_{l_2,l_4}|^q +|\Sigma_{l_1,l_4}\Sigma_{l_2,l_3}|^q\\
	    \leq& C\sum_{1 \leq l_1 \leq l_2 \leq l_3 \leq l_4 \leq p}(1 \vee (l_4 - l_1))^{-2rq} + 3C\|\Sigma\|_q^{2q} \leq Cp^2 + 3C\|\Sigma\|_q^{2q} \leq C\|\Sigma\|_q^{2q},
	\end{align*}
	for some generic positive constant $C$, since $\|\Sigma\|_q^{2q} = (\sum_{i,j = 1}^p\Sigma_{i,j}^q)^2 \geq Cp^2$ under Assumption (5.1).
	Therefore, 
	\begin{align*}
	     &\E[I(i_1,...,i_q)^2] \\
	     =& \frac{1}{{n \choose 2q}^2\|\Sigma\|_q^{2q}}\sum_{1 \leq i_1 < \cdots < i_q < j_1 < \cdots < j_q \leq n}\sum_{1 \leq i_1' < \cdots < i_q' < j_1' < \cdots < j_q' \leq n} \sum_{l_1,l_2,l_3,l_4 = 1}^q\E\left[\prod_{k = 1}^pX_{i_k,l_1}X_{i_k,l_2}X_{i_k',l_3}X_{i_k',l_4}\right]\\
	    =& \frac{1}{{n \choose 2q}^2\|\Sigma\|_q^{2q}}\sum_{1 \leq i_1 < \cdots < i_q < j_1 < \cdots < j_q \leq n}\sum_{1 \leq i_1' < \cdots < i_q' < j_1' < \cdots < j_q' \leq n}\\
	    &\qquad\sum_{l_1,l_2,l_3,l_4 = 1}^p\sum_{m = 0}^{q}C_m \E(X_{0,l_1}X_{0,l_2}X_{0,l_3}X_{0,l_4})^m (\Sigma_{l_1,l_2}\Sigma_{l_3,l_4})^{q-m}\\
	    =&\frac{1}{{n \choose 2q}^2\|\Sigma\|_q^{2q}}\sum_{1 \leq i_1 < \cdots < i_q < j_1 < \cdots < j_q \leq n}\sum_{1 \leq i_1' < \cdots < i_q' < j_1' < \cdots < j_q' \leq n}C_0(i_1,...,i_q,i_1',...,i_q')\sum_{l_1,l_2,l_3,l_4 = 1}^p\Sigma_{l_1,l_2}^q\Sigma_{l_3,l_4}^q\\
	    +&\frac{1}{{n \choose 2q}^2\|\Sigma\|_q^{2q}}o(n^{4q})\sum_{l_1,l_2,l_3,l_4 = 1}^p\sum_{m = 1}^{q} \E(X_{0,l_1}X_{0,l_2}X_{0,l_3}X_{0,l_4})^m (\Sigma_{l_1,l_2}\Sigma_{l_3,l_4})^{q-m}\\
	    =&1 + o(1) \rightarrow 1.
	\end{align*}
	This completes the proof for (1). To show (2), it suffices to show that $\E[J(t_1,s_1,...,t_q,s_q)^2] \rightarrow 0$. Specifically,   
	\begin{align*}
	    \E[J(t_1,s_1,...,t_q,s_q)^2]=& \frac{1}{{n \choose 2q}^2\|\Sigma\|_q^{2q}}\sum_{1 \leq i_1 < \cdots < i_q < j_1 < \cdots < j_q \leq n}\sum_{1 \leq i_1' < \cdots < i_q' < j_1' < \cdots < j_q' \leq n}\\
	    &\qquad\sum_{l_1,l_2,l_3,l_4 = 1}^p\E\left[\prod_{k=1}^qX_{t_k,l_1}X_{s_k,l_2}X_{t_k',l_3}X_{s_k',l_4}\right],
	\end{align*}
	for $t_1,s_1 \in \{i_1,j_1\},...,t_q,s_q \in \{i_q,j_q\},t_1',s_1' \in \{i_1',j_1'\},...,t_q',s_q' \in \{i_q',j_q'\}$, and there exists at least one $k = 1,...,q$ such that $t_k \neq s_k$ ($t_k' \neq s_k')$. Since the expectation of a product of random variables can be expressed in terms of joint cumulants, we have
	$$\E\left[\prod_{k = 1}^qX_{t_k,l_1}X_{s_k,l_2}X_{t_k',l_3}X_{s_k',l_4}\right] = \sum_{\pi}\prod_{B \in \pi}cum(X_{i,l}: (i,l) \in B),$$
	where $\pi$ runs through the list of all partitions of $\{(t_1,l_1),(s_1,l_2),...,(t_q',l_3),(s_q', l_4)\}$ and $B$ runs thorough the list of all blocks of the partition $\pi$. Due to the special structure of our statistic, there is a set of partitions $\mathcal{S}$ such that for every $\pi \in \mathcal{S}$, the product of joint cumulants over all $B \in \pi$ is zero. And for each $\pi \in \mathcal{S}^c$ there are nice properties related to the blocks $B \in \pi$. Here we shall illustrate these properties as follows. To be clear, since we are dealing with a double indexed array $X_{i,l}$, we call $``i"$ as the temporal index and $``l"$ as the spatial index. For $\forall \pi \in \mathcal{S}^c$,
	\begin{enumerate}
	    \item {\bf The size of every block $B \in \pi$ cannot exceed 4.} Since $i_1,...,i_q,j_1,...,j_q$ are all distinct, and $i_1',...,i_q',j_1',...,j_q'$ are all distinct, it is impossible to have any three indices in $\{i_1,...,i_q,j_1,...,j_q,i_1',...,i_q',j_1',...,j_q'\}$ that are equal. And any joint cumulants of order greater than or equal to 5 will include at least three indices and they cannot be all equal.
	    
	    \item {\bf There are no blocks with size 1.} This is because the cumulant of a single random variable with mean zero is also zero. 
	    
	    \item {\bf Every $B \in \pi$ must contain only one distinct temporal index.} Otherwise $\prod_{B \in \pi}cum(X_{i,l}: (i,l) \in B) = 0$.
	\end{enumerate}
	The above properties imply that for  $\forall \pi \in \mathcal{S}^c$ and $\forall B \in \pi$, ${cum(X_{i,l}: (i,l) \in B)}$ has to be one of the followings: $cum(X_{0,l_1},X_{0,l_2},X_{0,l_3},X_{0,l_4})$, $cum(X_{0,l_1},X_{0,l_2},X_{0,l_3})$, $cum(X_{0,l_1},X_{0,l_2},X_{0,l_4})$, $cum(X_{0,l_1},X_{0,l_3},X_{0,l_4})$, $cum(X_{0,l_2},X_{0,l_3},X_{0,l_4})$, $\Sigma_{l_1,l_2}$, $\Sigma_{l_1,l_3}$, $\Sigma_{l_1,l_4}$, $\Sigma_{l_2,l_3}$, $\Sigma_{l_2,l_4}$, $\Sigma_{l_3,l_4}$.
	
	If we assume $l_1 \leq l_2 \leq l_3 \leq l_4$, it can be shown that
	\begin{equation}\label{eq:bound}
    \prod_{B \in \pi}cum(X_{i,l}: (i,l) \in B) \leq C(1 \vee (l_2 - l_1))^{-r}(1 \vee (l_4 - l_3))^{-r},
	\end{equation}
	for some generic positive constant $C$ and any partition $\pi$. To see this, we notice that at least one $k = 1,...,q$, say $k_0$, such that $t_{k_0} \neq s_{k_0}$ and $t_{k_0}' \neq s_{k_0}'$. For every $\pi \in \mathcal{S}^c$ there exists $B_1,B_2 \in \pi$ such that $(t_{k_0},l_1) \in B_1$ and $(s_{k_0}',l_4) \in B_2$. Based on the third property above, all other elements in $B_1$ must have the same temporal index as $t_{k_0}$. And because of the first property above, all $i_k$, $j_k$ for $k \neq k_0$ and $s_{k_0}$ are different from $t_{k_0}$. This implies that the spatial indices for all other elements in $B_1$ have to be either $l_3$ or $l_4$, not $l_1$ and $l_2$. For the same reason, the spatial indices for all other elements in $B_2$ can only be either $l_1$ or $l_2$. Therefore,
	$$cum(X_{i,l}: (i,l) \in B_1) \in \{cum(X_{0,l_1},X_{0,l_3},X_{0,l_4}), \Sigma_{l_1,l_3},\Sigma_{l_1,l_4}\},$$ and $$cum(X_{i,l}: (i,l) \in B_2) \in \{cum(X_{0,l_1},X_{0,l_2},X_{0,l_4}), \Sigma_{l_1,l_4},\Sigma_{l_2,l_4}\}.$$ Under Assumption (5.2), $cum(X_{i,l}: (i,l) \in B_1) \leq C(1\vee (l_2-l_1))^{-r}$ and  $cum(X_{i,l}: (i,l) \in B_2) \leq C(1\vee (l_4-l_3))^{-r}$. And the joint cumulants are uniformly bounded above for those $B \in \pi \setminus \{B_1,B_2\}$. Thus Equation \ref{eq:bound} is proved.
	
	Furthermore, define $Ind(t_1,s_1,...,t_q,s_q,t_1',s_1',...,t_q',s_q')$ as the indicator function corresponding to the event that for every $k = 1,2,..,q$ that $t_k \neq s_k$, there exists $k' = 1,...,q$ such that $t_k = t_{k'}'$ or $t_k = s_{k'}'$, then $\E\left[\prod_{k = 1}^pX_{t_k,l_1}X_{s_k,l_2}X_{t_k',l_3}X_{s_k',l_4}\right] \neq 0$ only if 
	$$Ind(t_1,s_1,...,t_q,s_q,t_1',s_1',...,t_q',s_q') = 1.$$ It is also easy to see that $$\sum_{1\leq i_1 < \cdots < i_q < j_1 < \cdots < j_q \leq n}\sum_{1\leq i_1' < \cdots < i_q' < j_1' < \cdots < j_q' \leq n}Ind(t_1,s_1,...,t_q,s_q,t_1',s_1',...,t_q',s_q') = o(n^{4q}).$$
	
	Combining all the results above, we have
	\begin{align*}
	    &\E[J(t_1,s_1,...,t_q,s_q)^2]\\
	    =& \frac{1}{{n \choose 2q}^2\|\Sigma\|_q^{2q}}\sum_{1 \leq i_1 < \cdots < i_q < j_1 < \cdots < j_q \leq n}\sum_{1 \leq i_1' < \cdots < i_q' < j_1' < \cdots < j_q' \leq n}\sum_{l_1,l_2,l_3,l_4 = 1}^p\E\left[\prod_{k=1}^qX_{t_k,l_1}X_{s_k,l_2}X_{t_k',l_3}X_{s_k',l_4}\right],\\
	    \leq& \frac{C}{{n \choose 2q}^2\|\Sigma\|_q^{2q}}\sum_{1 \leq i_1 < \cdots < i_q < j_1 < \cdots < j_q \leq n}\sum_{1 \leq i_1' < \cdots < i_q' < j_1' < \cdots < j_q' \leq n}\\
	    &\qquad\sum_{1 \leq l_1 \leq l_2 \leq l_3 \leq l_4 \leq p}   Ind(t_1,s_1,...,t_q,s_q,t_1',s_1',...,t_q',s_q') (1 \vee (l_2 - l_1))^{-r}(1 \vee (l_4 - l_3))^{-r}\\
	    \leq&  \frac{o(n^{4q})}{{n \choose 2q}^2\|\Sigma\|_q^{2q}}\left(\sum_{1 \leq l_1 \leq l_2 \leq p}(1 \vee (l_2 - l_1))^{-r})\right)^2 \lesssim \frac{p^2}{\|\Sigma\|_q^{2q}}o(1) = o(1) \rightarrow 0,
	\end{align*}
	where the last equality is because $\|\Sigma\|_q^{2q} = (\sum_{i,j = 1}^p\Sigma_{i,j}^q)^2 \gtrsim p^2$.
	
	This completes the proof of (2), as well as the whole proof.

	\end{proof}
Table \ref{tab:size_200} shows additional simulation results for the size of the proposed monitoring statistics for $n=200$. The size distortion problem has improved for almost all settings.
\begin{table}[!h]
\caption{Size of different monitoring procedures}
\label{tab:size_200}
\begin{tabular}{lrrrrrrrrr}
\hline
$(n,p) = (200,200)$            & \multicolumn{3}{c}{T1}                                                     & \multicolumn{3}{c}{T2}                                                     & \multicolumn{3}{c}{T3}                                                     \\ \cline{2-10} 
size $\alpha = 0.1$ & \multicolumn{1}{c}{L2} & \multicolumn{1}{c}{L6} & \multicolumn{1}{c}{Comb} & \multicolumn{1}{c}{L2} & \multicolumn{1}{c}{L6} & \multicolumn{1}{c}{Comb} & \multicolumn{1}{c}{L2} & \multicolumn{1}{c}{L6} & \multicolumn{1}{c}{Comb} \\ \hline
$\rho = 0.2$         & 0.104                  & 0.072                  & 0.074                    & 0.097                  & 0.072                  & 0.073                    & 0.102                  & 0.071                  & 0.073                    \\
$\rho = 0.5$         & 0.105                  & 0.064                  & 0.091                    & 0.107                  & 0.064                  & 0.085                    & 0.104                  & 0.065                  & 0.087                    \\
$\rho = 0.8$         & 0.127                  & 0.037                  & 0.089                    & 0.133                  & 0.038                  & 0.099                    & 0.131                  & 0.039                  & 0.099                    \\ \hline
\end{tabular}
\end{table}
  \bibliographystyle{chicago}      
  \bibliography{bibfile}